\def\m{\mathcal}
\def\mb{\mathbb}
\def\wt{\widetilde}
\renewcommand{\algocf@captiontext}[2]{#1\algocf@typo. \AlCapFnt{}#2} 
\def\@algocf@capt@plain{top}
\renewcommand{\algocf@makecaption}[2]{%
  \addtolength{\hsize}{\algomargin}%
  \sbox\@tempboxa{\algocf@captiontext{#1}{#2}}%
  \ifdim\wd\@tempboxa >\hsize
    \hskip .5\algomargin%
    \parbox[t]{\hsize}{\algocf@captiontext{#1}{#2}}
  \else%
    \global\@minipagefalse%
    \hbox to\hsize{\box\@tempboxa}
  \fi%
  \addtolength{\hsize}{-\algomargin}%
}
\def\T{{ \mathrm{\scriptscriptstyle T} }}
\DeclareMathOperator*{\argmin}{arg\,min}
\DeclareMathOperator*{\argmax}{arg\,max}
\def\m{\mathcal}
\def\mb{\mathbb}
\def\wt{\widetilde}
\def\T{{\mathrm{\scriptscriptstyle T} }}
\def\ELPD{{\mathrm{ELPD} }}
\def\EMM{{\mathrm{EMM} }}
\def\SE{{\mathrm{SE} }}
\def\Var{{\mathrm{Var} }}
\newcommand{\be}{\begin{equs}}
\newcommand{\ee}{\end{equs}}
\newcommand{\beginsupplement}{%
        \setcounter{lemma}{0}
        \renewcommand{\thelemma}{S\arabic{lemma}}%
        \setcounter{table}{0}
        \renewcommand{\thetable}{S\arabic{table}}%
        \setcounter{figure}{0}
        \renewcommand{\thefigure}{S\arabic{figure}}%
        \setcounter{section}{0}
        \renewcommand{\thesection}{S\arabic{section}}%
        \setcounter{equation}{0}
        \renewcommand{\theequation}{S.\arabic{section}.\arabic{equation}}%
        \setcounter{page}{1}
}
\begin{document}


\numberwithin{equation}{section}
\newtheorem{theorem}{Theorem}
\newtheorem{assumption}{Assumption}
\newtheorem{remark}{Remark}
\newtheorem{corollary}{Corollary}
\newtheorem{lemma}{Lemma}
\newtheorem{proposition}{Proposition}

\newtheorem{definition}{Definition}

\providecommand{\keywords}[1]
{
  \small	
  \textbf{\textit{Keywords---}} #1
}


\title{Robust probabilistic inference via a constrained transport metric}

\author[1]{Abhisek Chakraborty}
\author[1]{Anirban Bhattacharya}
\author[1]{Debdeep Pati}
\affil[1]{Department of Statistics, Texas A\&M University, 
College Station, TX, USA
}

\maketitle

\begin{abstract}
Flexible Bayesian models are typically constructed using limits of large parametric models with a multitude of parameters that are often uninterpretable.  In this article, we offer a novel alternative by constructing an exponentially tilted empirical likelihood carefully designed to concentrate near a parametric family of distributions of choice with respect to a novel variant of the Wasserstein metric, which is then combined with a prior distribution on model parameters to obtain a robustified posterior. The proposed approach finds applications in a wide variety of robust inference problems, where we intend to perform inference on the parameters associated with the centering distribution in presence of outliers.  Our proposed transport metric enjoys great computational simplicity, exploiting the Sinkhorn regularization for discrete optimal transport problems, and being inherently parallelizable. We demonstrate superior performance of our methodology when compared against state-of-the-art robust Bayesian inference methods. We also demonstrate equivalence of our approach with a nonparametric Bayesian formulation under a suitable asymptotic framework, testifying to its flexibility. The constrained entropy maximization that sits at the heart of our likelihood formulation finds its utility beyond robust Bayesian inference; an illustration is provided in a trustworthy machine learning application. 
\end{abstract}

\begin{keywords}
 \ Algorithmic fairness; Empirical likelihood; Entropy; Non-parametric Bayes;  Robust inference; Wasserstein metric.  
\end{keywords}

\section{Introduction}
In most modeling exercises, our objective is limited to approximating a few key features of the true data-generating mechanism to ensure interpretable inference. It is often futile, if not misleading, to try to model small-scale and complicated underlying contaminating effects. Thus, the interplay between model adequacy and robustness  are fundamental areas of interest in model-based inference. Robust inferential methods \citep{huber2011robust}
possess an established and influential literature 
in statistics that has permeated many modern areas of research including differential privacy \citep{Dwork09differentialprivacy, doi:10.1080/01621459.2019.1700130, https://doi.org/10.48550/arxiv.2111.06578}, algorithmic fairness \citep{NEURIPS2020_37d097ca, https://doi.org/10.48550/arxiv.2105.11570}, noise-robust training of deep neural nets \citep{han2018coteaching, 9156647}, sequential decision making \citep{NIPS2010_19f3cd30, doi:10.1080/02331934.2019.1655738}, transfer learning \citep{Shafahi2020Adversarially}, quantification learning \citep{doi:10.1080/01621459.2021.1909599}, to name a few.   Bayesian procedures, however, being almost exclusively model-based,  inevitably fall prey to model mis-specification and/or perturbation of the data-generating mechanism -- an issue that exacerbates as sample size increases \citep{doi:10.1080/01621459.2018.1469995}. Credible intervals obtained from such parametric Bayesian models under model mis-specification may not have the desired asymptotic coverage \citep{kleijn2012bernstein}.
Non-parametric Bayes methods are routinely used to guard against such mis-specification, either by enlarging the parameter space to impart flexibility or by taking their limit to construct infinite-dimensional prior distributions \citep{muller2004nonparametric, 10.1214/009053606000000029, 10.2307/24310519}.  



Despite the success of non-parametric Bayes methods over the last few  decades; see \citet{muller2015bayesian} for a comprehensive review; the presence of a large number of non-identifiable parameters can be contentious, 
particularly when the interest is solely on simpler population features. For example, in many scientific applications,  standard parametric models are often preferred for convenience and ease of interpretation. This long-standing issue regarding the presence of a large number of uninterpretable parameters in non-parametric Bayes procedures has led to a proliferation of pseudo-likelihood-based approaches \citep{RePEc:eee:econom:v:115:y:2003:i:2:p:293-346, 2008,hooker2012bayesian, sandwitch,JMLR:v18:16-655, safebayes, Bernton_2019, doi:10.1080/01621459.2018.1469995} targeted towards specific parameters of interest. 
However, these approaches typically lack generative model interpretations making the calibration of the associated dispersion or temperature parameters  challenging \citep{10.1093/biomet/asx010, safebayes}. 

 An empirical likelihood (EL;  \citet{owen2001empirical}) offers an attractive solution which approximates the underlying distribution with a discrete distribution supported at the observed data points and obtains the induced maximum likelihood of the parameter of interest defined through estimating equations, by effectively profiling out the nuisance parameters. Exponentially titled Empirical Likelihood (ETEL) is a variant of this idea that minimizes the Kullback--Leibler  divergence of this discrete distribution with the empirical distribution of the observed data subject to satisfying the estimating equation.  One can import such a likelihood in a Bayesian framework to infer on parameters   \citep{10.2307/30042042, 10.1093/biomet/92.1.31,Chib2018,chib2021bayesian}. Interestingly, posterior credible intervals obtained from such  Bayesian procedure do have the correct frequentest coverage \citep{chib2021bayesian}, thereby effectively eliminating the longstanding criticism associated with parametric Bayesian inference under model mis-specification. 

Our goal here is to develop a flexible Bayesian  semi-parametric procedure that centers around a guessed parametric family $F_\theta$, without having to explicitly model aspects of the underlying data-generating mechanism we are not interested in. The task is similar to developing a robust Bayesian procedure \citep{RePEc:eee:econom:v:115:y:2003:i:2:p:293-346, 2008,hooker2012bayesian, doi:10.1080/01621459.2018.1469995, JMLR:v18:16-655} that allows departures from a parametric model to accommodate outlying observations. One may, alternatively, consider a non-parametric Bayes procedure \citep{Ferguson, teh2010dirichlet, 10.1214/aos/1176342871, 10.1214/aos/1176325623, 10.1214/aos/1024691240} where the parametric guess $F_\theta$ (with density $f_\theta$) assumes the role of the base measure, with the precision parameter controlling the extent of concentration around $F_\theta$. 
However, unlike these approaches, we desire our approach to be devoid of nuisance parameters, and that the inference is solely targeted to the parameter of interest while retaining the interpretation of a generative probability model. In a sense these are similar to the goals of EL (or ETEL) where one can simply consider the estimating equation $\mbox{E}[\partial \log f_\theta(X)/ \partial \theta]=0$ to infer about the parameter $\theta$. Observe that such a restriction enforces specific constraints on the moments of the distribution (e.g first moment if $F_\theta$ is Gaussian). Such a moment based constraint is agnostic to the tails of $F_\theta$ which could be significantly affected by outlying observations unless one allows the entire distribution to be constrained to lie in a neighborhood of $F_\theta$.  To that end, we propose a novel adaptation of ETEL by centering the discrete distribution $P$ of the data around $F_\theta$ using a suitable distance metric $\mbox{D}$ that encapsulates a more holistic discrepancy between the two distributions. More specifically,  we restrict $P$ within the neighborhood  $\mbox{D}[P , F_{\theta}]  < \varepsilon$, for some radius  $\varepsilon> 0$. In an inferential task, this framework provides a good balance between modeling flexibility by adaptively tuning $\varepsilon$ and interpretability, since we have the provision to invoke a non-parametric likelihood that concentrates around an  interpretable parametric guess, where the nuisance parameters are profiled out within the ETEL framework.

Naturally, a key ingredient in our proposal is the choice of the  metric $\mbox{D}$ that yields a non-trivial distance between $F_\theta$ and the empirical distribution on the observed data, and at the same time enjoys computational simplicity and straightforward multivariate extension. An equally important task is to have the provision to allow the user to select from relatively wider class of distributions $F_\theta$. Although having a fully flexible $F_\theta$ defeats the purpose of constructing a procedure devoid of nuisance parameters, we choose to work with elliptical mixture model (EMM)s which offers the user a sufficiently large class to choose from.  Because  $F_\theta$ is potentially absolutely continuous with respect to the Lebesgue measure and the empirical distribution on the observed data is discrete, it rules out many standard distances e.g Kullback--Leibler, Hellinger, total variation, $\chi^2$ etc. The $p$-Wasserstein metric \citep{Villani2003TopicsIO, 2019} provides a feasible choice, although it does not allow for a computationally efficient multivariate extension and the search for the optimal coupling becomes challenging.

To this end, we propose a novel adaptation of the $2$-Wasserstein metric by a {\em restriction} and an 
\emph{augmentation} scheme. In the {\em restriction} scheme, motivated by \citet{delon:hal-02178204}, we assume $F_\theta$ to be an EMM and adapt $\mbox{D}$ by further restricting the coupling measures to  the class of EMMs, which considerably reduce the computational cost and yet encompasses a rich class of
coupling measures. However, this renders the metric to depend only on the variance-covariance matrix of $F_\theta$ which ignores finer comparison in the tails. We address this in the {\em augmentation} scheme, where we augment the coupling measure with a product of univariate couplings. This tantamounts to adding a sum of univariate Wasserstein metrics to our adaptation, which effectively captures tail features. 
Further, the restriction scheme can   exploit a entropic regularization of discrete optimal transport \citep{le2019treesliced, cuturi2013sinkhorn}  that remains expressive, and computationally tenable even in multivariate cases.  Finally, we also developed a data-driven framework for tuning $\varepsilon$
exploiting an interplay of estimated expected log point-wise predictive density (ELPD) and its standard error \citep{JMLR:v17:14-540}. The resulting metric is termed ANDREW with the complete procedure named as $\mbox{D}$-BETEL. 

Having proposed a seemingly flexible method centering around a parametric family, it is of crucial importance to study to what degree the resulting posterior distribution of $\theta$ obtained from $\mbox{D}$-BETEL deviates from the same obtained from a typical non-parametric Bayes procedure centered around $F_\theta$.  In Section \ref{ssec:npBayes}, we address this by showing that the posterior of $\theta$ obtained from $\mbox{D}$-BETEL is similar to the posterior of $\theta$ obtained from a mixture-of-finite mixtures model \citep{doi:10.1080/01621459.2016.1255636} centered around $F_\theta$ under a suitable asymptotic framework. This is quite satisfying as it indicates that in performing inference on $\theta$, we are able to bypass the caveats associated with a fully non-parametric Bayes model, while still retaining its advantages in offering 
flexibility regarding the data generation process.  

The distributionally constrained entropic optimization, which forms the backbone of our likelihood formulation, is likely to be useful more broadly. To demonstrate its applicability beyond robust inference, we exploit the idea of re-weighting a distribution to increase proximity to another distribution in the context of demographic parity in algorithmic fairness problems in Section \ref{ssec:algorithmic_fairness}. We envisage more applications of this nature in the trustworthy AI paradigm, and wish to explore them in more details elsewhere.  

\section{The D-BETEL}\label{D_BETEL}
Let $\{F_\theta: \theta \in \Theta \subseteq \mb R^d\}$ be a parametric family of distributions. In the sequel, we develop a flexible Bayesian semi-parametric procedure that centers around this parametric family while allowing for flexible departures from it. 
Our approach draws inspiration from the Bayesian exponentially tilted empirical likelihood (Bayesian ETEL or BETEL; \citet{10.1093/biomet/92.1.31}) for moment conditional models that are specified by a collection of moment conditions $\mb{E}_{P}\big[g(X,\theta)\big] = 0$, where the expectation  $\mb{E}_{P}$ is taken with respect to the unknown generating distribution $P$,\  $g: \mb{R}^d\times\Theta \rightarrow \mb{R}^r$ is a vector of known functions, $\theta \in \Theta$ is the parameter of interest, and $0$ refers to a vector of $r$ zeros.  
Operating under a non-parametric Bayesian framework, \citet{10.1093/biomet/92.1.31} proposed a flexible prior on $P$ with an entropy-maximizing flavor.  
Under a specific asymptotic regime that allowed analytic marginalization of nuisance parameters describing the generative model, the corresponding {\em marginal} posterior distribution of $\theta$ given a random sample $x = (x_1,\ldots, x_n)^\T$ from $P$ was shown to approach a limiting distribution, called the BETEL posterior, given by
\begin{equation}\label{eqn:betelpost}
\pi_{\rm MCM}(\theta\mid x_1,\ldots,x_n) \ \propto\ \pi(\theta)\ L_{\rm MCM}(\theta)    
\end{equation} 
where the `likelihood' $L_{\rm MCM}(\theta)$ is called the exponentially tilted empirical likelihood,
\begin{equation}\label{eqn:betel}
  L_{\rm MCM}(\theta) = \bigg\{\prod_{i=1}^n w_{i} :  \argmax_{w}\prod_{i=1}^n w_{i}^{-w_{i}},\ w_i>0,\ \sum_{i=1}^n w_i = 1, \ \sum_{i=1}^n w_i g(x_i, \theta) = 0  \bigg\},
\end{equation} 
and $\pi(\cdot)$ denotes a prior distribution on $\theta$. Here and elsewhere, we use MCM as an acronym for {\em moment condition model}. 

The maximization problem in \eqref{eqn:betel} admits a non-trivial closed-form solution 
when the convex hull of $\cup_{i=1}^n g(x_i, \theta)$ contains the origin, leading to $L_{\rm MCM}(\theta) = \prod_{i=1}^n w_i^\star(\theta)$, with  
\begin{align*}
    w_{i}^{\star}(\theta) =  \frac{\exp[\lambda(\theta)^{\T}g(x_i, \theta)]}{\sum_{j=1}^n \exp[\lambda(\theta)^{\T}g(x_j, \theta)]}, \quad \lambda(\theta) = \argmin_{\eta} n^{-1} \sum_{i=1}^n \exp[\eta^{\T}g(x_i, \theta)],
\end{align*}
When the convex hull condition is not satisfied, $\pi_{\rm MCM}(\theta\mid x_1,\ldots,x_n)$ is set to zero. 
The maximization problem defining $\lambda(\theta)$ is convex, which leads to efficient computation of the ETEL likelihood $L_{\rm MCM}$, and the corresponding BETEL posterior $\pi_{\rm MCM}$ can be sampled using standard MCMC procedures. 
\citet{Chib2018} significantly contributed towards the theoretical underpinning of BETEL for moment conditional models; proving Bernstein–von Mises (BvM) theorem \& model selection consistency results under model mis-specification; and also numerically displayed its utility in wide-ranging econometric and statistical applications. 




The feature of BETEL most relevant to our purpose is that while the BETEL is motivated from a non-parametric Bayesian angle, it {\em operationally} avoids a complete probabilistic specification of the data-generating mechanism; the user only needs to specify a prior distribution on the parameter of interest $\theta$. In a similar spirit, our goal is to avoid a full non-parametric modeling of the data-generating distribution and only place a prior distribution on the (typically low-dimensional) parameter $\theta$ describing the centering model. A direct application of the ETEL to our setup is challenging as moment conditions describing parameters of general parametric models; especially those beyond exponential families; can be quite cumbersome or even unavailable in an analytically tractable form. Instead, our approach is to design a modified likelihood by constraining a weighted empirical distribution of the observed data $\nu_{w,x} :\,= \sum_{i=1}^n w_i\delta_{x_i}$ to be close to the parametric model $F_\theta$ with respect to a statistical metric. Specifically, we propose a likelihood function
\begin{equation}\label{eqn:wbetel}
  L_{\rm DCM}(\theta) :\,= \bigg\{\prod_{i=1}^n w_{i} :  \argmax_{w}\prod_{i=1}^n w_{i}^{-w_{i}},\ w_i>0,\ \sum_{i=1}^n w_i = 1, \  \mbox{D}[F_{\theta}, \nu_{w, x} ]\ \leq  \varepsilon \bigg\}  
\end{equation} 
where $\mbox{D}[\cdot,\cdot ]$ is an appropriate statistical distance, $\varepsilon > 0$ is a concentration parameter which controls fidelity to the centering model, and DCM is an acronym for {\em distributionally constrained model}. With this DCM likelihood, and a prior distribution on the parameter $\theta$, the corresponding posterior distribution is \begin{equation}\label{eqn:wbetelpost}
\pi(\theta\mid x_1,\ldots,x_n) \ \propto\ \pi(\theta)\ L_{\rm DCM}(\theta)    
\end{equation}
We refer to our formulation in  \eqref{eqn:wbetel} -- \eqref{eqn:wbetelpost} as the Bayesian ETEL subject to distributional constraint ($\mbox{D}$-BETEL). We show in Section \ref{ssec:npBayes} that the $\mbox{D}$-BETEL posterior arises organically from a non-parametric Bayes model by marginalization of the nuisance parameters specifying a mixing measure which has a mixture of finite mixtures (MFM; \citet{doi:10.1080/01621459.2016.1255636}) interpretation.   Further, in sub-section \ref{ssec:gen_reg}, the proposed methodology is extended to the regression setup, and detailed empirical study is performed to showcase its efficacy over standard Bayesian methodology and moment conditional models based on maximum likelihood equations.

The idea of centering the distribution of the observed data around a pre-specified parametric model is not new.  In fact, the Dirichlet process prior \citep{Ferguson, teh2010dirichlet} in Bayesian non-parametric is exactly designed to achieve this; other related approaches include \citet{10.1214/aos/1176342871, 10.1214/aos/1176325623, 10.1214/aos/1024691240}. Notably, the  traditional non-parametric priors are often accompanied by a large number of un-interpretable nuisance parameters that result in a computational overhead. On the contrary, $\mbox{D}$-BETEL directly obtain the marginal posterior parameter of interest, and  since the remaining nuisance parameters in the model are marginalised out, it enjoys improved interpretability. Moreover, while there is substantial literature on tuning the concentration parameter of the Dirichlet process mixture model \citep{EscobarWest, IZ, McAuliffe}, it still remains to be a  notoriously difficult task. $\mbox{D}$-BETEL involves a hyper-parameter $\varepsilon$, that too controls concentration around a parametric distribution. Since the nuisance parameters are effectively marginalized out in $\mbox{D}$-BETEL, we simply adopt a predictive approach  to devise a data-driven and principled tuning scheme. Moreover, we demonstrate in  Section \ref{ssec:npBayes} that a constrained empirical likelihood is asymptotically equivalent to a mixture model, centered at the pre-specified parametric density. Thus we are able to retain the advantages of non-parametric Bayes models while being devoid of nuisance parameters.

A key ingredient in our proposal is a  metric $\mbox{D}$ that yields a non-trivial distance between $F_\theta$ and weighted empirical distributions, is computationally convenient, and admits a seamless multivariate extension. Since $F_\theta$ is potentially absolutely continuous with respect to the Lebesgue measure, many standard distances e.g Hellinger, total variation, $\chi^2$ etc. are ruled out. The $p$-Wasserstein metric \citep{Villani2003TopicsIO} provides a feasible choice, however, it too is intractable in many multivariate cases we care about. To that end, assuming $F_\theta$ to be a elliptical mixture model (EMM), we further restrict the coupling measures to a carefully chosen sub-family, which considerably reduce the computational cost and yet encompasses a rich class of
coupling measures. Further, exploiting a entropic regularization of discrete optimal transport \citep{le2019treesliced, cuturi2013sinkhorn},
we propose a carefully tailored Wasserstein metric in Section \ref{ssec:andrew}  that remains expressive, and computationally tenable even in multivariate cases.

Before we move on, we undertake a closer look at the constrained entropy maximization problem at the core of our likelihood formulation in \eqref{eqn:wbetel}. Since
$$
\log \prod_{i=1}^n w_i^{-w_i}= \log n - \sum_{i=1}^n w_i \log (w_i/(1/n))
$$, it is apparent that solving the above  maximization problem is equivalent to finding the probability vector $(w_1,\ldots,w_n)$ that minimizes the Kullback--Leibler  divergence between the probabilities $w_1,\ldots,w_n$ assigned to each sample and the empirical probabilities $1/n,\ldots,1/n$, subject to the distance constraint $\mbox{D}[F_{\theta}, \nu_{w, x} ] <  \varepsilon$. Unlike the case for $L_{\rm MCM}$ \eqref{eqn:betel}, the optimization problem 
for $L_{\rm DCM}$ \eqref{eqn:wbetel} does not allow a closed form solution. Fortunately, we can access augmented Lagrangian methods \citep{auglag1, auglag2} and conic solvers \citep{TCFOCS} via the R interface \citep{RCore}  of  constrained non-linear optimization solvers (e.g. $\mbox{NLopt}$; \citet{nlopt} and $\mbox{CVX}$; 
\citet{gb08}). In particular, for fixed $\theta\in\Theta$ and $\varepsilon>0$, we can express the non-linear programming problem in \eqref{eqn:wbetel} in standard form as:
\begin{align*}
&\min_{w \in \m S^{n-1}}\ \sum_{i=1}^n w_i\log w_i, \quad
\text{subject to}\quad \mbox{D}[F_{\theta}, \nu_{w, x} ] \leq \varepsilon ,
\end{align*}
with domain $\mathcal{S}^{n-1} :\,= \{v \in \mb R^n: \ v_i>0,\ i =1,\ldots,n;\ \sum_{i=1}^n v_i = 1\}$ with a non-empty interior.  
The associated Lagrangian function $\mathcal{L}: \mb{R}^n\times\mb{R}\to\mb{R}$ is defined as
\begin{align}\label{eqn:dbetel:alt}
  \mathcal{L}(w, \lambda^*) =    \sum_{i=1}^n w_i\log w_i +  \lambda^{\star}\ \mbox{D}^2[F_{\theta} ,\nu_{w, x}], 
\end{align}
where $\lambda^\star$ is the Lagrange multiplier, and the Lagrange dual function $v:\mb{R}\to\mb{R}$ takes the form 
\begin{align}\label{wbetel_dual}
v(\lambda^\star) = \inf_{w\in\mathcal{S}} \mathcal{L}(w, \lambda^\star). 
\end{align}
This dual formulation enables us to access off-the-self augmented Lagrangian based optimization algorithms \citep{auglag1, auglag2} to compute $L_{\rm DCM}$.
Moreover, from an application stand-point, the formulation in \eqref{wbetel_dual} may render itself useful in a wide range of problems, where instead of 
the weighted empirical distribution $\nu_{w,x}$ and a parametric guess $F_\theta$, we consider a pair of distributions with at least one of them discrete, and the goal is to ensure that a re-weighted version of the discrete one is close to the other distribution, subject to certain conditions on the weights.
We present one such application in Section \ref{ssec:algorithmic_fairness} 
in the context of ensuring demographic parity in machine learning algorithms.

Before we put a more specific structure to $\mbox{D}$-BETEL, we shall discuss a key feature of our proposal, that we briefly alluded to earlier, in concrete terms. In particular, we demonstrate that one may view our proposed methodology as a non-parametric Bayes approach based on centering mixture models around a specific parametric family by establishing an intriguing asymptotic equivalence relationship between our framework and a hierarchical setup similar to the mixture of finite mixture (MFM) models \citep{doi:10.1080/01621459.2016.1255636}. Moreover, this enables us to formally identify $\mbox{D}$-BETEL as a generative model - a feature illusive to many existing pseudo-likelihood-based robust Bayesian methods.


\subsection{Non-parametric Bayes interpretation of D-BETEL}\label{ssec:npBayes}
In the following, we offer a concrete probabilistic justification to $\mbox{D}$-BETEL by building a Bayesian hierarchical generative model centered around $F_\theta$ so that the marginal posterior of $\theta$ converges in distribution to the $\mbox{D}$-BETEL posterior under a limiting environment motivated by \citet{10.1093/biomet/92.1.31}. This result is established in Theorem \ref{th:npbayes_equivalence}. 

In the following, we first describe a generative model for the data points $x_1, \ldots, x_n$ which closely mimics commonly used Bayesian nonparametric methods such as the mixture of finite mixture of Gaussians. The description  proceeds via a probability model for the independent $d$-variate observations $x_i$ conditional on its own set of parameters $\eta_i \in \mb R^d$, i.e., $x_i \mid \eta_i \overset{ind.} \sim f(\cdot \mid \eta_i)$ for $i = 1, \ldots, n$. To impart flexibility, the random effects $\eta_i$ are independently drawn from a common {\em mixing measure}
$P^{(N)}$ defined on $(\mathbb{R}^d, \mathcal{B}(\mathbb{R}^d))$, where $N$ is a positive integer involved in the description of $P^{(N)}$. This renders the marginal density of $x_i \mid P^{(N)}$ to be $\int f(x_i \mid \eta_i) P^{(N)}(d \eta_i)$, independently for $i = 1, \ldots, n$. The mixing distribution $P^{(N)}$ is parameterized 
through its associated nuisance parameters 
$\xi^\star = \big(k, b, \{\mu_h\}_{h=1}^k\big)$, where $k \in \mb N$, $b = (b_1, \ldots, b_k)$ where each $b_h$ is a positive integer subject to the constraint $\sum_{h=1}^k b_h = N$, and $\mu_h = (\mu_{h,1}, \ldots, \mu_{h, d})^{\T}\in \mb R^d$ for $h = 1, \ldots, k$. We induce a prior distribution on $P^{(N)}$ through a prior distribution on $\xi^\star$. To do so, 
we construct  a joint prior on $(\xi^{\star}, \theta)$ hierarchically by first specifying the marginal prior on the parameter of interest $\theta$, and then the conditional prior of $\xi^{\star} \mid \theta$ in terms of a $\theta$ dependent slice on the support of an unconditional distribution $\pi_{\infty, N}(\cdot)$ for $\xi^\star$. In essence,  $\xi^*$ act as a {\em bridge} between the data and the parameter of interest $\theta$ in the hierarchical formulation. 
This is where our modeling departs from a typical non-parametric Bayes model where $P^{(N)}$ is the object of inference and $\theta$ is viewed as a derived quantity from $P^{(N)}$. Instead, in our framework, $\theta$ retains its own identity and $P^{(N)}$ is viewed as an infinite-dimensional nuisance parameter. In other words, $(P^{(N)}, \theta)$ describes a semi-parametric object for inference, where $P^{(N)}$ is a flexible  probability measure, and $\theta$ is the parameter of interest.

We specify the details for each of these pieces from top down in the sequel. 
First, the distribution $f$ of the data given random effects is chosen to be an appropriate uniform distribution. Specifically, given $\tau > 0$, let
\begin{equation}\label{eqn:npbayes3}
\begin{aligned}
 x_i &\mid \eta_i, P^{(N)}\ \stackrel{\text{ind.}}\sim\ \prod_{l=1}^d \mbox{Uniform}(\eta_{i, l} -  \tau^{-1},\ \eta_{i, l} +  \tau^{-1}), \ i=1,\ldots, n,  \\
\eta_i &\mid P^{(N)} \sim  P^{(N)}.
\end{aligned}
\end{equation}
The uniform kernel is chosen for analytic tractability in ensuing calculations. We expect the main results to hold for more general kernels, albeit with additional technical challenges.
Next, for any set $A \in \mathcal{B}(\mathbb{R}^d)$, define $P^{(N)}(A)$ as  $P^{(N)}(A) = \sum_{h=1}^k \pi_h \delta_{\mu_h}(A)$, where conditional on $k$, the mixture weights $(\pi_1, \ldots, \pi_k)$ are constructed via normalised  counts $(b_1/N,\ldots, b_k/N)$. 
We specify distributions on the pieces to define an {\em unconditional distribution} $\pi_{\infty, N}(\cdot)$ for $\xi^*$, 
\begin{equation}\label{eqn:npbayes1}
\begin{aligned}
&(b_1,\ldots, b_k) \mid k \ \sim \ \mbox{Multinomial}(N; 1/k,\ldots, 1/k)\\
&\mu_h \mid k\ \stackrel{\text{i.i.d.}} \sim \ \mbox{H}^{(N)} , \ h=1,\ldots, k;\quad  k \sim p(k)\equiv \mbox{Geometric}(p);
\end{aligned}
\end{equation}
where  
$\mbox{H}^{(N)}$ is a suitably chosen $d$-dimensional ``base'' distribution; refer to \eqref{eqn:assumptions} for details. Given a draw of $k$, the $k$ atoms $\{\mu_h\}_{h=1}^k$ are drawn 
independently  from $\mbox{H}^{(N)}$, and the count vector $(b_1,\ldots, b_k)$ that yields the mixture weights is drawn from $\mbox{Multinomial}(N; 1/k,\ldots, 1/k)$; instead of  direct draws of the mixture weights from a Dirichlet distribution, commonly used in the finite dimensional version of the Dirichlet process \citep{ishwaran2002dirichlet,ishwaran2002exact}, or in the mixture of finite mixtures setup \citep{doi:10.1080/01621459.2016.1255636}.
The distributional specification $\pi_{\infty, N}$ for $\xi^\star$ 
in \eqref{eqn:npbayes1} induces a mixture of finite mixtures (MFM; \citet{doi:10.1080/01621459.2016.1255636}) for $P^{(N)}$ given by $P^{(N)} = \sum_{k=1}^\infty p(k) \big[\sum_{h=1}^k (b_h/N) \delta_{\mu_h}\big]$.  

Finally,  we construct a joint prior on $(\xi^{\star}, \theta)$ by first specifying a prior distribution $\pi(\cdot)$ on $\theta$, and then the conditional distribution of  $\xi^* \mid \theta$ by restricting the distribution $\pi_{\infty, N}(\cdot)$ to the slice 
$$
A_{\varepsilon,N}(\theta) :\, = \{\xi^{\star} \,:\, \mbox{D}(P^{(N)}, F_{\theta}) < \varepsilon \}
$$
defined on the support of $\xi^{\star}$, where the metric $\mbox{D}$ and the scalar $\varepsilon > 0$ are as in \eqref{eqn:betel}. Thus, 
\begin{align}\label{eqn:npbayes2}
 \pi_{\varepsilon, N}(\xi^{\star} \mid \theta) \ \propto \ \pi_{\infty, N}(\xi^{\star}) \ 1_{A_{\varepsilon, N}(\theta)}(\xi^{\star}),
\end{align}
that is, given a specific value of $\theta$, only draws from the unconditional prior $\pi_{\infty, N}$ are retained for which $P^{(N)}$  and $F_\theta$ are $\varepsilon$-close under the metric $\mbox{D}$. Figure \ref{fig:cartoon_npbayes} presents a schematic of the hierarchical model in equations \eqref{eqn:npbayes3}--\eqref{eqn:npbayes2}.

Combining the joint prior  $\pi_{\varepsilon,N}(\xi^\star, \theta)$ 
with the 
generative model in \eqref{eqn:npbayes3}, one obtains the joint posterior distribution $\pi_{\varepsilon, N}(\theta, \xi^\star\mid x_{1:n})$ of $(\theta, \xi^\star)$. A fully Bayesian analysis of the posterior of $\pi_{\varepsilon, N}(\theta, \xi^\star\mid x_{1:n})$ entails traversing the gigantic parameter space of $(\xi^\star, \theta)$ to simultaneously learn  $(P^{(N)}, \theta)$. Instead, motivated by \citet{10.1093/biomet/92.1.31},
we marginalise $\pi_{\varepsilon, N}(\theta, \xi^\star\mid x_{1:n})$ with respect to nuisance parameters $\xi^\star$ to obtain  the marginal posterior $\pi_{\varepsilon, N}(\theta\mid x_{1:n})$, that enables us to access targeted inference on the parameter of interest $\theta$.  In the remainder of this section, we shall operate in an asymptotic regime motivated by \citet{10.1093/biomet/92.1.31}, where we let the hyperparameters $\tau \equiv \tau(N), p \equiv p(N)$ and the base-measure $H^{(N)}$ to evolve with $N$. Under this environment, we show below that the marginal posterior $\pi_{\varepsilon, N}(\theta\mid x_{1:n})$ converges to \eqref{eqn:wbetelpost} as $N \to \infty$. \\

\begin{figure}
\centering
    \includegraphics[width=17cm, height = 4cm]{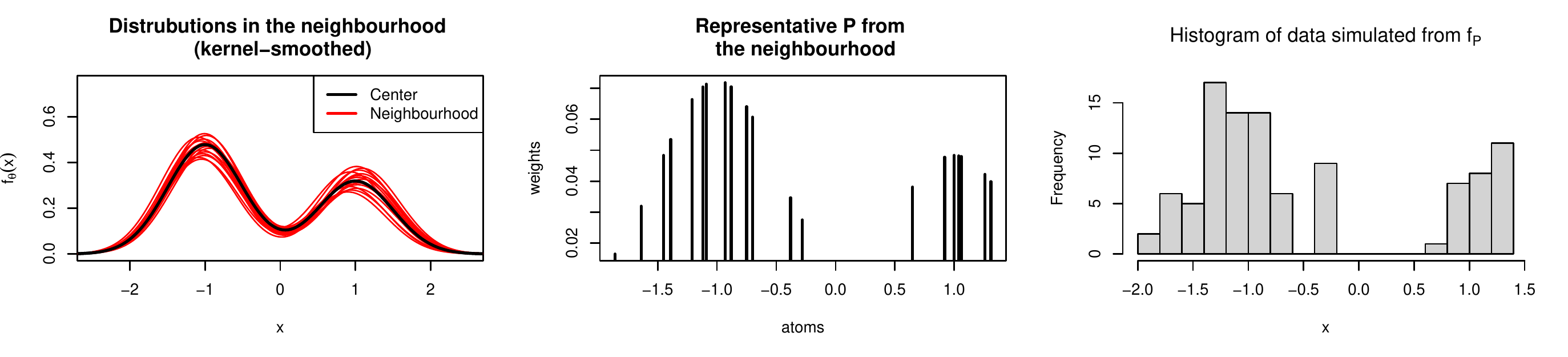} 
    \caption{\emph{\textbf{Left panel.} the pdf $f_{\theta}(x)$ corresponding to the centering distribution $F_{\theta}\equiv 0.6\times\mbox{N}(-1, 0.5^2) + 0.4\times\mbox{N}(1, 0.5^2)$ in black and representative discrete distributions in a $\mbox{D}$-neighborhood (with D chosen as  $\mbox{W}_{\rm AR}$ introduced in Section \ref{ssec:andrew}) of $F_{\theta}$ in red after kernel smoothing;  \textbf{Middle panel.} one particular  $P= \sum_{h=1}^k \pi_h \delta_{\mu_h}$ with $k=20$ in the $\mbox{W}_{\rm AR}$-neighbourhood of $F_{\theta}$ with $W_{\rm AR}^2(P, F_\theta) = 2.5$; \textbf{Right panel.} histogram of a random sample of size $100$ drawn from $f_P(x) = \int f(x \mid \eta) P(d \eta)$ with $\tau = 10^{2}$ in equation \eqref{eqn:npbayes3}}.\label{fig:cartoon_npbayes}}%
\end{figure}

\noindent \textbf{Assumptions:}
We consider the following constructions on $p=p(N), \tau = \tau(N)$, and the base measure $\mbox{H}^{(N)}$.
First, we introduce sequences $(K_N, L_N) = (N^{\alpha}, N^{\beta}) $ with $\alpha\in \mb{R}^{+}, \beta\in \mb{Z}^{+}, \alpha > \beta$ 
. Now, let $\mathcal{X}^N$ be the mid-points of the cells of an uniform $(K_N + 1)(K_N + 1)\ldots(K_N + 1)$ grid on the hyper-cube $\big[-L_N/2, L_N/2 \big]^{d}$, with the total number of grid-points $M_N = (K_N+1)^{d}$ and the grid spacing $2\rho_N = L_N/K_N $. We then set 
\begin{align}\label{eqn:assumptions}
&\mbox{H}^{(N)}\equiv\ \mbox{Uniform}(\mathcal{X}^N), \quad
\tau(N) = \frac{1}{\rho_N}, \quad  p(N)=1- \frac{1}{AM_N^{N+1}};
\end{align}
for some $A>0$. 

The assumptions above place specific structures on $p=p(N),\ \tau = \tau(N)$ and the base measure $\mbox{H}^{(N)}$ that make aspects of the model-prior increasingly diffuse while also yielding tractable analytic calculation of the marginal posterior of $\theta\mid x_{1:n}$. In particular, under the above construction, $p = p(N)$ converges to $1$, rendering increased penalty on the number of components $k$, and $\tau = \tau(N)$ diverges to $\infty$, allowing the support of the uniform kernel to shrink, as $N \to \infty$.
Moreover, $\mbox{H}^{(N)}$ is a flat  prior on a compact set which expands with $N$. These assumptions share key structural commonalities with the assumptions for the main result in \citet{10.1093/biomet/92.1.31} 
In particular, \citet{10.1093/biomet/92.1.31} also considers sequences $(K_N, L_N)$ such that $K_N, L_N\to\infty$ and $L_N/K_N\to 0$ as $N\to\infty$. We make further specific choices of $K_N, L_N$ in order to simplify the expressions of various posterior quantities in the hierarchical specification (\eqref{eqn:npbayes3}--\eqref{eqn:npbayes1}), and provide rigorous asymptotic results. This ensures improved clarity in the proofs of the next two theorems while maintaining generality of the arguments.

\begin{theorem}\label{th2:lemma2}
Fix sample size $n$. Under the hierarchical specification in \eqref{eqn:npbayes3}--\eqref{eqn:npbayes2} and assumptions in  \eqref{eqn:assumptions}, as $N\to\infty$, the marginal posterior probability $P(k=n\mid x_{1:n})\to 1 $ almost surely.
\end{theorem}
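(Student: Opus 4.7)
The plan is to establish the claim by separately controlling $P(k<n\mid x_{1:n})$ and $P(k>n\mid x_{1:n})$. The former vanishes identically for $N$ large by the support geometry of the uniform kernel; the latter is driven to zero by the extreme prior penalty on $k$, which dwarfs any polynomial-in-$N$ likelihood ratio.

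First I would exploit the tight alignment between the grid and the kernel: under \eqref{eqn:assumptions} the grid $\mathcal X^N$ has spacing $2\rho_N$, while the uniform kernel in \eqref{eqn:npbayes3} has support of width exactly $2\rho_N$ in each coordinate (since $\tau(N)=1/\rho_N$), so each kernel is supported on precisely one grid cell. Since $\rho_N\to 0$, for continuous data and fixed $n$ the observations $x_1,\ldots,x_n$ almost surely fall into $n$ distinct cells for all $N$ sufficiently large; denote their cell centers by $c_1,\ldots,c_n$. Then the conditional likelihood factorizes as
\[
P(x_{1:n}\mid k,\mu_{1:k},b_{1:k}) = (2\rho_N)^{-nd}\prod_{i=1}^n\sum_{h:\,\mu_h=c_i}(b_h/N),
\]
which vanishes whenever $k<n$, because a $k$-tuple of atoms cannot simultaneously hit $n$ distinct cells. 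Hence $P(k<n\mid x_{1:n})=0$ almost surely for $N$ large.

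For $k>n$, my strategy is to show the marginal likelihood $L(k)$ grows only polynomially in $M_N$ and $k$, while $p(k)/p(n)$ decays super-polynomially. Iterating expectation: conditional on $\mu_{1:k}$, the disjoint-indexed multinomial moment identity gives $\mathbb{E}[\prod_i B_i\mid\mu]=N^{(n)}\prod_i n_i(\mu)/k^n$ where $n_i(\mu)=|\{h:\mu_h=c_i\}|$ and $N^{(n)}$ is the falling factorial; marginalizing over $\mu\sim H^{(N)}$ (i.i.d.\ uniform on $M_N$ grid points) then yields $\mathbb{E}[\prod_i n_i]=k^{(n)}/M_N^n$. Combining,
\[
L(k) = (2\rho_N)^{-nd}\cdot\frac{N^{(n)}}{N^n}\cdot\frac{k^{(n)}}{k^n}\cdot M_N^{-n},
\]
so $L(k)/L(n)\leq n^n/n!$ uniformly in $k\geq n$. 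Together with the geometric prior ratio $p(k)/p(n)=(AM_N^{N+1})^{-(k-n)}$ coming from $1-p=1/(AM_N^{N+1})$, this gives
\[
\frac{P(k\mid x_{1:n})}{P(n\mid x_{1:n})} \leq \frac{n^n}{n!}\,(AM_N^{N+1})^{-(k-n)},\qquad k>n,
\]
and summing the resulting geometric series yields $P(k>n\mid x_{1:n})=O(M_N^{-(N+1)})\to 0$.

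The hard part will be to carefully handle the $\theta$-dependent slice constraint $1_{A_{\varepsilon,N}(\theta)}$ and its normalizer $Z_{\varepsilon,N}(\theta)$ inside $\pi_{\varepsilon,N}(\xi^\star\mid\theta)$. The upper bound above is immediate because the constraint only shrinks the admissible $(\mu,b)$; the subtle step is a matching lower bound on the $k=n$ denominator. This requires exhibiting, for a set of $\theta$ with positive $\pi$-mass, an $n$-atom configuration with $\mu_i$ near $c_i$ and balanced weights $b_i\approx N/n$ whose mixing measure $P^{(N)}$ satisfies $\mbox{D}(P^{(N)},F_\theta)<\varepsilon$. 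Given how fast the prior penalty $M_N^{N+1}$ grows, any polynomial loss from the constraint calculation is easily absorbed; the verification nevertheless rests on how well discrete $n$-atom measures approximate $F_\theta$ under $\mbox{D}$, which should follow from standard Wasserstein-type sample complexity arguments tailored to the specific metric constructed in Section \ref{ssec:andrew}.
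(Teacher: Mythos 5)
Your decomposition into $P(k<n\mid x_{1:n})$ and $P(k>n\mid x_{1:n})$ is the same as the paper's, and your treatment of $k<n$ (the uniform kernel's support coincides with a single grid cell, distinct observations eventually occupy distinct cells, so the likelihood vanishes identically for $k<n$ once $N$ exceeds a data-dependent threshold) is essentially identical to the paper's Part 2. For $k>n$ you take a genuinely different and considerably cleaner route. The paper controls the ratio $\sup_{j>n} m(x_{1:n}\mid k=j)/m(x_{1:n}\mid k=n)$ by Stirling-approximating the multinomial mass $N!/(k^N\prod_h b_h!)$, splitting the sum over weight vectors into a shrinking neighbourhood $\wt{\mb{W}}_j$ of the uniform weights and its complement, and bounding each piece; this yields a posterior tail of order $N^{-\alpha d}$, from which almost-sure convergence is extracted via Borel--Cantelli. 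You instead evaluate the unconstrained marginal likelihood $L(k)$ in closed form via two factorial-moment identities (merging the multinomial counts over the disjoint index sets $\{h:\mu_h=c_i\}$, then taking the expectation of $\prod_i n_i(\mu)$ under i.i.d.\ uniform atoms), both of which are correct, giving $L(k)\propto k^{(n)}/k^n$ and the uniform bound $L(k)/L(n)\le n^n/n!$. Combined with the geometric prior ratio $(AM_N^{N+1})^{-(k-n)}$ this produces the far stronger tail $O(M_N^{-(N+1)})$ with no asymptotic expansion of factorials at all, and it trivially implies the summability the paper needs for its Borel--Cantelli step. What the paper's heavier machinery buys is reuse: the sets $\wt{\mb{W}}_j$ and the Stirling bounds are the same objects needed later in the proof of Theorem 2, whereas your moment computation is specific to this theorem.

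The one point on which you are more careful than the paper is the slice constraint: the paper's proof of this theorem works throughout with the unconstrained posterior $\pi^{(free)}_{N}(\xi^{\star}\mid x_{1:n})\propto\pi_{\infty,N}(\xi^{\star})\,P^{(N)}(x_{1:n}\mid\xi^{\star})$ and never reintroduces $1_{A_{\varepsilon,N}(\theta)}$. Your observation that the constraint only shrinks the $k>n$ numerator, together with a lower bound on the constrained $k=n$ term via a single feasible balanced configuration, is the right way to close this; and your absorption argument is if anything understated, since $M_N^{N+1}\asymp N^{\alpha d(N+1)}$ grows super-exponentially and therefore swallows even $e^{cN}$-sized losses, not merely polynomial ones. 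The only caveat is that such a feasible configuration exists exactly for those $\theta$ at which the constrained posterior is nonzero, so the statement should be read conditionally on that event; with that reading, your sketch completes.
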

Theorem \ref{th2:lemma2} is an important building block for our main result in Theorem \ref{th:npbayes_equivalence} below. 
As noted earlier, Assumption \eqref{eqn:assumptions} imply that under the hierarchical specification in \eqref{eqn:npbayes1}, 
(a) the number of atoms $\mu_h$s drawn from the base measure $\mbox{H}^{(N)}$ is controlled by the prior on $k$ that strongly encourages small support, i.e smaller number of unique atoms, and  (b) the atoms $(\mu_1,\ldots, \mu_k)$ take values in a uniform grid $\mathcal{X}^N\subset\mb{R}^d$ that expands to $\mb{R}^d$  while getting increasingly dense as $N\to\infty$. Together with \eqref{eqn:npbayes3}--\eqref{eqn:npbayes1}, (a) ensures that the $P(k=n\mid x_{1:n}) = 1$  almost surely and $\eta_{1:n}$ 
collapses on the observed sample as $N\to \infty$, and  (b) ensures that the hierarchical model describes an extremely flexible generative model. The flexible yet discrete nature of $\mathcal{X}^N\subset\mb{R}^d$ critically simplifies the arguments in the proof of Theorem \ref{th:npbayes_equivalence} below.

\begin{theorem}\label{th:npbayes_equivalence}
Fix the concentration parameter $\varepsilon>0$ and sample size $n$.  Suppose $\mbox{H}^{(N)},\ p$ and $\tau$ satisfy the assumptions stated above as $N \to \infty$. Then the marginal posterior $\pi_{\varepsilon, N}(\theta\mid x_{1:n})$ defined after \eqref{eqn:npbayes3}--\eqref{eqn:npbayes2} converges point-wise in $\theta$ to the $\mbox{D}$-BETEL posterior in equation \eqref{eqn:wbetelpost},
\begin{align*}
|\pi_{\varepsilon, N}(\theta\mid x_{1:n}) - \pi(\theta\mid x_{1:n})|\to 0 \quad \text{as}\quad N\to \infty.
\end{align*}
An application of Scheffe's theorem \citep{Resnick} yields, 
\begin{align*}
    \| \pi(\cdot \mid x_{1:n}) - \pi(\cdot \mid x_{1:n}) \|_{\rm TV} :\,= \frac{1}{2} \int_{\theta}|\pi_{\varepsilon, N}(\theta\mid x_{1:n}) - \pi(\theta\mid x_{1:n})| d\theta \to 0 \quad \text{as}\quad N\to \infty.
\end{align*}
\end{theorem}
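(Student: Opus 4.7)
Writing the marginal posterior as $\pi_{\varepsilon,N}(\theta\mid x_{1:n})\propto\pi(\theta)\,T_N(\theta)/Z_N(\theta)$, where
\begin{align*}
T_N(\theta):\,=\int p(x_{1:n}\mid\xi^*)\,\pi_{\infty,N}(\xi^*)\,\mathbf 1_{A_{\varepsilon,N}(\theta)}(\xi^*)\,d\xi^*
\end{align*}
and $Z_N(\theta)$ is the same integral without the likelihood factor, my plan is to prove that $T_N(\theta)/Z_N(\theta)$ is asymptotically proportional to $L_{\rm DCM}(\theta)$ with a $\theta$-free multiplier. Dividing by $\int\pi(\theta')T_N(\theta')/Z_N(\theta')\,d\theta'$ then yields pointwise convergence of the normalized marginal posterior to $\pi(\theta\mid x_{1:n})\propto\pi(\theta)L_{\rm DCM}(\theta)$, and Scheff\'e's lemma, applied to two honest probability densities over $\theta$, upgrades this to the total variation statement. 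The three technical pillars are Theorem~\ref{th2:lemma2} (to localize on $\{k=n\}$), Stirling's formula (to turn the Multinomial weight prior into the Shannon entropy), and Laplace's method (to read off the constrained entropy maximizer).

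\textbf{Structure on $\{k=n\}$.} The tiling calibration $\tau=\rho_N^{-1}$ ensures that the uniform kernel $f(\cdot\mid\mu_h)$ is supported in a single grid cell of $\mathcal X^N$, so $f(x_i\mid\mu_h)=(2\rho_N)^{-d}\mathbf 1[\mu_h=g(x_i)]$, where $g(x_i)\in\mathcal X^N$ is the nearest grid point to $x_i$. Assuming the $g(x_i)$ are distinct (true for all large $N$), $p(x_{1:n}\mid\mu,b)$ is nonzero only when $(\mu_1,\ldots,\mu_n)$ is a permutation of $(g(x_1),\ldots,g(x_n))$ and then equals $(2\rho_N)^{-dn}\prod_h w_h$ with $w_h=b_h/N$. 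Summing over the $n!$ labelings and the atom prior $M_N^{-n}$ gives
\begin{align*}
T_N^{(n)}(\theta)=C_N\!\sum_{b:\sum_i b_i=N}\!\mathrm{Mult}_N(b)\,\Bigl(\prod_h w_h\Bigr)\,\mathbf 1\{\mbox{D}(\nu^{(g)}_{w,x},F_\theta)<\varepsilon\},
\end{align*}
with $\mathrm{Mult}_N(b)=(N!/\prod b_i!)\,n^{-N}$, $\nu^{(g)}_{w,x}=\sum_h w_h\delta_{g(x_h)}$, and $C_N=p(n)\,n!\,M_N^{-n}(2\rho_N)^{-dn}$ a $\theta$-free prefactor; the formula for $Z_N^{(n)}(\theta)$ is identical but with $\prod_h w_h$ and $(2\rho_N)^{-dn}$ removed.

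\textbf{Stirling and Laplace.} Stirling's formula gives $\log\mathrm{Mult}_N(b)=NH(w)-N\log n-\tfrac{n-1}{2}\log N+O(1)$ with $H(w)=-\sum_iw_i\log w_i$. Interpreting the lattice sum as a Riemann sum with mesh $N^{-1}$ on the $(n-1)$-simplex and applying Laplace's method at the constrained entropy maximizer $w^\star(\theta)=\argmax_{w\in A^{(g)}(\theta)}H(w)$, where $A^{(g)}(\theta)=\{w:\mbox{D}(\nu^{(g)}_{w,x},F_\theta)<\varepsilon\}$, yields, for continuous $h$,
\begin{align*}
\sum_b\mathrm{Mult}_N(b)\,h(w)\,\mathbf 1_{A^{(g)}(\theta)}(w)\;\sim\;n^{-N}\,e^{NH(w^\star(\theta))}\,h(w^\star(\theta))\,\Psi(w^\star(\theta)),
\end{align*}
where $\Psi$ packages the $\prod_i(w_i^\star)^{-1/2}$ Stirling correction and the Hessian determinant, with the powers of $N$ and $2\pi$ from Stirling and Laplace mutually cancelling. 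Taking the ratio with $h=\prod_h w_h$ in the numerator and $h\equiv 1$ in the denominator, the exponentials, the Hessian, and all $N$-dependent prefactors cancel, leaving $T_N^{(n)}(\theta)/Z_N^{(n)}(\theta)\to(2\rho_N)^{-dn}\prod_h w_h^\star(\theta)$. Because $\rho_N\to 0$ drives $g(x_i)\to x_i$ and $A^{(g)}(\theta)\to A(\theta):\,=\{w:\mbox{D}(\nu_{w,x},F_\theta)<\varepsilon\}$, continuity of $\mbox{D}$ in its atom locations and strict concavity of $H$ give $\prod_h w_h^\star(\theta)\to L_{\rm DCM}(\theta)$, and the $(2\rho_N)^{-dn}$ is $\theta$-free and is absorbed in the posterior normalization.

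\textbf{Obstacles and closing.} Theorem~\ref{th2:lemma2} immediately gives $T_N(\theta)\sim T_N^{(n)}(\theta)$ because $k>n$ contributions carry the super-polynomial prior penalty $(1-p(N))^{k-n}\binom{k}{n}$ and $k<n$ contributions have empty likelihood support. The central technical hurdle is the analogous reduction for $Z_N(\theta)$, which lacks the likelihood and whose unconstrained prior $\pi_{\infty,N}$ favors small $k$; the calibration $1-p(N)=(AM_N^{N+1})^{-1}$ is engineered precisely so that the prior penalty on $k\neq n$, together with bounds on the prior measure of $A_{\varepsilon,N}(\theta)$ for $k$ too small to $\varepsilon$-approximate $F_\theta$, renders $Z_N(\theta)\sim Z_N^{(n)}(\theta)$ with a $\theta$-independent ratio. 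Additional care is needed to justify the Laplace expansion rigorously: uniqueness of $w^\star(\theta)$ follows from strict concavity of $H$ and convexity of $\mbox{D}(\nu_{w,x},\cdot)$ in $w$ (the latter inherited from the Wasserstein-type structure of $\mbox{D}$), interiority must be established so the Hessian is non-degenerate, and the lattice-to-integral approximation error needs to be $o(1)$ relative to the leading term, uniformly enough to survive the ratio operation. Once pointwise convergence of the posterior density in $\theta$ is in hand, Scheff\'e's lemma delivers the total variation conclusion as stated.
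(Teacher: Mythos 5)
Your roadmap agrees with the paper's through its first half: both localize on $\{k=n\}$ via Theorem~\ref{th2:lemma2}, both use the grid calibration $\tau=\rho_N^{-1}$ so that exactly one assignment of atoms to data points survives, and both use Stirling to convert the Multinomial prior weight into $e^{NH(w)}$ times polynomial corrections (the paper's Lemma~\ref{th2:lemma1_b}), arriving at a ratio of two sums over the constrained weight set $C_{\theta,\varepsilon}$, the numerator carrying the extra factor $\prod_h w_h$. Where you diverge is in how that ratio is evaluated. You propose a full Laplace expansion of numerator and denominator separately (Riemann sum on the simplex, Hessian determinant, explicit prefactor $\Psi$). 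The paper never computes either sum's asymptotics: it defines a neighborhood $\wt C_{\theta,\varepsilon}$ of $w^\star$ in terms of the \emph{values} $H_N(w^\star)-H_N(w)$ and $\mbox{D}^2[F_\theta,\nu_{w,x}]$ (not coordinatewise closeness of $w$ to $w^\star$), shows via Lemma~\ref{th3:lemma1} that terms outside contribute a relative factor $e^{-cN^\delta+n\log N}$, shows via Lemmas~\ref{W2_lowerbound} and \ref{th2_lemma2_asymtotic} that $\prod_h w_h$ is uniformly close to $\prod_h w_h^\star$ inside, and concludes that the ratio converges to $\prod_h w_h^\star$ by the triangle-inequality computation in \eqref{eqn:continuity}. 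The paper's route buys exactly what is needed (only the ratio matters, so all unknown normalizations cancel) at the price of the somewhat delicate value-based neighborhood construction; your route, if it worked, would give sharper information (the actual growth rate of each sum) that the theorem does not require.

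The concrete gap is your reliance on an \emph{interior} Laplace expansion. In the statistically relevant case the distance constraint is active, so $w^\star(\theta)$ lies on the boundary of $A^{(g)}(\theta)$ intersected with the simplex (the paper notes explicitly that $w^\star$ is either $(1/n,\ldots,1/n)$ or a boundary point of $C_{\theta,\varepsilon}$, and only the latter case is of interest). A maximizer on the constraint boundary invalidates the standard interior Laplace formula: the local integral is over a half-space (or a curved sublevel set), the prefactor $\Psi$ you write down is wrong, and "establishing interiority so the Hessian is non-degenerate" cannot be done because interiority simply fails. This is rescuable, but only by retreating to the observation that the same (unknown) boundary-Laplace constant appears in numerator and denominator and cancels --- which is precisely the paper's concentration-of-the-ratio argument, making the Laplace machinery superfluous. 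A second, smaller point: you correctly flag the reduction of the prior normalizer $Z_N(\theta)$ to its $k=n$ slice as a hurdle; the paper handles this only implicitly by writing $D_x^\star(\theta)$ directly as the $k=n$ contribution, so your explicit attention to it is warranted rather than mistaken. Finally, your closing step (pointwise convergence of densities plus Scheff\'e gives total variation convergence) matches the paper's.
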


The proof of Theorems \ref{th2:lemma2} and \ref{th:npbayes_equivalence} and the required results are deferred to Section \ref{ssec:th1th2_proof} of the supplementary document. Instead, we discuss the key takeaway messages that the theorems expose. Perhaps most importantly, Theorem \ref{th:npbayes_equivalence} enables us to formally recognize $\mbox{D}$-BETEL as a limiting non-parametric Bayes posterior. Moreover, the formulation in equations \eqref{eqn:npbayes3}--\eqref{eqn:npbayes2} allows decoupling of the parameter of interest $\theta$ and the potentially infinite-dimensional nuisance parameter $\xi^\star$, tying them together through the slicing. The decoupling enables us explicitly quantify prior information on $\theta$ through prior $\pi(\theta)$ in our implicitly specified models, and the slicing potentially provides an efficient scheme to navigate the enormous $(\xi^{\star},\theta)$ space. We hope that this framework of providing the parameter of interest $\theta$ with its own identity in implicitly specified models and making the nuisance parameter $\xi^{\star}$ dependent on $\theta$ opens up possibilities of developing non-parametric Bayes models tailored to perform inference on specific parameters going forward.

It is also important to point out that the asymptotic regime allows the effect of the prior $P^{(N)}$ to get washed away for any finite $n$ so that the resulting posterior distribution of $P^{(N)}$ converges to the empirical distribution subject to the distributional constraint. As a consequence, the induced posterior distribution on $\theta$ converges to the posterior obtained from the empirical likelihood.  Although in principle, the mixture of finite mixture prior for $P^{(N)}$ can be replaced any other distribution whose effect is allowed to weaken under the asymptotic regime, the MFM construct is a natural choice that allows the $P^{(N)}$ to asymptotically be degenerate at the observed data points.


Another key revelation from the presentation above is the fact that the hyper-parameter $\varepsilon$ bears clear similarity to the concentration parameter in a Dirichlet process \citep{Ferguson, teh2010dirichlet}, as it determines how tightly $\nu_{w,x}$ sits around $F_{\theta}$ with respect to $\mbox{D}$. Since the $\mbox{D}$-BETEL formulation enjoys concrete probabilistic interpretation, we are able to provide a principled guideline for hyper-parameter $\varepsilon$. To that end we recall that, given $x_1,\ldots,x_n$, the Bayesian leave-one-out estimate of out-of-sample predictive fit \citep{JMLR:v17:14-540} is 
\begin{align*}
 \ELPD_{\varepsilon} =   \sum_{i=1}^n \ \log \pi(x_i\mid x_{-i})\quad \text{where}\quad  \pi(x_i\mid x_{-i}) = \int \pi(x_i\mid\theta)\ \pi(\theta\mid x_{-i})\ d\theta   
\end{align*}
is the leave-one-out predictive density given the data without the $i$-th data point, and the corresponding standard error is 
\begin{align*}
\SE[\ELPD_{\varepsilon}] = \sqrt{n}\ \sqrt{\Var[\ \log \pi(x_1\mid x_{-1}),\ldots, \log \pi(x_n\mid x_{-n})\ ]}.    
\end{align*}
 When $\varepsilon$ is too large, the distance-based restriction does not kick in and estimated $\SE[\ELPD_{\varepsilon}]$ is close to $0$. So, we consider a decreasing sequence of $\varepsilon$ values, say $\varepsilon_1,\ldots,\varepsilon_h$, such that $\varepsilon_i > \varepsilon_j \ \forall \ 1 \leq i<j\leq h$. A general strategy to select the sequence is to first consider a grid over powers of $2$ and then use a finer grid in the interval where $\ELPD_{\varepsilon}$ undergoes steep change. Suppose $\varepsilon_{h_0}$ is the largest value of $\varepsilon$ for which the distance-based restriction is active. Then, our estimate of the model parameter $\theta$ is 
\begin{align}\label{eqn:ma}
  \hat{\theta}_{\rm MA} = \sum_{i=h_{0}}^h\kappa_i\hat{\theta}_i ,\quad \text{with}\quad \kappa_i = \frac{\exp(-\ELPD_{\varepsilon_{i}})}{\sum_{j=h_{0}}^h \exp(-\ELPD_{\varepsilon_{j}})}
\end{align}
where $\hat{\theta}_i$ and $\exp(-\ELPD_{\varepsilon_{i}})$ are the parameter estimate and estimated $\ELPD$  at $\varepsilon=\varepsilon_i$ respectively. 
From the definition of ELPD, we can interpret it as a measure of the extent of unequal weighting of the observations. In the presence of contamination, our approach of selecting the hyper-parameter promotes  unequal weighting of the observations to ensure  -- under weighting of outlying observations, and over-weighting observations around the ``center". This inbuilt mechanism of ensuring immunity against outliers while maintaining  a valid generative model interpretation is what sets our method apart from lot of the existing pseudo-likelihood based approaches. Finally, it is also important to point out that, although $\hat{\theta}_{\rm MA}$ in equation \eqref{eqn:ma} is calculated via an weighted average, in practice $\hat{\theta}_{\rm MA}$ and the associated HPD set typically degenerate to those corresponding to a handful of values of $\varepsilon$. Thus this procedure inherits the generative model interpretation of $\mbox{D}$-BETEL.  

The rest of this section is devoted to completing the specifications in our proposed $\mbox{D}$-BETEL formulation in \eqref{eqn:wbetel}-\eqref{eqn:wbetelpost}. To be precise, we discuss what would be a judicious choice of the centering parametric family $F_{\theta}$, together with the distance metric $\mbox{D}$ between $F_{\theta}$ and $\nu_{w,x}$, that remains computationally feasible across the ensuing applications.

\subsection{ANDREW: an AugmeNteD \& REstricted  Wasserstein metric}\label{ssec:andrew}
To offer flexibility in the centering mechanism, we seek that the family of centering distributions describe a large and expressive class of models. To that end, we suggest employing Elliptical Mixture Models (EMM) as a general choice of the centering distribution $F_{\theta}$. Given a vector $m \in \mb R^d$, a positive (semi-)definite scale matrix $\Sigma \in \mb R^{d \times d}$, and a generator function $h: (0, \infty) \to (0, \infty)$, the elliptical distribution $\mbox{ED}_{h}(m,\Sigma)$ is defined to be the distribution with characteristic function 
\begin{align}\label{eqn:ch_EMM}
    t\rightarrow\exp(i t^\T \mu)\ h(t^\T\Sigma t), \quad t \in \mathbb{R}^{d}.
\end{align}
Recall that a multivariate Gaussian distribution $\mbox{N}_d(m,\Sigma)$ has characteristic function 
$\exp(i t^\T\mu)\ h(t^\T\Sigma t)$ for $t \in \mb{R}^d$ where $h(z) = \exp(-z/2)$ for $z>0$. Elliptical distributions \citep{muirhead2005aspects} 
allow a wider class of positive functions $h$. 
A mixture of such elliptical distributions, $\sum_{k=1}^{K_{0}} s_{0k}\ \mbox{ED}_h(m_k,\Sigma_k)$, provides a flexible tool for statistical modeling \citep{Cambanis1981OnTT, 10.2307/4616956} and probabilistic embedding of complex objects \citep{muzellec2019generalizing, le2019treesliced}. Consequently, elliptical mixture models (EMM) serve as an attractive candidate for our parametric centering family. With this choice, we now discuss the construction of a novel statistical distance $\mbox{D}[\cdot ,\cdot ]$ that is expressive and easy to calculate between the centering elliptical mixture model and the weighted empirical distribution of the observed data.

As discussed earlier, we require a statistical distance $\mbox{D}$ that (i) returns a non-trivial distance between a discrete and a continuous distribution. For the ensuing applications we also require that $\mbox{D}$: (ii) allows a straightforward multivariate extension; (iii) is computationally feasible; and (iv) effectively captures the tail behavior of the distributions. The requirement (i) itself rules out the applicability of many popular statistical distances/divergences like the Kullback--Leibler  divergence, Hellinger distance, total variation distance, $\chi^2$ distance, etc. The Cramer--von Mises metric on $\mb R$ satisfies (i), (iv), but its multivariate extension is not immediate. The $p$-Wasserstein metric \citep{Villani2003TopicsIO} satisfies (i), (ii), and (iv), and is an attractive candidate. However, it remains computationally challenging in multivariate examples we care about. This motivates the need for a specialized adaptation of the Wasserstein metric. To that end, we recall some relevant facts about the $p$-Wasserstein metric first.    
\begin{definition}
For $p\geq 1$, the Wasserstein space $\mb{P}_p(\mb{R}^d)$ is defined as the set of probability measures $\mu$ with finite moment of order $p$, i.e $\{\mu \,: \, \int_{\mb{R}^d} \left\lVert x\right\rVert^p d\mu(x) < \infty\}$, where $\left\lVert \ \cdot \ \right\rVert$ is the euclidean norm on $\mb{R}^d$.
\end{definition}
\begin{definition}
For $p_0, p_1 \in \mb{P}_p(\mb{R}^d)$, let $\pi(p_0, p_1) \subset \mb{P}_p(\mb{R}^d\times\mb{R}^d) $ denote the subset of joint probability measures (or {\em couplings}) $\nu$ on $\mb{R}^d\times\mb{R}^d$ with marginal distributions $p_0$ and $p_1$, respectively. Then, the $p$-Wasserstein distance $W_p$ between $p_0$ and $p_1$ is defined as 
\begin{align}\label{eq:W_def}
 W_{p}^{p}(p_0, p_1)
= \inf_{\nu\in \pi(p_0,p_1)} \int_{\mb{R}^d\times\mb{R}^d}\
\left\lVert y_0 - y_1 \right\rVert^p\ d\nu(y_0,y_1).  
\end{align}
\end{definition}

\begin{figure}[!htb]
\includegraphics[width=1\textwidth]{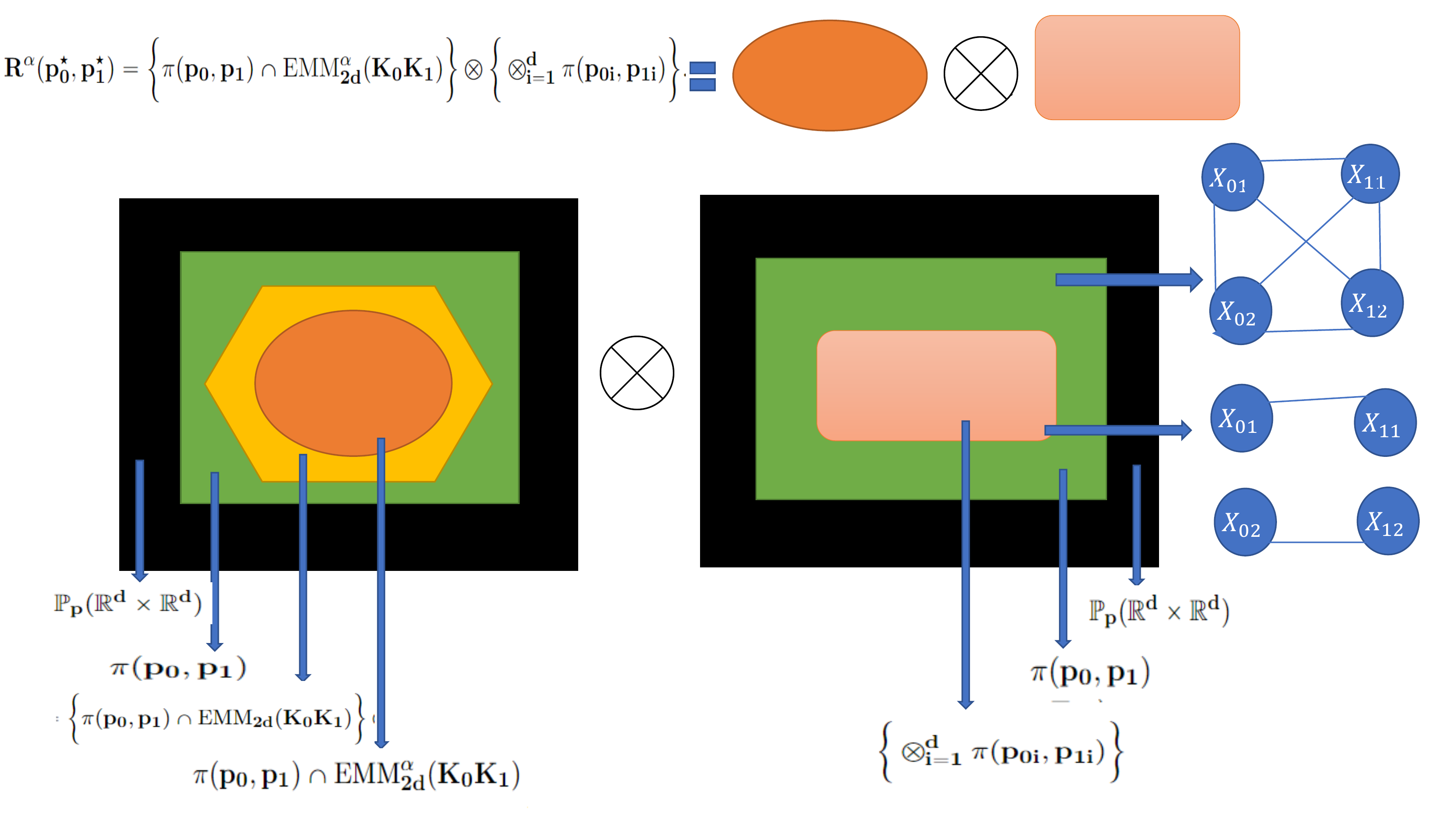}
\caption{\emph{The augmentation and restriction scheme to construct $R^{\alpha}(p^{\star}_0,p^{\star}_1)$; left: construction of $\big\{\pi(p_0,p_1)\cap \EMM^{\alpha}_{2d}(K_0 K_1)\big\}$ ; right: construction of $\big\{\otimes_{i=1}^d \pi(p_{0i},p_{1i})\big\}$ with $d=2$.} 
}\label{fig:andrew_cartoon}
\end{figure}

In the one-dimensional case,  $W_p$ has a closed-form expression as the $L_p$ distance between the corresponding quantile functions \citep{noauthororeditor}.
However, such closed-form expressions for $d\geq 2$ are unavailable except for a few special cases, and numerical approximations, while available, are computationally expensive. \citet{cuturi2013sinkhorn} demonstrated that regularizing the minimization (or transport) problem \eqref{eq:W_def} with an entropic penalty provides an efficient numerical approach to compute the Wasserstein metric in discrete cases. There also has been a recent line of work \citep{delon:hal-02178204, BionNadal2019OnAW} modifying the 2-Wasserstein metric via restricting the class of coupling measures to a carefully chosen sub-family, which considerably reduces the computational cost and yet encompasses a rich class of coupling measures. In particular, \citet{delon:hal-02178204} proposed a modified Wasserstein metric with $L_2$ cost between two Gaussian mixture models in $\mb{R}^d$ via restricting the class of coupling measures to all possible Gaussian mixtures in $\mb{R}^{2d}$, and derived a computationally convenient discrete formulation for this metric. Moreover, the authors, although without a formal proof,  indicate that the proposed distance could be extended to other types of mixtures if they satisfy a marginal consistency property and an identifiability property. A rigorous proof is presented in Lemma \ref{th1:lemma4}. 
Although Gaussian mixtures already describe a rich class of models, if we directly extend their proposal to a more versatile class of models, the proposed metric will be incapable of capturing key differences  of interest, i.e., tail properties, between probability distributions ; please refer to the discussions following the statement of Theorem \ref{th3} for details. 
In what follows, focusing on the class of identifiable elliptical mixture models in equation \eqref{def:indentifiable_emm},  we describe a new and carefully crafted strategy based on augmentation succeeded by restriction in the space of coupling measures that yield a transport metric ANDREW that not only automatically inherits computational tractability as in  \citet{delon:hal-02178204}, but also remains expressive and is capable of accessing improved computational algorithms based on an entropic regularization of the discrete optimal transport. 
In essence, our novel strategy presents a general recipe for devising increasingly expressive transport metrics and describing a corresponding modified class of couplings, of which ANDREW introduced next is a specific example.

With the requirements (i)-(iv) in mind, we 
place the discussion in the previous paragraph in concrete terms and  
propose a modified 2-Wasserstein metric (ANDREW), denoted $W_{\rm AR}$, between elliptical mixture distributions. 
To that end, we {\em augment} the class of coupling measures $\pi(p_0,p_1)\subset \mb{P}_p(\mb{R}^d\times\mb{R}^d)$ into a class of coupling measures  $\pi(p^{\star}_0,p^{\star}_1)\subset\mb{P}_p(\mb{R}^{2d}\times\mb{R}^{2d})$, and then {\em restrict} $\pi(p^{\star}_0,p^{\star}_1)$ to a carefully chosen sub-class of couplings. We describe the details below. 

\begin{definition}
Let $p_0, p_1 \in \mb{P}_{p}(\mb{R}^d)$ with $p_j = \sum_{k=1}^{K_j} s_{jk} \mbox{ED}_h(m_{jk},\Sigma_{jk}), (j = 0, 1)$. Next, we shall consider an augmentation followed by a restriction scheme as follows:\\
(a) Augmentation:
Define probability distribution $p_{0}^\star \in\mb{P}_2(\mb{R}^{2d})$ as 
\begin{align*}
p_{0}^\star :\,= p_0 \otimes \widetilde{p}_0, \ \text{with}\ \widetilde{p}_0 :\,= p_{01} \otimes \ldots \otimes p_{0d}    
\end{align*}
and $p_{0i}$ the $i$th marginal of $p_0$. Clearly, if $X_0 \sim p_0$, and $\widetilde{X}_0$ independent of $X_0$ is distributed as $\widetilde{p}_0$, then $X_0^{\star} = (X_0, \widetilde{X}_0)^{\T}\sim p_{0}^\star$. Similarly, define $p_1^\star$.
By construction we have
\begin{align*}
 \pi(p^{\star}_0,p^{\star}_1) = \pi(p_0,p_1)\otimes \pi(\widetilde{p}_0,\widetilde{p}_1)  = \pi(p_0,p_1)\otimes \big\{\otimes_{i=1}^d \pi(p_{0i},p_{1i})\big\}.   
\end{align*}
(b) Restriction: 
Suppose
\begin{equation*}
 \EMM_{2d}(K_0,K_1) = \bigg\{ \sum_{k,l} \pi_{kl}\ {\rm ED}_{h}(m_{kl},\Sigma_{kl}): \pi_{kl}\geq 0,\ \sum_{k,l} \pi_{kl} = 1\bigg\}   
\end{equation*}
denote the collection of all ($K_0\times K_1$)-component mixture of identifiable elliptical distributions. Define a subset $\EMM^{\alpha}_{2d}(K_0, K_1)$ of $\EMM_{2d}(K_0,K_1)$ by imposing the entropic restriction 
\begin{equation*}
D_{\rm KL}\big[\Pi\ || \ s_0 s_{1}^{\T}\big]\leq \alpha\ \text{where}\ \Pi = ((\pi_{kl})) \in \mb R^{K_0 \times K_1}
\end{equation*}
is the joint probability matrix of the mixture weights, and $s_0, s_1$ are the respective marginals, i.e., $\Pi 1_{K_1} = s_0, \Pi^{\T}1_{K_0} = s_1$.
Finally, define a collection of couplings $R^{\alpha}(p^{\star}_0,p^{\star}_1) \subset \pi(p^{\star}_0,p^{\star}_1)$ as 
\begin{align*}
R^{\alpha}(p^{\star}_0,p^{\star}_1) = \big\{\pi(p_0,p_1)\cap \EMM^{\alpha}_{2d}(K_0, K_1)\big\}\otimes \big\{\otimes_{i=1}^d \pi(p_{0i},p_{1i})\big\}.   
\end{align*}
Refer to Figure \ref{fig:andrew_cartoon} for a schematic representation of the proposed augmentation and restriction strategy.
With these notations in place, we define ANDREW as
\begin{equation}\label{eqn:andrewdef}
 W_{\rm AR}^{2}(p_0, \ p_1) = \inf_{\nu\in R^{\alpha}(p^{\star}_0,p^{\star}_1)} \mb{E}_{\nu}||X_{0}^{\star} - X_{1}^{\star}||^2. \end{equation}
\end{definition}


To cater to our original goal of centering the $\mbox{D}$-BETEL around an EMM, we now present a simplified form of $W_{\rm AR}$ for the case when one of $p_0, p_1$ is discrete.
\begin{theorem}\label{th3}
Suppose $p_{0}\equiv \sum_{k=1}^{K_0} s_{0k} \mathrm{ED}_h(m_{0k},\Sigma_{0k}),\
p_{1}\equiv {\sum_{k=1}^{K_1} s_{1k} \delta_{m_{1k}}}$, and $M = ((||m_{0k}-m_{1l}||^2 )) \in \mathbb{R}^{K_0 \times K_1}$ be the quadratic cost matrix. Then, there exists $\lambda_{\alpha}$ depending on $\alpha$ such that 
{\smaller \begin{align*} 
W_{\rm AR}^2(p_0, p_1) =
\inf_{\pi\in\pi(s_0,s_1)}\bigg[ \big<\Pi, M\big>  - \frac{1}{\lambda_{\alpha}} H(\Pi)\bigg] + \nu_{h}\sum_{k=1}^{K_0} s_{0k} \mathrm{tr}(\Sigma_{0k})+ \sum_{k=1}^d \int_{0}^1 \big(F_{0k}^{-1}(z) - F_{1k}^{-1}(z)\big)^2 dz
\end{align*}}
where $\big<\Pi, M\big> = \mathrm{tr}(\Pi^{\T}M)$, $H(\Pi) = -\sum_{k,l} \pi_{kl}\log \pi_{kl}$ and $F_{jk}^{-1}(\cdot)$ is the quantile function of $X_{jk}$.
\end{theorem}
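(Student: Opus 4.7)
The plan is to exploit the product structure of the coupling set $R^\alpha(p_0^\star, p_1^\star)$ to decompose the minimization into two orthogonal pieces --- an EMM-restricted transport problem in $\mathbb{R}^{2d}$ and a sum of independent univariate transports --- and then reduce the EMM piece to a finite-dimensional problem on the joint mixture weights using identifiability of elliptical mixtures, with the KL constraint finally handled via its Lagrangian dual.

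First, since $X_j^\star = (X_j, \widetilde{X}_j)^\T$, the squared cost splits orthogonally as $\|X_0^\star - X_1^\star\|^2 = \|X_0 - X_1\|^2 + \sum_{i=1}^d (X_{0i} - X_{1i})^2$; because $R^\alpha(p_0^\star, p_1^\star) = \{\pi(p_0,p_1) \cap \EMM^\alpha_{2d}(K_0, K_1)\} \otimes \{\otimes_{i=1}^d \pi(p_{0i}, p_{1i})\}$, the infimum in \eqref{eqn:andrewdef} distributes across the two factors. The univariate infima collapse to $W_2^2(p_{0i}, p_{1i}) = \int_0^1 (F_{0i}^{-1}(z) - F_{1i}^{-1}(z))^2\,dz$ via the classical one-dimensional quantile formula, recovering the last term of the theorem.

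Next, I analyze the EMM factor. Any $\nu \in \pi(p_0, p_1) \cap \EMM^\alpha_{2d}(K_0, K_1)$ can be written as $\sum_{k,l} \pi_{kl}\, \mathrm{ED}_h((m_{0k}, m_{1l})^\T, \Sigma_{kl})$ with $\Pi = ((\pi_{kl})) \in \pi(s_0, s_1)$. Because the second-block marginal must equal the purely atomic measure $p_1 = \sum_l s_{1l}\delta_{m_{1l}}$, identifiability of elliptical mixtures (Lemma~\ref{th1:lemma4}) combined with the absolute continuity of non-degenerate ellipticals forces $\Sigma_{kl, 11} = 0$ for every active $(k,l)$; a deterministic coordinate block in an elliptical distribution in turn forces the associated cross-covariance to vanish. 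Hence each component factors as $\mathrm{ED}_h(m_{0k}, \Sigma_{0k}) \otimes \delta_{m_{1l}}$, so the coupling is parameterized entirely by $\Pi$. Using the moment identities $\bbE X = m$ and $\mathrm{Cov}(X) = \nu_h \Sigma$ with $\nu_h = -2 h'(0)$ --- obtained by Taylor-expanding \eqref{eqn:ch_EMM} at the origin --- a direct calculation yields
\[
\bbE_\nu \|X_0 - X_1\|^2 = \sum_{k,l} \pi_{kl} \bigl[\nu_h\, \mathrm{tr}(\Sigma_{0k}) + \|m_{0k} - m_{1l}\|^2\bigr] = \nu_h \sum_{k=1}^{K_0} s_{0k}\, \mathrm{tr}(\Sigma_{0k}) + \langle \Pi, M \rangle,
\]
where the first sum is $\Pi$-independent because $\sum_l \pi_{kl} = s_{0k}$.

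To incorporate the entropic restriction, I rewrite $D_{\mathrm{KL}}(\Pi \,\|\, s_0 s_1^\T) = -H(\Pi) + H(s_0) + H(s_1)$; the last two terms are constants on $\pi(s_0, s_1)$, so the constraint is equivalent to a lower bound on $H(\Pi)$. Since $\langle \cdot, M \rangle$ is linear, $-H$ is strictly convex on the compact polytope $\pi(s_0, s_1)$, and Slater's condition holds whenever the constraint is admissible, standard Lagrangian duality supplies a multiplier $\lambda_\alpha > 0$ for which the constrained minimizer coincides with that of $\inf_{\Pi \in \pi(s_0, s_1)} \{ \langle \Pi, M \rangle - \lambda_\alpha^{-1} H(\Pi) \}$. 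Assembling the three contributions produces the displayed identity. I expect the main obstacle to be the structural claim in the second step --- pinning down every EMM coupling component as an elliptical--Dirac tensor product --- which rests on a careful block-wise application of Lemma~\ref{th1:lemma4} and the observation about cross-covariances vanishing when one block is deterministic. The one-dimensional Wasserstein formula, the elliptical moment calculation, and the constrained-to-penalized duality step are then routine.
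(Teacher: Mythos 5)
Your proposal is correct and follows essentially the same route as the paper: the same orthogonal decomposition of the cost over the product coupling class (EMM-restricted block plus the univariate quantile integrals), the same identifiability-based reduction of the EMM block to a finite problem over the weight matrix $\Pi$, and the same Lagrangian conversion of the KL constraint into the entropic penalty with multiplier $\lambda_{\alpha}$ (the paper's Lemmas \ref{th1:lemma4}--\ref{th1:lemma6}). The only difference is cosmetic: you exploit the atomicity of $p_1$ directly to force each coupling component into an elliptical--Dirac product, which shortcuts the paper's two-step passage through the general EMM-vs-EMM discrete formulation (Lemmas \ref{th1:lemma3}--\ref{th1:lemma4}) followed by the specialization in Lemma \ref{th1:lemma5}; note also that the identifiability hypothesis you need is the one in Definition \ref{def:indentifiable_emm} (via Lemmas \ref{th1:lemma1}--\ref{th1:lemma2}), not Lemma \ref{th1:lemma4} itself, and the vanishing covariance block is the one attached to the $X_1$ coordinates.
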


We defer the proof and a cascade of required Lemmas to Section \ref{ssec:th3} of the supplementary document and make some remarks about ANDREW here. Importantly, the expression above is completely tractable and computationally feasible. The entropic regularization term in $W_{\rm AR}$ makes the discrete optimal transport problem strictly convex, and consequently, it can access linear convergence via Sinkhorn’s fixed point iterations \citep{cuturi2013sinkhorn}. On the other hand, traditional simplex or interior-point methods for an $n\times n$ unregularized optimal transport scales at least in $O(n^3\log n)$. Next, $W_{\rm AR}$ effectively captures the tail behavior of the distributions. Had we restricted the class of coupling measures $\pi(p_0,p_1)$ to $\EMM_{2d}(K_0,K_1)$ instead, then
\begin{align*}
 \mbox{MW}^2_2(p_0, p_1) :\,=  \inf_{\nu\in \{\pi(p_0,p_1)\cap \EMM_{2d}(K_0 K_1)\}} \mb{E}_{\nu}||X_{0}^{\star} - X_{1}^{\star}||^2  = \inf_{\pi\in\pi(s_0,s_1)} \big<\Pi, M\big> + \nu_h \sum_{k=1}^{K_0} s_{0k} \mbox{tr}(\Sigma_{0k})
 \end{align*}
only depends on first and second-order moments, and fails to capture the tail behavior of the distributions. For example, let $p_{0}\equiv \sum_{k=1}^{K_0} s_{0k} t_{\eta}(m_{0k},\Sigma_{0k}),\ p^{\prime}_{0}\equiv \sum_{k=1}^{K_0} s_{0k} t_{\eta^{\prime}}(m_{0k},\Sigma_{0k}^{\prime}),\ p_{1} \equiv {\sum_{k=1}^{K_1} s_{1k} \delta_{m_{1k}}} $ and set $\eta^{\prime} = \eta/m,\ \Sigma_{0k}^{\prime} = \frac{\eta - 2m}{\eta - 2}\Sigma_{0k}$ for some $ m\in\mb{Z}^{+} $ such that the variances of the multivariate t-distributions $t_{\eta}(m_{0k},\Sigma_{0k})$ and $t_{\eta^{\prime}}(m_{0k},\Sigma_{0k}^{\prime})$ match for $k = 1,\ldots,K_0$. Then $\mbox{MW}^2_2(p_0, p_1) = \mbox{MW}^2_2(p_{0}^{\prime}, p_1)$. Since the expression of $W_{\rm AR}$ additionally involves the marginal quantiles, it is capable of capturing the difference in the tail due to the different d.f. of the $t$. We believe the flexibility and the computational simplicity of our novel Wasserstein metric may render itself useful in many optimal transport-based machine learning applications, beyond what we discuss here; see the discussion section for some specific application domains.

We now have all the necessary ingredients for $\mbox{D}$-BETEL, and we illustrate the proposed methodology in a number of specific applications. All the examples in the following section use $\mbox{D}$-BETEL in \eqref{eqn:wbetel} with our proposed transport metric ANDREW in \eqref{eqn:andrewdef}.


\section{Robust Bayesian inference}\label{Robust_Bayes}
\subsection{Model based clustering}
Motivated by the model based clustering example in \citet{doi:10.1080/01621459.2018.1469995}; see also \citet{cai}; we generate data from a bivariate skew-normal distribution \citep{SkewNormal} with pdf $f(x) = 2\phi(x) \Phi(\alpha^{\T} x),\ x \in \mb{R}^{2}$, with the two-dimensional skewness parameter $\alpha \neq (0, 0)$ to imitate a situation where the underlying true distribution is bivariate normal in the presence of mild contamination. We wish to demonstrate that $\mbox{D}$-BETEL is resistant to presence of mild perturbations in the data generating mechanism and can adequately describe the above set up with a bivariate normal centering, without resorting to more complex centering distributions.  We shall showcase all our tools in action on this simple example, and skip some of these details in later sections. Throughout this example, for the purposes of model comparison via marginal likelihood, we follow the approach in \citet{doi:10.1198/016214501750332848} to approximate the log marginal density $\log m(x\mid  \mb{M})$ of a model $\mb{M}$ via
\begin{equation*}
\log m(x\mid \mb{M}) = \log f(x\mid \mb{M},\theta^{*}) + \log \pi(\theta^{*}\mid \mb{M}) - \log \pi(\theta^{*}\mid x,\mb{M}), \end{equation*}
where $\log f(x\mid \mb{M},\theta^{*})$  and $\log \pi(\theta^{*}\mid \mb{M})$ are respectively  the log-likelihood and log prior of the model $\mb{M}$ at $\theta^{*}$, preferably  a high-density point. 

We generate data from a bivariate skew-normal distribution with varying value of skewness parameter $\alpha \neq (0, 0)$. We choose sample sizes $n \in \{100,\ 200,\ 300,\ 500\}$; and set $\alpha= (2.5, 2.5)^\T,\, (3.0, 3.0)^\T,\ (3.5, 3.5)^\T$ -- giving us $12$ simulation set-ups in total. First, we compare the following two fully parametric models: (i) $\mb{M}_{1}$, which models the data as independent draws from $\mbox{N}_2(\mu, \Sigma)$, and imposes a diffuse $\mbox{N}_2(0, 10^3 \mbox{I}_2)$ prior on $\mu$ and $\mbox{Wishart}_2(\nu_0, V_0)$ prior on $\Sigma^{-1}$, independently. (ii) $\mb{M}_{2}$, which used a mixture normal data model $\omega\, \mbox{N}_2(\mu_1, \Sigma_{1}) + (1- \omega)\, \mbox{N}_2(\mu_2, \Sigma_{2})$; and imposes independent diffuse $\mbox{N}_2(0, 10^3 \mbox{I}_2)$ priors on $\mu_1$, $\mu_2$, an $\mbox{U}(0, 1)$ prior on $\omega$, and independent $\mbox{Wishart}_2(\nu_0, V_0)$ priors on $\Sigma_{1}^{-1}$ and $\Sigma_{2}^{-1}$. To explore the high-density neighborhoods of the posterior distributions,  we use coordinate-wise Metropolis--Hastings updates. For smaller sample sizes, the simpler model $\mb{M}_{1}$ provides higher marginal likelihood compared to $\mb{M}_{2}$. However, as the sample size grows, the more complex model $\mb{M}_{2}$ predictably starts being preferred; refer to Figure \ref{clustering_results} which plots the posterior model probability of $\mb{M}_{1}$ as a function of sample size.
Next, we consider the $\mbox{D}$-BETEL counterparts of $\mb{M}_{1}$ and $\mb{M}_{2}$, which we refer to as $\mb{M}_{1}^{\star}$ and $\mb{M}_{2}^{\star}$ respectively, with $\mb{M}_{1}^{\star}$ using a single normal distribution $\mbox{N}_2(\mu, \Sigma)$ as the centering distribution, and $\mb{M}_{2}^{\star}$ centered around $\omega \mbox{N}_2(\mu_1, \Sigma_{1}) + (1- \omega) \mbox{N}_2(\mu_2, \Sigma_{2})$.
We use same the prior specification \& MH sampling scheme as before.

First, we showcase our data driven approach to tune the hyper-parameter $\varepsilon$ for both $\mb{M}_{1}^{\star}$ and $\mb{M}_{2}^{\star}$. Figures \ref{fig:hyperparameter_set1} and \ref{fig:hyperparameter_set2} present plots for $\ELPD_{\varepsilon}$, $\mbox{SE}(\ELPD_{\varepsilon})$ and $\kappa$, defined in Section \ref{ssec:npBayes}, as functions of $\log\varepsilon$ for two  particular combination of $(n,\alpha)$ values. We considered a grid of $\varepsilon$ values over powers of $2$ and then use a finer grid in the interval where $\ELPD_{\varepsilon}$ undergoes steep change. For sufficiently large value of $\varepsilon$, the distance based constraint practically becomes inactive, and consequently $\ELPD_{\varepsilon}$ plateaus out  and $\mbox{SE}(\ELPD_{\varepsilon})\downarrow 0$.  Finally, we obtain the $\mbox{D}$-BETEL based parameter estimates $\hat{\theta}_{\rm MA}$ as delineated in Section \ref{ssec:npBayes}.   Although $\hat{\theta}_{\rm MA}$ in equation \eqref{eqn:ma} is calculated via an weighted average, in practice $\hat{\theta}_{\rm MA}$  typically degenerates to  estimates corresponding to a handful of values  of $\varepsilon$, as apparent in the plot of $\kappa$ as a function of $\log \varepsilon$ in Figures  \ref{fig:hyperparameter_set1}, \ref{fig:hyperparameter_set2}. We observe similar pattern for the remaining combinations of $(n,\alpha)$ values, and refrain from presenting them here in order to avoid repetitiveness.

Figure \ref{clustering_results} presents the posterior  probability of selecting the simpler model with only one bivariate normal component under the standard posterior, a fractional posterior with varying temperature parameters, and $\mbox{D}$-BETEL.
For the standard posterior, the posterior probability of selecting the simpler model $\mb{M}_{1}$ drop below $0.5$  as sample size increases. 
On the contrary, $\mbox{D}$-BETEL and fractional posterior \citep{doi:10.1080/01621459.2018.1469995} with small temperature parameter is more resistant towards presence of mild skewness in the data generating mechanism, and still prefer the simpler model  across the sample sizes we considered. However, unless the temperature parameter of the fractional posterior is chosen to be appropriately small,  it cannot reliably estimate the number of components in finite mixture models, under mild model mis-specification \citep{cai2}.

\begin{figure}[!htb]
    \centering
    \includegraphics[width=18cm, height = 4.5cm]{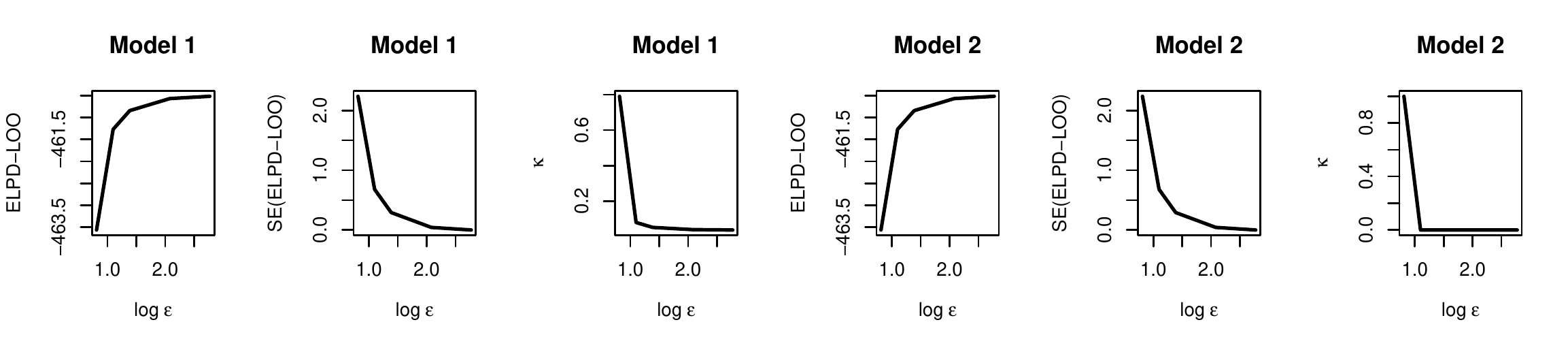} 
    \caption{\emph{\textbf{Hyper-parameter tuning for model based clustering with sample size $\mathbf{n=100}$, skewness parameter $\mathbf{\alpha = (3.5, 3.5)^\T}.$}  $\ELPD_{\varepsilon}$ gradually plateaus out  and $\mbox{SE}(\ELPD_{\varepsilon})\downarrow 0$ as $\log\varepsilon\uparrow$ for both the models. Consequently, weights $\kappa$ corresponding to a handful of $\varepsilon$ values contribute meaningfully to the weighted sum in $\hat{\theta}_{\rm MA}$ and rest are $\approx 0$. }\label{fig:hyperparameter_set1}}
\end{figure}
\begin{figure}[!htb]
    \centering 
    \includegraphics[width=18cm, height = 4.5cm]{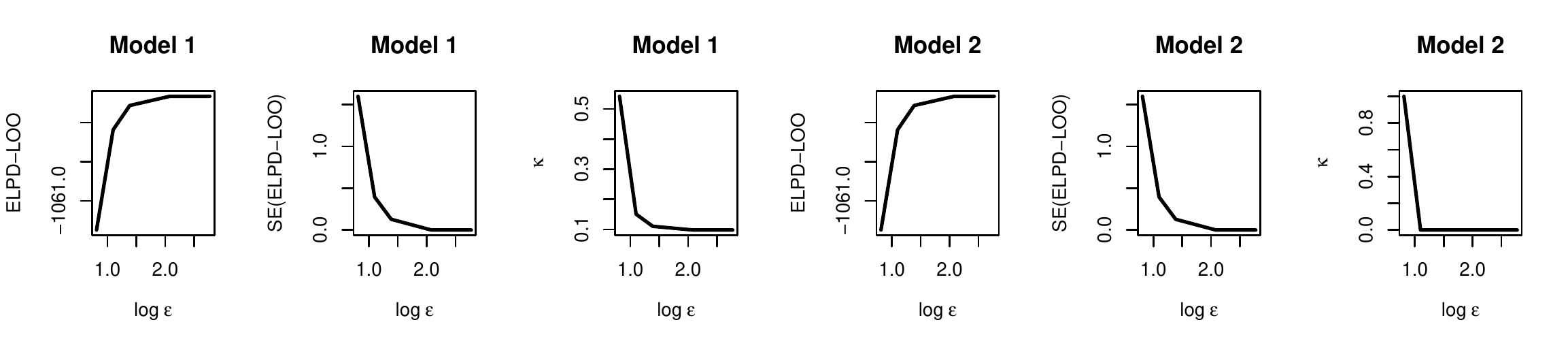} 
    \caption{\emph{\textbf{Hyper-parameter tuning for model based clustering with sample size $\mathbf{n=200}$, skewness parameter $\mathbf{\alpha = (2.5, 2.5)^\T}.$} $\ELPD_{\varepsilon}$ gradually plateaus out  and $\mbox{SE}(\ELPD_{\varepsilon})\downarrow 0$ as $\log\varepsilon\uparrow$ for both the models. Consequently, weights $\kappa$ corresponding to a handful of $\varepsilon$ values contribute meaningfully to the weighted sum in $\hat{\theta}_{\rm MA}$ and rest are $\approx 0$.}
    \label{fig:hyperparameter_set2}}
\end{figure}

\begin{figure}[!htb]
    \centering
    \includegraphics[width=17cm, height = 6cm]{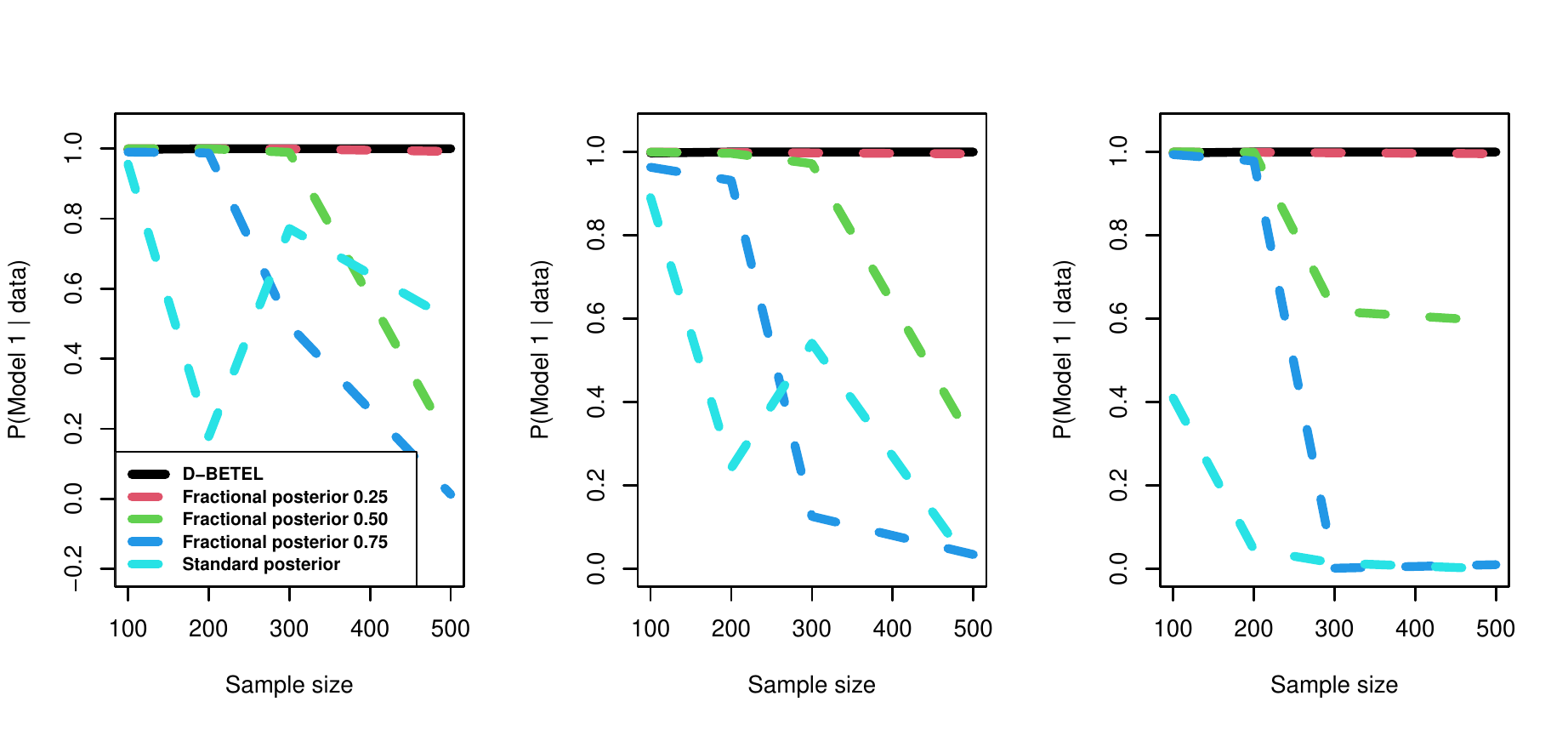} 
    \caption{\emph{\textbf{Model based clustering.} We are comparing the Bayes factor for selecting the simpler model via $\mbox{D}$-BETEL, the standard posterior, and the fractional posterior \citep{doi:10.1080/01621459.2018.1469995} with different temperatures, across varying values of skewness parameter $\alpha$ of the generating skew normal distribution and sample sizes. The left panel is for $\alpha= (2.5, 2.5)^\T$, the middle panel is for $\alpha= (3, 3)^\T$, and the right panel is for $\alpha= (3.5, 3.5)^\T$.  Unlike the standard posterior,  $\mbox{D}$-BETEL and fractional posterior with low temperature still prefer the simpler model across the sample sizes.} 
    \label{clustering_results}}
\end{figure}

\subsection{Generalised linear regression}\label{ssec:gen_reg}
Suppose we observe data  $\{(y_i,x_i) \in \mb{R}\times\mb{R}^{d}\}_{i=1}^n$ on a response variable $y$ and  covariates $x$ for $n$ individuals.  In generalised linear regression set up, we model the response by an exponential family distribution:
\begin{equation*}
 f(y_i\mid\theta_i, \phi) = \exp\bigg[\ \frac{y_i \theta_i - b(\theta_i)}{a(\phi)} + c(y_i, \phi)\bigg]   
\end{equation*}
where $a(\cdot),\ b(\cdot),\ c(\cdot)$ are known functions such that $m_i = b^\prime(\theta_i) ,\ \sigma_{i}^2  =\phi b^{\prime\prime}(\theta_i)$ are respectively the mean and variance of the  distribution, and there exists
a one-to-one continuously differentiable link function  $g(\cdot)$ such that $g^{-1}(x_{i}^\T\beta) = b^\prime(\theta_i) $.   The log-likelihood of the parameter of interest $\beta$ is 
\begin{equation*}
  l(\beta\mid x,y)   = \sum_{i=1}^n l_i(\beta\mid x_i, y_i) = \sum_{i=1}^n \bigg[ \frac{y_i \theta_i - b(\theta_i)}{a(\phi)}+ \log c(y_i,\phi)\bigg]  
\end{equation*}
where $\theta_i$ is a function of $m_i$. The corresponding Fisher's score function  $S = (S_{0},\ S_1,..,\ S_d)^\T$:
\begin{equation*}
S_j = \frac{\partial l}{\partial \beta_j}
= \sum_{i = 1}^n \bigg[ \frac{(y_i - m_i)}{a(\phi)} \frac{1}{V_i} \frac{\partial m_i}{\partial \beta_j}  \bigg] = 0 \end{equation*}
with $V_i =  \frac{\partial m_i}{\partial \theta_i} = b^{\prime\prime}(\theta_i)$. For simplicity of exposition, we express $S = \sum_{i = 1}^n \eta_{i}$ where  $\eta_{ij} = \frac{\partial l_i}{\partial \beta_j},\eta_i = (\eta_{i0},\ \eta_{i1}\ ,...,\ \eta_{id})^{\T},\ i = 1,\ldots,n; \ j  =0,\ldots,d$. The score statistic $S$ is asymptotically normal with mean 0 \citep{Haynes2013} and $n_i$ captures the deviation from 0 for the $i$-th observation. With that intuition, to conduct robust Bayesian inference on $\beta$,  we posit $\mbox{D}$-BETEL  on $\{\eta_i\}_{i=1}^n$ with a finite mixture of $(d+1)$-variate normal densities i.e $\sum_{j=1}^K \pi_j \ \mbox{N}(\mu_j, \Sigma_j)$ such that   $\sum_{j=1}^K \pi_j \mu_j = 0$ as our choice for centering parametric guess. 

We generate data from a Poisson random effects model,
\begin{equation*}
 \log(m_i) = \beta_0 + \beta_1 x_i + h_i ; \quad  y_i \sim \mbox{Poisson}(m_i),\ i = 1,\ldots,n,   
\end{equation*}
where $\beta_0 = 5, \ \beta_1 = 1$, \ $x_i\sim \mbox{N}(5, 1)$ and $h_i \sim (1-p) \ \mathbf{1}\{0\} + p \ \mbox{N}(1, 0.1^2)$, in order  to  mimic a situation where a small proportion of outliers are present in the data-set. We place flat $\mbox{N}(0, 100^2)$ priors on $\log\sigma^2_{1}(\beta)$, $\log\sigma^2_{2}(\beta)$, $\beta_0$ and $\beta_1$ and a $\mbox{U}(-1, 1)$ prior on $\rho(\beta)$, independently. We devise a Metropolis--Hastings algorithm to update $g(\beta) = (\log\sigma^2_{1}(\beta),\ \log\sigma^2_{2}(\beta),\ \rho(\beta), \beta_0,\ \beta_1)^\T$ at $(t+1)$th iteration using the 1-step proposal scheme: 
\begin{equation}
 g^{(t+1)}(\beta) \sim \mbox{N}_5\big(g^{(t)}(\beta),\  k\ \nabla g(\hat{\beta}_{m})\ \mbox{I}^{-1}(\hat{\beta}_{m})\ \nabla^{T} g(\hat{\beta}_{m})\big)   
\end{equation}
where $\mbox{I}(\hat{\beta}_m))$ is the Fisher's information matrix evaluated at the maximum likelihood estimator $\hat{\beta}_{m}$ of $\beta$, and $k$ is a tuning parameter. 

In Table \ref{table:glm_1}, we expand on the performance of $\mbox{D}$-BETEL for varying  extent of perturbations in the data generating mechanism with sample size $n=100$, relative to popular practical approaches. In particular, we compare  $\mbox{D}$-BETEL against a standard posterior as well as Bayesian ETEL \citep{Chib2018} with the estimating equations set to  $\mbox{E}[\partial \log l(\beta\mid X, Y)/ \partial \beta]=0$ to infer about the parameter $\beta$. The latter approach can be regarded as a variant of $\mbox{D}$-BETEL, with a stricter moment-type constraint.
From Table \ref{table:glm_1}, we report the $L_1$ error of posterior means, length of the HPD sets and associated coverage probabilities (within braces) for $\mbox{D}$-BETEL and competing approaches. It is evident that $\mbox{D}$-BETEL is more resistant towards presence of outliers when compared with the standard Bayesian and MCM based approaches, across all the sample sizes and proportion of contamination in the data sets that we considered. Also, $\mbox{D}$-BETEL provides slightly wider credible sets compared to the standard posterior based approach, while maintaining high coverage probability. Additional simulation results for $n=250, 500$ are presented in Section \ref{sup:glm} of the supplement.

\begin{table}[!htb]
  \caption{\emph{\textbf{Generalised linear regression (Poisson regression).} Here the \textbf{sample size $n$ is $100$}. We compare standard posterior yielded from the fully parametric model, moment conditional model (MCM) based on the maximum likelihood equations,
  and $\mbox{D}$-BETEL based parameter estimates over 50 replicated simulations with proportion of outlier $p=0.10, 0.12, 0.15$. 
$\mbox{D}$-BETEL is more resistant towards presence of outliers all values of $p$ considered, however it provides slightly wider $95\%$ credible sets while maintaining the high coverage probability. Additional simulation results for $n=250, 500$ is presented in the supplement \ref{sup:glm}.}}\label{table:glm_1}
  \centering
  \begin{tabular}{lllllllllll}
    \toprule
    \cmidrule(r){1-2}
     &     &      \multicolumn{2}{c}{$\mbox{D}$-BETEL} & \multicolumn{2}{c}{Standard posterior} & \multicolumn{2}{c}{MCM} \\
    \midrule
    p & $\theta$     & $||\theta - \hat{\theta}||_1$     & HPD  & $||\theta - \hat{\theta}||_1$     & HPD   & $||\theta - \hat{\theta}||_1$     & HPD \\
    \midrule
    0.10 & $\beta_0$ & 0.02 & 0.18 (1.00)  & 0.35   & 0.12 (0.20)& 0.41 & 0.22 (0.10) \\
        & $\beta_1$ & 0.01 & 0.02 (1.00)  & 0.07 &0.02 (0.20) &0.06 & 0.04 (0.22)\\
    \midrule
    0.12 & $\beta_0$ &0.02 &0.22 (1.00) &0.31& 0.12 (0.35) & 0.47 & 0.35 (0.14) \\
        & $\beta_1$ & 0.01 &0.04 (1.00) &0.08& 0.02 (0.59) & 0.06 & 0.06 (0.32) \\
    \midrule
    0.15 & $\beta_0$ & 0.06& 0.22 (0.94)  &0.47  & 0.11 (0.00) & 0.54 &0.23 (0.06)\\
        & $\beta_1$ & 0.01 &0.04 (0.94)  &0.06& 0.02 (0.00) & 0.09&0.05 (0.18)\\
    
    \bottomrule
  \end{tabular}
\end{table}

\section{Algorithmic fairness: demographic parity}\label{ssec:algorithmic_fairness}
In this section, we present applications of the dual formulation of D-BETEL presented in equations \eqref{eqn:dbetel:alt}-\eqref{wbetel_dual} in the context of ensuring demographic parity in machine learning algorithms. Machine learning algorithms are increasingly used in critical decisions affecting human lives including but not limited to credit, employment, education, and criminal justice, and hence fairness has emerged as a primary pillar of modern machine learning research in recent years. Discrimination refers to unfavorable treatment of entities due to their membership to certain demographic groups that are determined by the attributes protected by law, called protected attributes. The goal of demographic parity or statistical parity \citep{90450a4b5b49471b8111fc88355f2e7f, gajane2018formalizing} in machine learning is to design algorithms that yield fair inferences devoid of discrimination due to membership to certain demographic groups determined by a protected attribute. 

First, we introduce the mathematical formalization of the notions of demographic parity. To that end,  we assume that $X$ denotes the feature vector used for predictions, $A$ is the protected attribute with two levels $\{S,T\}$, and  $Y$ is the true response. Parity constraints are phrased in terms of the distribution over $(X, A, Y)$. 
Two definitions are in order.
\begin{definition}[Demographic parity, \citep{90450a4b5b49471b8111fc88355f2e7f}]\label{def:DP}
A predictor $h$  satisfies demographic parity under the distribution over  $(X, A, Y)$ if $h(X)$ is independent of the protected attribute $A$, i.e ,
\begin{align*}
    \mb{P}[h(X) \geq z \mid A = S] = \mb{P}[h(X) \geq z \mid A = T] = \mb{P}[h(X) \geq z ]\ \text{for all} \ z.
\end{align*}
\end{definition}

\begin{definition}[Demographic parity in expectation, \citep{90450a4b5b49471b8111fc88355f2e7f}
]\label{def:DP_exp}
A predictor $h$  satisfies demographic parity under the distribution over  $(X, A, Y)$ if $h(X)$ is independent of the protected attribute $A$, i.e ,
\begin{align*}
    \mb{E}[h(X) \mid A = S] = \mb{E}[h(X)  \mid A = T] = \mb{E}[h(X) ].
\end{align*}
\end{definition}
It is perhaps instructive to examine the notion of demographic parity from the viewpoint of causal inference. In particular, \citet{Nabi2018} shows how statistical parity may be unable to reveal an underlying causal effect that is discriminatory. For simplicity in exposition, we concentrate on binary response variables for a while and note down the differences between the notions of demographic parity in fairness and the notion of average treatment (protected attribute) effect in the context of causal inference. Demographic parity merely seeks that the conditional probabilities $ \mb{P}[h(X) = 1 \mid A = S],\ \mb{P}[h(X)=0 \mid A = T]$ are equal. On the other hand, in the potential outcome framework of causal inference, we denote the potential outcome $h(x)$  for an individual as  $h_{(k)}(x)$, had the treatment (protected attribute) $A$ been assigned the value $k\in\{S,T\}$. Then, to ensure that there is no average treatment effect, we seek $\mb{E}[h_{(S)}(X) ]-\mb{E}[h_{(T)}(X)]=0$, or equivalently  $\mb{P}[h_{(S)}(X) = 1]=\mb{P}[h_{(T)}(X) = 1]$. For observational studies, to carry out inference in this potential outcomes framework, it is customary to assume conditional ignorability -- which specifies a set of variables, called confounders, given which the potential outcome becomes independent of the assigned treatment (protected attribute).
In essence, ignorability implies that we can express the probability of a potential outcome  conditional on $E$, in terms of probability of the observed outcome conditional on both $E$ and  $C$, i.e $\mb{P}[h_{(k)}(X)\mid E] = \mb{P}[h(X)\mid E, A = k], \ k\in\{S, T\}$. Now, to ensure that there is no average treatment effect, we need
\begin{align*}
    \int\mb{P}[h(X)\mid E, A = S]\ \pi(E) dE = \int\mb{P}[h(X)\mid E, A = T]\ \pi(E) dE.
\end{align*}
However, deciding upon possible confounders $E$ in fairness problems is often not straightforward. Innovative solutions have been proposed in the literature. For example,  \citet{https://doi.org/10.1111/rssa.12613} proposed the construction of an $\Tilde{X}$, a reconstructed version of the data matrix that is orthogonal to the vector of protected attributes with minimal information loss i.e $||X-\Tilde{X}||_{\rm F}$ is minimum subject to the constraint $\langle \Tilde{X}, A\rangle=0$, where $\langle \cdot, \cdot\rangle$ stands for inner product and $||\cdot||_{\rm F}$ denotes the Frobenius norm of matrices. It is straightforward to see that non-linear dependencies on the vector of $A$ could still be present in the transformed matrix $\Tilde{X}$, but  $X-\Tilde{X}$ can serve as a confounder in numerous practical purposes, e.g in models linear in covariates. This presents an interesting alleyway for future inquiry and is well beyond the scope of this article.  For the rest of our presentation, we shall build on the basic notion of statistical/demographic parity.

Although the notions in Definitions  \ref{def:DP} and \ref{def:DP_exp}  coincide when we work with binary responses, the latter may be amenable to simple computational algorithms \citep{e21080741} compared to the general definition. However, the notion of demographic parity in expectation is extremely prohibitive since one cannot control the predictor $h$ over its entire domain, and depending on the application of interest we may be solely interested in controlling the tails of the predictor \citep{yang2019fair}. Taking refuge to our semi-parametric inference framework, we offer a flexible as well as a computationally feasible compromise between the notions in Definitions \ref{def:DP} and \ref{def:DP_exp}. To that end, we introduce the notion of demographic parity in the Wasserstein metric next.
\begin{definition}[Demographic parity in Wasserstein metric]
A predictor $h$ achieves demographic parity in Wasserstein metric  with bias $\varepsilon$, under the distribution over  $(X, A, Y)$  if 
\begin{align*}
  W^{2}_{\rm AR}\big[F_{h_{S}}, F_{h_{T}}\big] \leq \varepsilon,  
\end{align*}
where $F_{h_{k}}$ is the cdf of $h$ under sub-population $k$ i.e $h(X)\mid A = k, k \in\{S,T\}$.
\end{definition}

Next, we shall see demographic parity in the Wasserstein metric in action. Suppose we have data $(x_i, y_i, a_i)\in \mb{R}^d\times\mb{R}\times\{S,T\}$ for  $n$ individuals on $p$-dimensional covariate $x$, univariate continuous response $y$, and levels of the protected attribute $a\in\{S,T\}$.
For the sake of simplicity in exposition, we also assume that $a_i = S,\ i=1,\ldots, n_S$  and $a_i = T,\ i=n_S+ 1,\ldots, n$ where $n = n_S + n_T$. Next, we posit a predictive model
\begin{align*}
    y_i = h(x_i, \theta_{(a_i)}) + e_i, \ e_i \overset{i.i.d}\sim \mbox{N}(0, \sigma^2), \ i =1, \ldots, n,
\end{align*}
where $h$ is potentially non-linear, and $(\theta_{(S)}, \theta_{(T)}) $ is the model parameter of interest to be estimated under the demographic parity constraint  $W^{2}_{\rm AR}\big[F_{h_{S}}, F_{h_{T}}\big] \leq \varepsilon$. In particular, we consider the empirical cdf of $h$ under sub-population $S$,  $F_{h_{S}}= 1/n_S\sum_{i = 1}^{n_S} \delta\{h(x_{iS})\}$; and a weighted empirical cdf of $h$ under sub-population $T$, $F_{h_{T}}= \sum_{i = n_S + 1}^{n} w_{i}\ \delta\{h(x_{iT})\}$. Here $\delta(\cdot)$ is the Dirac delta measure. The goal is to infer about $(\theta_{(S)}, \theta_{(T)}, w) $ ensuring that demographic parity constraint i.e $F_{h_{S}},F_{h_{T}}$ are close  with respect $W^{2}_{\rm AR}$, at the same time the extent of re-weighting in $F_{h_{T}}$ is minimal i.e the entropy $-\sum_{i=n_S + 1}^n w_i\log w_i$ is close to the maximal entropy $\log n_T$.  A related idea in \citet{pmlr-v115-jiang20a} deals with Wasserstein-1 constrained fair classification problems, but our approach of additionally re-weighting the observations offers more flexibility with possible ramifications in studying fairness in mis-specified models.  We achieve our inferential goal via an \emph{in-model} approach, based on the dual formulation of D-BETEL in equations \eqref{eqn:dbetel:alt}-\eqref{wbetel_dual}:
{\smaller
\begin{equation}\label{wgf_simultenous}
  \max_{w,\ \theta_{(S)},\ \theta_{(T)},\ \sigma^2} \bigg\{ - \frac{1}{n_S}\sum_{i=1}^{n_S} l_{i}(\theta_{(S)}\mid x_i) - \sum_{i=n_s + 1}^n w_i\ l_{i}(\theta_{(T)}\mid x_i)  - (1-\lambda^{\star}) W_{\rm AR}^{2}\big[F_{h_{S}}, F_{h_{T}}\big] - \lambda^{\star} \sum_{i=n_s + 1}^{n} w_i\log w_i \bigg\}
  \end{equation}
}
where $\sum_{i=n_s + 1}^{n} w_i = 1$ and $l_{i}(\theta_{(a_i)}\mid x_i) = (y_i - h(x_i, \theta_{(a_i)}))^2/2\sigma^2, \ i = 1,\ldots,n$. For a resulting re-weighting vector $w^{\star} = (w^{\star}_{n_S + 1},\ldots,w^{\star}_{n})^{\prime}$, we can obtain fair prediction at a new $x\in T$ via a weighted kernel density estimate at $x$.

As a competitor to the \emph{in-model} scheme described in \eqref{wgf_simultenous}, we introduce a \emph{two-step} procedure: \\
\textbf{Step 1:} We obtain model parameter estimates by
\begin{equation}\label{wgf_twostep_1}
 (\hat{\theta}_{(S)},\ \hat{\theta}_{(T)} ,\ \hat{\sigma}^2) =  \argmax_{\theta_{(S)},\ \theta_{(T)} ,\ \sigma^2} \bigg\{ - \frac{1}{n_S}\sum_{i=1}^{n_S} l_{i}(\theta_{(S)}\mid x_i) - \frac{1}{n_T}\sum_{i=n_s + 1}^n l_{i}(\theta_{(T)}\mid x_i)   \bigg\}
\end{equation}
followed by a post-processing step at $(\hat{\theta}_{(S)},\ \hat{\theta}_{(T)} ,\ \hat{\sigma}^2)$ to obtain $w^{\star}$  \\
\textbf{Step 2:}
\begin{equation}\label{wgf_twostep_2}
 w^{\star} = \argmax_{w} \big\{ -(1-\lambda^{\star})  \  W_{\rm AR}^{2}\big[F_{h_{S}}, F_{h_{T}}\big] - \lambda^{\star}  \sum_{i=n_s +1}^n w_i\log w_i \big\}.  
\end{equation}
Next, we shall assess the relative performance of the \emph{in-model} scheme in \eqref{wgf_simultenous} and  the \emph{two-step} scheme defined in \eqref{wgf_twostep_1}--\eqref{wgf_twostep_2} under two real data examples.

\begin{figure}[!htb]
    \centering
    \subfloat
    {{\includegraphics[width=7cm, height = 3.5cm]{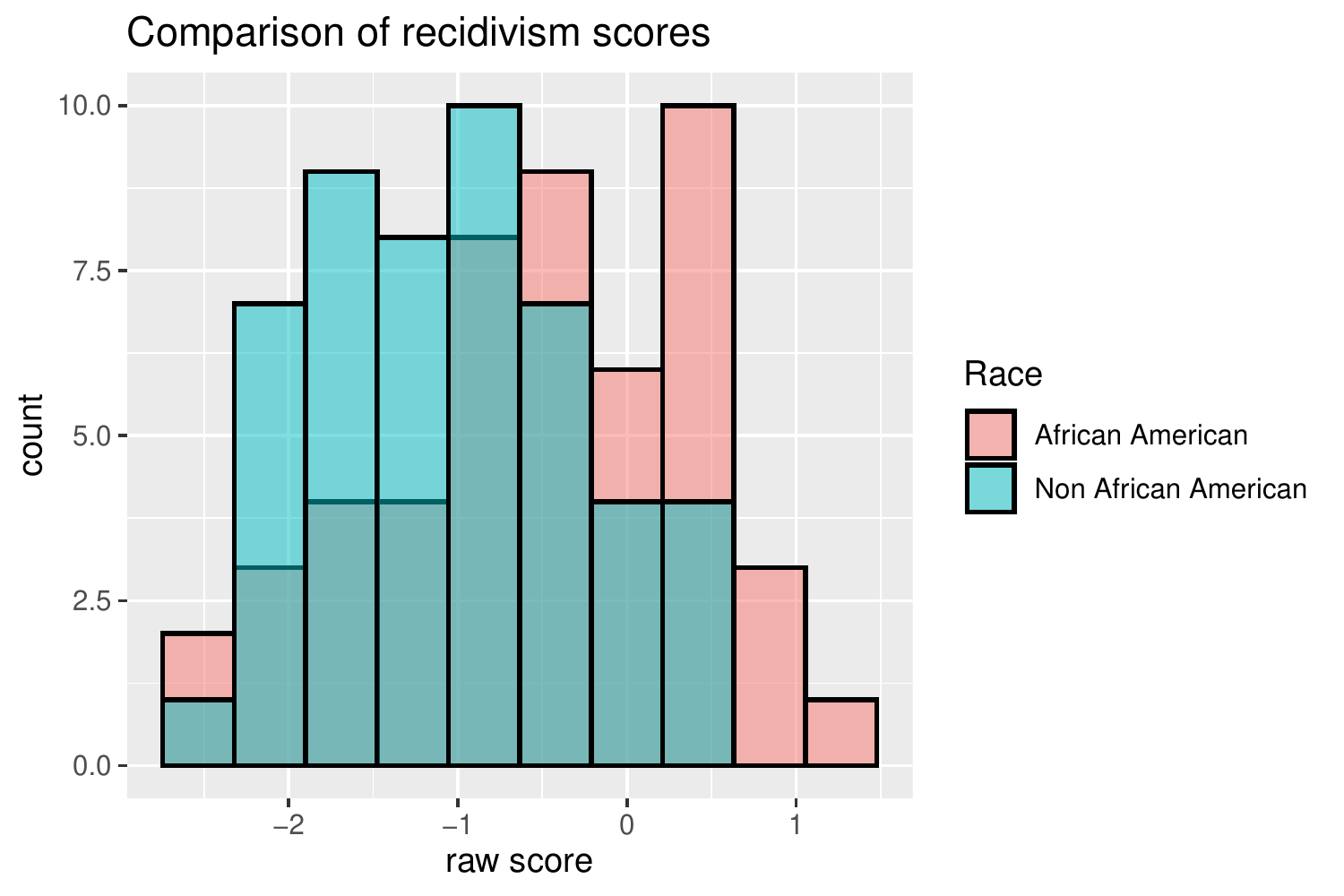}}}%
    \quad
    \subfloat
    {{\includegraphics[width=6cm, height = 3.5cm]{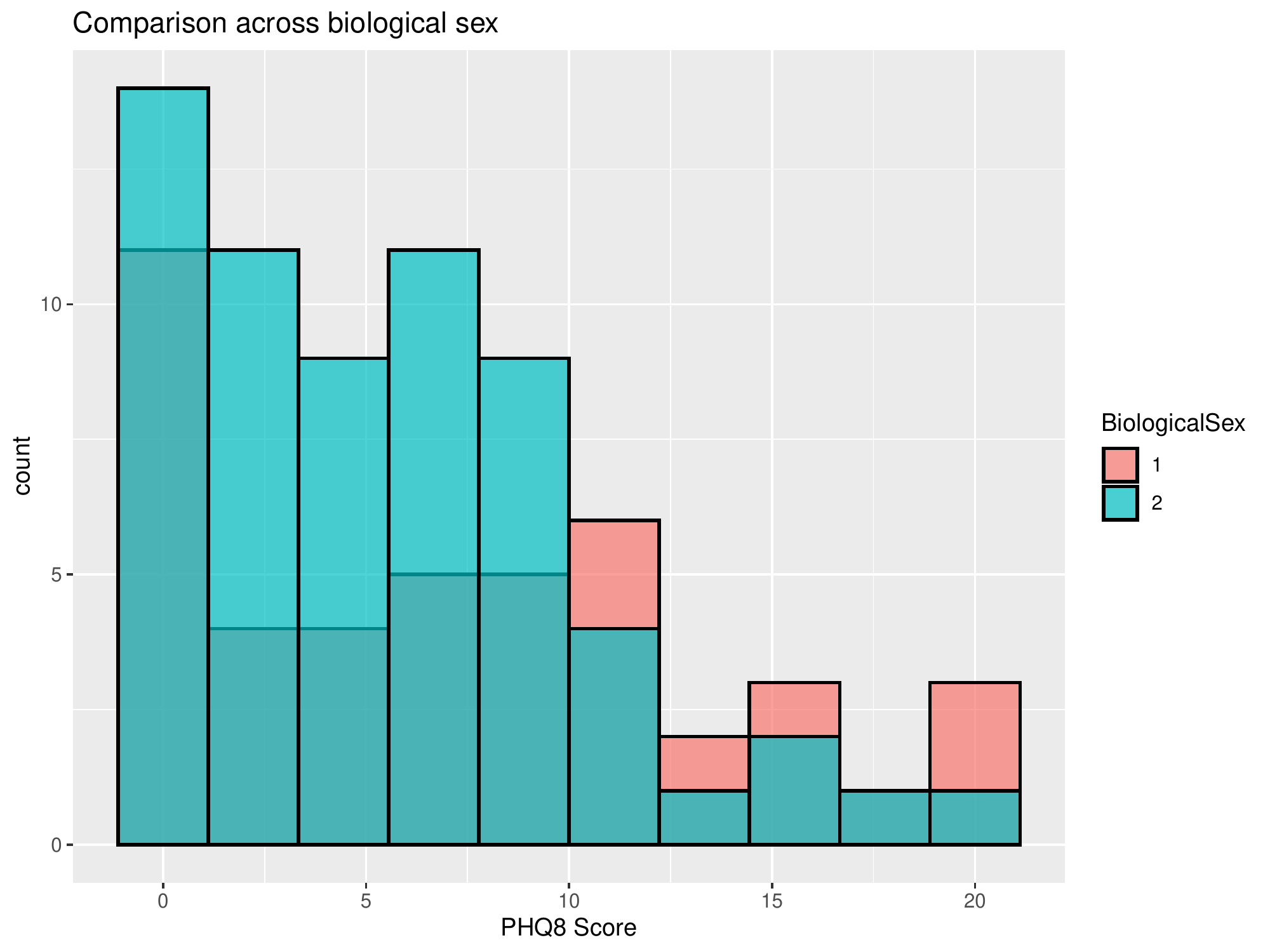} }}%
    \caption{\emph{\textbf{(a) COMPAS Dataset.} The histograms of raw recidivism score for African and non African-Americans show a clear discrepancy.\label{diag:compas_data} \textbf{(b) Distress Analysis Interview Corpus.} The histograms of raw PHQ-8  for the two biological genders show a clear discrepancy.}\label{diag:daic}} 
\end{figure}
\subsection{COMPAS recidivism  data analysis}
In this sub-section, we consider a case study on algorithmic criminal risk assessment. We shall focus here on the popular compas dataset \citep{https://doi.org/10.1111/rssa.12613} that includes detailed information on criminal history for the defendants in Broward County, Florida, freely available from the \href{https://www.propublica.org/datastore/dataset/compas-recidivism-risk-score-data-and-analysis}{\textcolor{purple}{propublica}} website.  For each individual, several features on criminal history are available, such as the number of past felonies, misdemeanors, and juvenile offenses; additional demographic information includes the sex, age, and ethnic group of each defendant.  We focus on predicting two-year recidivism score $y$ (continuous) as a function of the defendant’s demographic
information except for race and criminal history $x$, while race (categorical) serves as a protected attribute. Algorithms for making such
predictions are routinely used in courtrooms to advise judges, and concerns about the fairness of such
tools with respect to the race of the defendants are raised. Therefore, it is of interest to develop novel methods to produce predictions while avoiding disparate treatment on the basis of the protected attribute race.

For simplicity of exposition, we only consider two levels for the protected attribute race, namely, African-American or non-African-American, and consider a sub-sample of the entire data set with $100$ defendants corresponding to each level of the protected attribute. As covariate, for each defendant, we consider demographic information -- sex (binary), age (continuous), marital status (categorical); and criminal status -- legal status (categorical), supervision level (categorical), custody status (categorical). We use linear regression (i.e $h$ is linear in the covariates) as our predictive model of choice; the methodology readily extends to more complicated models. The histograms of raw recidivism score for African-Americans versus non-African-Americans show a clear discrepancy (refer to Figure \ref{diag:compas_data}). We shall assess the relative performance of the \emph{in-model} scheme in \eqref{wgf_simultenous} and \emph{two-step} scheme in \eqref{wgf_twostep_1}--\eqref{wgf_twostep_2} in ensuring demographic parity with respect to the protected attribute race (refer to Figure \ref{diag:compas_plots}). When we fit the predictive model without any fairness constraint, the fitted empirical cumulative distribution functions corresponding to the two sub-populations
are widely different. Our \emph{in-model} scheme, as well as \emph{two-step} scheme significantly reduce the discrepancy owing to their in-built fairness-based regularization. As expected, the \emph{in-model} scheme provides slightly lower bias since it performs the two-step optimization simultaneously.

\begin{figure}[!htb]
    \centering
    {{\includegraphics[width=15cm, height = 4cm]{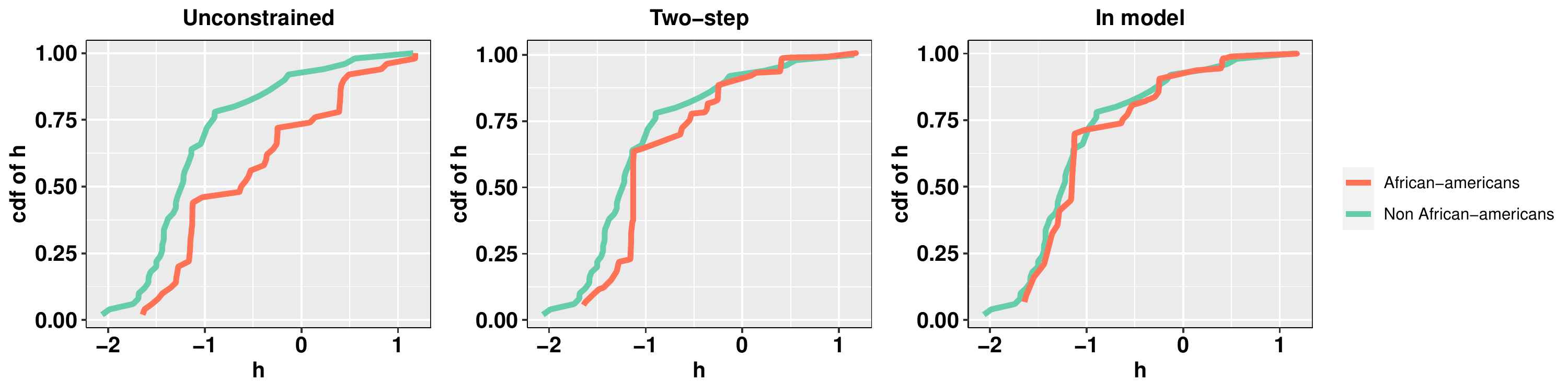} }}%
    
    \caption{\emph{\textbf{COMPAS dataset.} Empirical cdfs of  fitted $h$ for the two groups, with  no fairness constraint $( W_{AR} = 0.72)$, fair post-processing $( W_{AR} = 0.05)$, and fair model fitting with $( W_{AR}= 0.02)$ respectively at $\lambda^{\star} = 0$.}\label{diag:compas_plots}}
\end{figure}

\subsection{Distress Analysis Interview Corpus (DAIC)}
The Distress Analysis Interview Corpus (DAIC) is a multi-modal collection of semi-structured clinical interviews, available upon request from the \href{https://dcapswoz.ict.usc.edu/}{\textcolor{purple}{DAIC-WOZ}} website. Designed to simulate standard protocols for identifying people at risk for post-traumatic stress disorder (PTSD) and major depression, these interviews were collected as part of a larger effort to create a computer agent that interviews people and identifies verbal and nonverbal indicators of mental illness. Participants are drawn from two distinct populations living in the Greater Los Angeles metropolitan area – veterans of the U.S. armed forces and from the general public and are coded for depression, PTSD, and anxiety based on accepted psychiatric questionnaires. The corpus contains audio, video, and depth sensor (Microsoft Kinect) recordings of all the interviews, generated logs of the character’s speech and nonverbal behavior events, questionnaire data, and interview transcriptions. For further details on the data set, readers are advised to refer to \citet{gratch-etal-2014-distress}.

We are particularly interested in the PHQ-8 score that captures the severity of depression. The scores range from $0$ to $27$ with a score from $0-4$ considered none or minimal, $5-9$ mild, $10-14$ moderate, $15-19$ moderately severe, and $20-27$ severe. In this application, we work with this PHQ-8 (continuous response), biological gender (binary protected attribute), and $17$ derived audio features (continuous covariates) corresponding to the $n = 107$ subjects \ref{diag:daic}. The histograms of the PHQ-8 score for two biological genders show a clear discrepancy (refer to Figure \ref{diag:daic}). Therefore, we shall assess the relative performance of the \emph{in-model} scheme in \eqref{wgf_simultenous}) and \emph{two-step} scheme \eqref{wgf_twostep_1}--\eqref{wgf_twostep_2} in ensuring demographic parity with respect to biological gender (refer to Figure \ref{diag:daic_plots}).  As earlier, for the sake of simplicity of exposition, we use linear regression (i.e $h$ is linear in the covariates) as our predictive model of choice.  When we fit the predictive model without any fairness constraint, the fitted empirical cumulative distribution functions corresponding to the two biological genders are widely different. Our \emph{in model} scheme, as well as \emph{two-step} scheme significantly reduce the discrepancy owing to their in-built fairness-based regularization. As noted earlier, the \emph{in model} scheme provides lower bias since it performs the two-step optimization simultaneously.
\begin{figure}[!htb]
    \centering
    {{\includegraphics[width=15cm, height = 4cm]{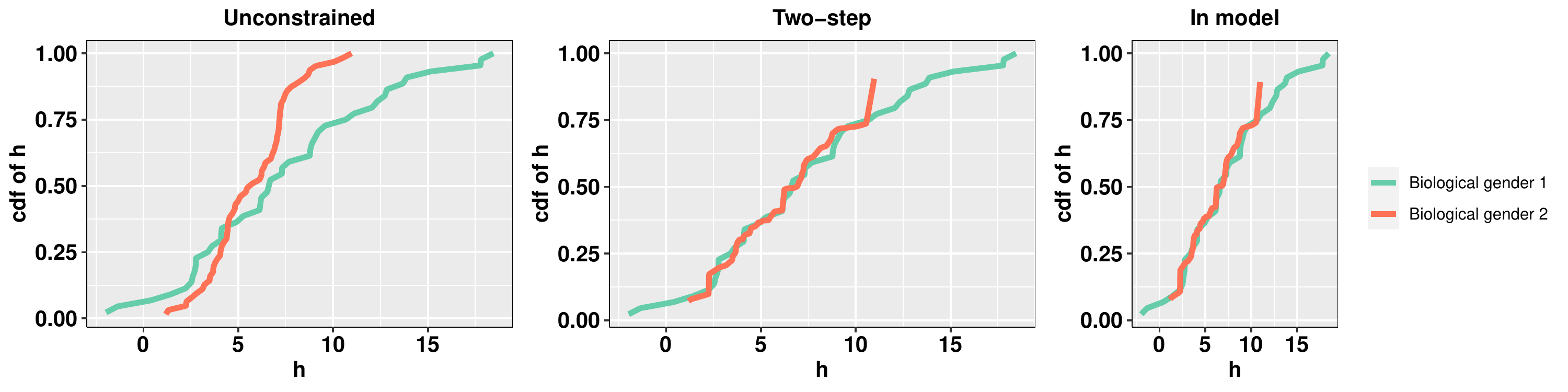} }}%
    
    \caption{\emph{\textbf{Distress Analysis Interview Corpus.} \emph{Empirical cdfs of  fitted $h$ for the two groups, with  no fairness constraint $( W_{AR} = 19.32)$, fair post-processing $( W_{AR} = 2.24)$, and fair model fitting with $( W_{AR}= 0.79$) respectively at $\lambda^{\star} = 0$. }}\label{diag:daic_plots}}
\end{figure}

\section{Discussion}
Generative probabilistic models are immensely popular in applications as they provide a general recipe for statistical inference using the maximum likelihood or Bayesian framework. However, it is also well understood that the resulting inference can crucially depend on the modeling assumptions. 
In this article, we introduced a flexible Bayesian semi-parametric modeling framework $\mbox{D}$-BETEL, and demonstrated it's utility to conduct robust inference under perturbations of the data-generating mechanism. $\mbox{D}$-BETEL is endowed with a fully data-driven hyper parameter tuning scheme, and enjoys a valid generative model interpretation, which is scarce in pseudo-likelihood based robust Bayesian methods. R scripts to reproduce the results presented in the article are available at \href{https://github.com/zovialpapai/D-BETEL}{\textcolor{purple}{zovialpapai/D-BETEL}}.

While semi-parametric in nature, a particularly attractive feature of $\mbox{D}$-BETEL is that the user only needs to specify a plausible family of probability models $F_\theta$ for the data along with a prior distribution for the parameter of interest $\theta$, and does not need to explicitly model departures from the parametric guess as is typical with nonparametric Bayesian techniques; all nuisance parameters are implicitly marginalized out and a marginal posterior for $\theta$ is returned. It remains possible to retrieve a discretized estimate of the generating distribution to allow a more fine-grained analysis of how the data departs from the parameteric guess. The proposed approach is also very general; while we have illustrated its usage for i.i.d. and independent non-i.i.d (i.n.i.d.) setups, extensions to broader classes of dependent data models should be straightforward. Studying theoretical properties of $\mbox{D}$-BETEL, especially second-order properties, is an interesting avenue for future work. 

Our framework can also be extended beyond the traditional statistical modeling setup to mitigate inherent biases in machine learning applications, We have offered an illustration in the context of algorithmic fairness, and we wish to investigate further applications in trustworthy AI, encompassing robustness, fairness \& differential privacy, in the future. We also believe the flexibility and the computational simplicity of our novel Wasserstein metric ANDREW may render itself useful in embedding of complex objects, e.g words, images, as probability distributions \citep{Jebara2004ProbabilityPK, https://doi.org/10.48550/arxiv.1412.6623} which has emerged as a popular application of optimal transport to machine learning problems.

\section*{Supplemental Document}
An online supplement contains proofs of the theorems stated in the main document along with auxiliary lemmas, and additional numerical results. 
\vspace{-0.2in}
\section*{Acknowledgement} Drs. Bhattacharya and Pati acknowledge NSF DMS-1916371 and NSF DMS-2210689 for partially funding the project. 

\bibliography{paper-ref}

\begin{thebibliography}{80}
\providecommand{\natexlab}[1]{#1}
\providecommand{\url}[1]{\texttt{#1}}
\expandafter\ifx\csname urlstyle\endcsname\relax
  \providecommand{\doi}[1]{doi: #1}\else
  \providecommand{\doi}{doi: \begingroup \urlstyle{rm}\Url}\fi

\bibitem[Agarwal et~al.(2019)Agarwal, Dud{\'i}k, and
  Wu]{90450a4b5b49471b8111fc88355f2e7f}
Alekh Agarwal, Miroslav Dud{\'i}k, and {Zhiwei Steven} Wu.
\newblock Fair regression: Quantitative definitions and reduction-based
  algorithms.
\newblock In \emph{36th International Conference on Machine Learning, ICML
  2019}, 36th International Conference on Machine Learning, ICML 2019, pages
  166--183. International Machine Learning Society (IMLS), January 2019.
\newblock 36th International Conference on Machine Learning, ICML 2019 ;
  Conference date: 09-06-2019 Through 15-06-2019.

\bibitem[Aliverti et~al.(2021)Aliverti, Lum, Johndrow, and
  Dunson]{https://doi.org/10.1111/rssa.12613}
Emanuele Aliverti, Kristian Lum, James~E. Johndrow, and David~B. Dunson.
\newblock Removing the influence of group variables in high-dimensional
  predictive modelling.
\newblock \emph{Journal of the Royal Statistical Society: Series A (Statistics
  in Society)}, 184\penalty0 (3):\penalty0 791--811, 2021.
\newblock \doi{https://doi.org/10.1111/rssa.12613}.
\newblock URL
  \url{https://rss.onlinelibrary.wiley.com/doi/abs/10.1111/rssa.12613}.

\bibitem[Antoniak(1974)]{10.1214/aos/1176342871}
Charles~E. Antoniak.
\newblock {Mixtures of Dirichlet Processes with Applications to {B}ayesian
  Nonparametric Problems}.
\newblock \emph{The Annals of Statistics}, 2\penalty0 (6):\penalty0 1152 --
  1174, 1974.
\newblock \doi{10.1214/aos/1176342871}.
\newblock URL \url{https://doi.org/10.1214/aos/1176342871}.

\bibitem[Avella-Medina(2021)]{doi:10.1080/01621459.2019.1700130}
Marco Avella-Medina.
\newblock Privacy-preserving parametric inference: A case for robust
  statistics.
\newblock \emph{Journal of the American Statistical Association}, 116\penalty0
  (534):\penalty0 969--983, 2021.
\newblock \doi{10.1080/01621459.2019.1700130}.
\newblock URL \url{https://doi.org/10.1080/01621459.2019.1700130}.

\bibitem[Azzalini and Valle(1996)]{SkewNormal}
A.~Azzalini and A.~DALLA Valle.
\newblock {The multivariate skew-normal distribution}.
\newblock \emph{Biometrika}, 83\penalty0 (4):\penalty0 715--726, 12 1996.
\newblock ISSN 0006-3444.
\newblock \doi{10.1093/biomet/83.4.715}.
\newblock URL \url{https://doi.org/10.1093/biomet/83.4.715}.

\bibitem[Becker et~al.(2011)Becker, Candès, and Grant]{TCFOCS}
Becker, Candès, and Grant.
\newblock Templates for convex cone problems with applications to sparse signal
  recovery.
\newblock \emph{Math. Prog. Comp}, 2011.

\bibitem[Bernton et~al.(2019)Bernton, Jacob, Gerber, and Robert]{Bernton_2019}
Espen Bernton, Pierre~E. Jacob, Mathieu Gerber, and Christian~P. Robert.
\newblock Approximate {B}ayesian computation with the {W}asserstein distance.
\newblock \emph{Journal of the Royal Statistical Society: Series B (Statistical
  Methodology)}, 81\penalty0 (2):\penalty0 235--269, feb 2019.
\newblock \doi{10.1111/rssb.12312}.
\newblock URL \url{https://doi.org/10.1111%2Frssb.12312}.

\bibitem[Bhatia et~al.(2017)Bhatia, Jain, and
  Lim]{https://doi.org/10.48550/arxiv.1712.01504}
Rajendra Bhatia, Tanvi Jain, and Yongdo Lim.
\newblock On the bures-{W}asserstein distance between positive definite
  matrices, 2017.
\newblock URL \url{https://arxiv.org/abs/1712.01504}.

\bibitem[Bion-Nadal and Talay(2019)]{BionNadal2019OnAW}
Jocelyne Bion-Nadal and Denis Talay.
\newblock On a {W}asserstein-type distance between solutions to stochastic
  differential equations.
\newblock \emph{The Annals of Applied Probability}, 2019.

\bibitem[Birgin and Martínez(2008)]{auglag2}
E.G. Birgin and J.M. Martínez.
\newblock Improving ultimate convergence of an augmented lagrangian method.
\newblock \emph{Optimization Methods and Software}, 23\penalty0 (2):\penalty0
  177--195, 2008.
\newblock \doi{10.1080/10556780701577730}.
\newblock URL \url{https://doi.org/10.1080/10556780701577730}.

\bibitem[Blasi and Walker(2013)]{10.2307/24310519}
Pierpaolo~De Blasi and Stephen~G. Walker.
\newblock Bayesian asymptotics with misspecified models.
\newblock \emph{Statistica Sinica}, 23\penalty0 (1):\penalty0 169--187, 2013.
\newblock ISSN 10170405, 19968507.
\newblock URL \url{http://www.jstor.org/stable/24310519}.

\bibitem[Cai et~al.(2020{\natexlab{a}})Cai, Campbell, and Broderick]{cai}
Diana Cai, Trevor Campbell, and Tamara Broderick.
\newblock Finite mixture models do not reliably learn the number of components,
  2020{\natexlab{a}}.
\newblock URL \url{https://arxiv.org/abs/2007.04470}.

\bibitem[Cai et~al.(2020{\natexlab{b}})Cai, Campbell, and Broderick]{cai2}
Diana Cai, Trevor Campbell, and Tamara Broderick.
\newblock Power posteriors do not reliably learn the number of components in a
  finite mixture, 2020{\natexlab{b}}.
\newblock URL \url{https://openreview.net/pdf?id=BRb4tLp6A3o}.

\bibitem[Cambanis et~al.(1981)Cambanis, Huang, and Simons]{Cambanis1981OnTT}
Stamatis Cambanis, Steel~T. Huang, and Gordon Simons.
\newblock On the theory of elliptically contoured distributions.
\newblock \emph{Journal of Multivariate Analysis}, 11:\penalty0 368--385, 1981.

\bibitem[Chen et~al.(2019)Chen, Yu, and
  Haskell]{doi:10.1080/02331934.2019.1655738}
Zhi Chen, Pengqian Yu, and William~B. Haskell.
\newblock Distributionally robust optimization for sequential decision-making.
\newblock \emph{Optimization}, 68\penalty0 (12):\penalty0 2397--2426, 2019.
\newblock \doi{10.1080/02331934.2019.1655738}.
\newblock URL \url{https://doi.org/10.1080/02331934.2019.1655738}.

\bibitem[Chernozhukov and
  Hong(2003)]{RePEc:eee:econom:v:115:y:2003:i:2:p:293-346}
Victor Chernozhukov and Han Hong.
\newblock {An MCMC approach to classical estimation}.
\newblock \emph{Journal of Econometrics}, 115\penalty0 (2):\penalty0 293--346,
  August 2003.
\newblock URL
  \url{https://ideas.repec.org/a/eee/econom/v115y2003i2p293-346.html}.

\bibitem[Chib and Jeliazkov(2001)]{doi:10.1198/016214501750332848}
Siddhartha Chib and Ivan Jeliazkov.
\newblock Marginal likelihood from the metropolis–hastings output.
\newblock \emph{Journal of the American Statistical Association}, 96\penalty0
  (453):\penalty0 270--281, 2001.
\newblock \doi{10.1198/016214501750332848}.
\newblock URL \url{https://doi.org/10.1198/016214501750332848}.

\bibitem[Chib et~al.(2018)Chib, Shin, and Simoni]{Chib2018}
Siddhartha Chib, Minchul Shin, and Anna Simoni.
\newblock {B}ayesian estimation and comparison of moment condition models.
\newblock \emph{Journal of the American Statistical Association}, 113\penalty0
  (524):\penalty0 1656--1668, 2018.
\newblock \doi{10.1080/01621459.2017.1358172}.
\newblock URL \url{https://doi.org/10.1080/01621459.2017.1358172}.

\bibitem[Chib et~al.(2021)Chib, Shin, and Simoni]{chib2021bayesian}
Siddhartha Chib, Minchul Shin, and Anna Simoni.
\newblock {B}ayesian estimation and comparison of conditional moment models,
  2021.
\newblock URL \url{https://arxiv.org/abs/2110.13531}.

\bibitem[Conn et~al.(1991)Conn, Gould, and Toint]{auglag1}
Andrew~R. Conn, Nicholas I.~M. Gould, and Philippe Toint.
\newblock A globally convergent augmented lagrangian algorithm for optimization
  with general constraints and simple bounds.
\newblock \emph{SIAM Journal on Numerical Analysis}, 28\penalty0 (2):\penalty0
  545--572, 1991.
\newblock \doi{10.1137/0728030}.
\newblock URL \url{https://doi.org/10.1137/0728030}.

\bibitem[Cuturi(2013)]{cuturi2013sinkhorn}
Marco Cuturi.
\newblock Sinkhorn distances: Lightspeed computation of optimal transport.
\newblock In C.J. Burges, L.~Bottou, M.~Welling, Z.~Ghahramani, and K.Q.
  Weinberger, editors, \emph{Advances in Neural Information Processing
  Systems}, volume~26. Curran Associates, Inc., 2013.
\newblock URL
  \url{https://proceedings.neurips.cc/paper/2013/file/af21d0c97db2e27e13572cbf59eb343d-Paper.pdf}.

\bibitem[Delon and Desolneux(2020)]{delon:hal-02178204}
Julie Delon and Agn{\`e}s Desolneux.
\newblock {A {W}asserstein-type distance in the space of Gaussian Mixture
  Models}.
\newblock \emph{{SIAM Journal on Imaging Sciences}}, 13\penalty0 (2):\penalty0
  936--970, 2020.
\newblock URL \url{https://hal.archives-ouvertes.fr/hal-02178204}.

\bibitem[Du and Wu(2021)]{https://doi.org/10.48550/arxiv.2105.11570}
Wei Du and Xintao Wu.
\newblock Robust fairness-aware learning under sample selection bias, 2021.
\newblock URL \url{https://arxiv.org/abs/2105.11570}.

\bibitem[Dwork and Lei(2009)]{Dwork09differentialprivacy}
Cynthia Dwork and Jing Lei.
\newblock Differential privacy and robust statistics.
\newblock \emph{STOC '09: Proceedings of the forty-first annual ACM symposium
  on Theory of computing}, pages 371--380, 2009.
\newblock URL \url{https://dl.acm.org/doi/10.1145/1536414.1536466}.

\bibitem[Escobar and West(1995)]{EscobarWest}
Michael~D. Escobar and Mike West.
\newblock {B}ayesian density estimation and inference using mixtures.
\newblock \emph{Journal of the American Statistical Association}, 90\penalty0
  (430):\penalty0 577--588, 1995.
\newblock \doi{10.1080/01621459.1995.10476550}.
\newblock URL
  \url{https://www.tandfonline.com/doi/abs/10.1080/01621459.1995.10476550}.

\bibitem[Ferguson(1973)]{Ferguson}
Thomas~S. Ferguson.
\newblock {A {B}ayesian Analysis of Some Nonparametric Problems}.
\newblock \emph{The Annals of Statistics}, 1\penalty0 (2):\penalty0 209 -- 230,
  1973.
\newblock \doi{10.1214/aos/1176342360}.
\newblock URL \url{https://doi.org/10.1214/aos/1176342360}.

\bibitem[Fiksel et~al.(2021)Fiksel, Datta, Amouzou, and
  Zeger]{doi:10.1080/01621459.2021.1909599}
Jacob Fiksel, Abhirup Datta, Agbessi Amouzou, and Scott Zeger.
\newblock Generalized {B}ayes quantification learning under dataset shift.
\newblock \emph{Journal of the American Statistical Association}, 0\penalty0
  (0):\penalty0 1--19, 2021.
\newblock \doi{10.1080/01621459.2021.1909599}.
\newblock URL \url{https://doi.org/10.1080/01621459.2021.1909599}.

\bibitem[Fitzsimons et~al.(2019)Fitzsimons, Al~Ali, Osborne, and
  Roberts]{e21080741}
Jack Fitzsimons, AbdulRahman Al~Ali, Michael Osborne, and Stephen Roberts.
\newblock A general framework for fair regression.
\newblock \emph{Entropy}, 21\penalty0 (8), 2019.
\newblock ISSN 1099-4300.
\newblock \doi{10.3390/e21080741}.
\newblock URL \url{https://www.mdpi.com/1099-4300/21/8/741}.

\bibitem[Gajane and Pechenizkiy(2018)]{gajane2018formalizing}
Pratik Gajane and Mykola Pechenizkiy.
\newblock On formalizing fairness in prediction with machine learning, 2018.
\newblock URL
  \url{https://www.fatml.org/media/documents/formalizing_fairness_in_prediction_with_ml.pdf}.

\bibitem[Grant and Boyd(2008)]{gb08}
Michael Grant and Stephen Boyd.
\newblock Graph implementations for nonsmooth convex programs.
\newblock In V.~Blondel, S.~Boyd, and H.~Kimura, editors, \emph{Recent Advances
  in Learning and Control}, Lecture Notes in Control and Information Sciences,
  pages 95--110. Springer-Verlag Limited, 2008.
\newblock \url{http://stanford.edu/~boyd/graph_dcp.html}.

\bibitem[Gratch et~al.(2014)Gratch, Artstein, Lucas, Stratou, Scherer,
  Nazarian, Wood, Boberg, DeVault, Marsella, Traum, Rizzo, and
  Morency]{gratch-etal-2014-distress}
Jonathan Gratch, Ron Artstein, Gale Lucas, Giota Stratou, Stefan Scherer,
  Angela Nazarian, Rachel Wood, Jill Boberg, David DeVault, Stacy Marsella,
  David Traum, Skip Rizzo, and Louis-Philippe Morency.
\newblock The distress analysis interview corpus of human and computer
  interviews.
\newblock In \emph{Proceedings of the Ninth International Conference on
  Language Resources and Evaluation ({LREC}'14)}, pages 3123--3128, Reykjavik,
  Iceland, May 2014. European Language Resources Association (ELRA).
\newblock URL
  \url{http://www.lrec-conf.org/proceedings/lrec2014/pdf/508_Paper.pdf}.

\bibitem[Grünwald and van Ommen(2017)]{safebayes}
Peter Grünwald and Thijs van Ommen.
\newblock Inconsistency of {B}ayesian inference for misspecified linear models,
  and a proposal for repairing it.
\newblock \emph{Bayesian analysis}, 2017.
\newblock URL \url{https://pure.uva.nl/ws/files/22184651/1510974325.pdf}.

\bibitem[Han et~al.(2018)Han, Yao, Yu, Niu, Xu, Hu, Tsang, and
  Sugiyama]{han2018coteaching}
Bo~Han, Quanming Yao, Xingrui Yu, Gang Niu, Miao Xu, Weihua Hu, Ivor Tsang, and
  Masashi Sugiyama.
\newblock Co-teaching: Robust training of deep neural networks with extremely
  noisy labels.
\newblock In \emph{NeurIPS}, pages 8535--8545, 2018.

\bibitem[Haynes(2013)]{Haynes2013}
Winston Haynes.
\newblock \emph{Maximum Likelihood Estimation}, pages 1190--1191.
\newblock Springer New York, New York, NY, 2013.
\newblock ISBN 978-1-4419-9863-7.
\newblock \doi{10.1007/978-1-4419-9863-7_1235}.
\newblock URL \url{https://doi.org/10.1007/978-1-4419-9863-7_1235}.

\bibitem[Hoff and Wakefield(2012)]{sandwitch}
Peter Hoff and Jon Wakefield.
\newblock {B}ayesian sandwich posteriors for pseudo-true parameters, 2012.
\newblock URL \url{https://arxiv.org/abs/1211.0087}.

\bibitem[Holmes and Walker(2017)]{10.1093/biomet/asx010}
C.~C. Holmes and S.~G. Walker.
\newblock {Assigning a value to a power likelihood in a general {B}ayesian
  model}.
\newblock \emph{Biometrika}, 104\penalty0 (2):\penalty0 497--503, 03 2017.
\newblock ISSN 0006-3444.
\newblock \doi{10.1093/biomet/asx010}.
\newblock URL \url{https://doi.org/10.1093/biomet/asx010}.

\bibitem[Holzmann et~al.(2006)Holzmann, Munk, and Gneiting]{10.2307/4616956}
Hajo Holzmann, Axel Munk, and Tilmann Gneiting.
\newblock Identifiability of finite mixtures of elliptical distributions.
\newblock \emph{Scandinavian Journal of Statistics}, 33\penalty0 (4):\penalty0
  753--763, 2006.
\newblock ISSN 03036898, 14679469.
\newblock URL \url{http://www.jstor.org/stable/4616956}.

\bibitem[Hooker and Vidyashankar(2011)]{hooker2012bayesian}
Giles Hooker and Anand Vidyashankar.
\newblock {B}ayesian model robustness via disparities, 2011.
\newblock URL \url{https://arxiv.org/abs/1112.4213}.

\bibitem[Huber(2011)]{huber2011robust}
Peter~J Huber.
\newblock Robust statistics.
\newblock In \emph{International encyclopedia of statistical science}, pages
  1248--1251. Springer, 2011.

\bibitem[Ishwaran and Zarepour(2000)]{IZ}
H~Ishwaran and M~Zarepour.
\newblock {Markov chain Monte Carlo in approximate Dirichlet and beta
  two-parameter process hierarchical models}.
\newblock \emph{Biometrika}, 87\penalty0 (2):\penalty0 371--390, 06 2000.
\newblock ISSN 0006-3444.
\newblock \doi{10.1093/biomet/87.2.371}.
\newblock URL \url{https://doi.org/10.1093/biomet/87.2.371}.

\bibitem[Ishwaran and Zarepour(2002{\natexlab{a}})]{ishwaran2002dirichlet}
Hemant Ishwaran and Mahmoud Zarepour.
\newblock Dirichlet prior sieves in finite normal mixtures.
\newblock \emph{Statistica Sinica}, pages 941--963, 2002{\natexlab{a}}.

\bibitem[Ishwaran and Zarepour(2002{\natexlab{b}})]{ishwaran2002exact}
Hemant Ishwaran and Mahmoud Zarepour.
\newblock Exact and approximate sum representations for the dirichlet process.
\newblock \emph{Canadian Journal of Statistics}, 30\penalty0 (2):\penalty0
  269--283, 2002{\natexlab{b}}.

\bibitem[Jebara et~al.(2004)Jebara, Kondor, and
  Howard]{Jebara2004ProbabilityPK}
Tony Jebara, Risi Kondor, and Andrew~G. Howard.
\newblock Probability product kernels.
\newblock \emph{J. Mach. Learn. Res.}, 5:\penalty0 819--844, 2004.

\bibitem[Jiang et~al.(2020)Jiang, Pacchiano, Stepleton, Jiang, and
  Chiappa]{pmlr-v115-jiang20a}
Ray Jiang, Aldo Pacchiano, Tom Stepleton, Heinrich Jiang, and Silvia Chiappa.
\newblock {W}asserstein fair classification.
\newblock In Ryan~P. Adams and Vibhav Gogate, editors, \emph{Proceedings of The
  35th Uncertainty in Artificial Intelligence Conference}, volume 115 of
  \emph{Proceedings of Machine Learning Research}, pages 862--872. PMLR, 22--25
  Jul 2020.
\newblock URL \url{https://proceedings.mlr.press/v115/jiang20a.html}.

\bibitem[Jiang and Tanner(2008)]{2008}
Wenxin Jiang and Martin~A. Tanner.
\newblock Gibbs posterior for variable selection in high-dimensional
  classification and data mining.
\newblock \emph{The Annals of Statistics}, 36\penalty0 (5), Oct 2008.
\newblock ISSN 0090-5364.
\newblock \doi{10.1214/07-aos547}.
\newblock URL \url{http://dx.doi.org/10.1214/07-AOS547}.

\bibitem[Johnson(2022)]{nlopt}
Steven~G. Johnson.
\newblock The nlopt nonlinear-optimization package.
\newblock \emph{The Comprehensive R Archive Network}, 2022.

\bibitem[Kleijn and van~der Vaart(2006)]{10.1214/009053606000000029}
B.~J.~K. Kleijn and A.~W. van~der Vaart.
\newblock {Misspecification in infinite-dimensional Bayesian statistics}.
\newblock \emph{The Annals of Statistics}, 34\penalty0 (2):\penalty0 837 --
  877, 2006.
\newblock \doi{10.1214/009053606000000029}.
\newblock URL \url{https://doi.org/10.1214/009053606000000029}.

\bibitem[Kleijn and van~der Vaart(2012)]{kleijn2012bernstein}
Bas~JK Kleijn and Aad~W van~der Vaart.
\newblock The bernstein-von-mises theorem under misspecification.
\newblock \emph{Electronic Journal of Statistics}, 6:\penalty0 354--381, 2012.

\bibitem[Lavine(1994)]{10.1214/aos/1176325623}
Michael Lavine.
\newblock {More Aspects of Polya Tree Distributions for Statistical Modelling}.
\newblock \emph{The Annals of Statistics}, 22\penalty0 (3):\penalty0 1161 --
  1176, 1994.
\newblock \doi{10.1214/aos/1176325623}.
\newblock URL \url{https://doi.org/10.1214/aos/1176325623}.

\bibitem[Lazar(2003)]{10.2307/30042042}
Nicole~A. Lazar.
\newblock {B}ayesian empirical likelihood.
\newblock \emph{Biometrika}, 90\penalty0 (2):\penalty0 319--326, 2003.
\newblock ISSN 00063444.
\newblock URL \url{http://www.jstor.org/stable/30042042}.

\bibitem[Le et~al.(2019)Le, Yamada, Fukumizu, and Cuturi]{le2019treesliced}
Tam Le, Makoto Yamada, Kenji Fukumizu, and Marco Cuturi.
\newblock Tree-sliced variants of {W}asserstein distances.
\newblock In \emph{Advances in Neural Information Processing Systems}. Curran
  Associates, Inc., 2019.
\newblock URL
  \url{https://proceedings.neurips.cc/paper/2019/file/2d36b5821f8affc6868b59dfc9af6c9f-Paper.pdf}.

\bibitem[Levin et~al.(2006)Levin, Peres, and Wilmer]{LevinPeresWilmer2006}
David~A. Levin, Yuval Peres, and Elizabeth~L. Wilmer.
\newblock \emph{{Markov chains and mixing times}}.
\newblock American Mathematical Society, 2006.
\newblock URL
  \url{http://scholar.google.com/scholar.bib?q=info:3wf9IU94tyMJ:scholar.google.com/&output=citation&hl=en&as_sdt=2000&ct=citation&cd=0}.

\bibitem[Liu et~al.(2021)Liu, Kong, and
  Oh]{https://doi.org/10.48550/arxiv.2111.06578}
Xiyang Liu, Weihao Kong, and Sewoong Oh.
\newblock Differential privacy and robust statistics in high dimensions, 2021.
\newblock URL \url{https://arxiv.org/abs/2111.06578}.

\bibitem[Maria(1965)]{10.2307/2315957}
A.~J. Maria.
\newblock A remark on stirling's formula.
\newblock \emph{The American Mathematical Monthly}, 72\penalty0 (10):\penalty0
  1096--1098, 1965.
\newblock ISSN 00029890, 19300972.
\newblock URL \url{http://www.jstor.org/stable/2315957}.

\bibitem[McAuliffe et~al.(2006)McAuliffe, Blei, and Jordan]{McAuliffe}
McAuliffe, Blei, and Jordan.
\newblock {Nonparametric empirical {B}ayes for the Dirichlet process mixture
  model}.
\newblock \emph{Stat Comput}, 1\penalty0 (2):\penalty0 5--14, 2006.

\bibitem[Miller and Dunson(2019)]{doi:10.1080/01621459.2018.1469995}
Jeffrey~W. Miller and David~B. Dunson.
\newblock Robust bayesian inference via coarsening.
\newblock \emph{Journal of the American Statistical Association}, 114\penalty0
  (527):\penalty0 1113--1125, 2019.
\newblock \doi{10.1080/01621459.2018.1469995}.
\newblock URL \url{https://doi.org/10.1080/01621459.2018.1469995}.
\newblock PMID: 31942084.

\bibitem[Miller and Harrison(2018)]{doi:10.1080/01621459.2016.1255636}
Jeffrey~W. Miller and Matthew~T. Harrison.
\newblock Mixture models with a prior on the number of components.
\newblock \emph{Journal of the American Statistical Association}, 113\penalty0
  (521):\penalty0 340--356, 2018.
\newblock \doi{10.1080/01621459.2016.1255636}.
\newblock URL \url{https://doi.org/10.1080/01621459.2016.1255636}.
\newblock PMID: 29983475.

\bibitem[Minsker et~al.(2017)Minsker, Srivastava, Lin, and
  Dunson]{JMLR:v18:16-655}
Stanislav Minsker, Sanvesh Srivastava, Lizhen Lin, and David~B. Dunson.
\newblock Robust and scalable {B}ayes via a median of subset posterior
  measures.
\newblock \emph{Journal of Machine Learning Research}, 18\penalty0
  (124):\penalty0 1--40, 2017.
\newblock URL \url{http://jmlr.org/papers/v18/16-655.html}.

\bibitem[Muirhead(2005)]{muirhead2005aspects}
Robb~J. Muirhead.
\newblock \emph{Aspects of Multivariate Statistical Theory}.
\newblock Wiley-Interscience, 2005.

\bibitem[M{\"u}ller and Quintana(2004)]{muller2004nonparametric}
Peter M{\"u}ller and Fernando~A Quintana.
\newblock Nonparametric bayesian data analysis.
\newblock \emph{Statistical science}, 19\penalty0 (1):\penalty0 95--110, 2004.

\bibitem[M{\"u}ller et~al.(2015)M{\"u}ller, Quintana, Jara, and
  Hanson]{muller2015bayesian}
Peter M{\"u}ller, Fernando~Andr{\'e}s Quintana, Alejandro Jara, and Tim Hanson.
\newblock \emph{Bayesian nonparametric data analysis}, volume~1.
\newblock Springer, 2015.

\bibitem[Muzellec and Cuturi(2018)]{muzellec2019generalizing}
Boris Muzellec and Marco Cuturi.
\newblock Generalizing point embeddings using the {W}asserstein space of
  elliptical distributions.
\newblock In \emph{Advances in Neural Information Processing Systems},
  volume~31. Curran Associates, Inc., 2018.
\newblock URL
  \url{https://proceedings.neurips.cc/paper/2018/file/b613e70fd9f59310cf0a8d33de3f2800-Paper.pdf}.

\bibitem[Nabi and Shpitser(2018)]{Nabi2018}
Razieh Nabi and Ilya Shpitser.
\newblock Fair inference on outcomes.
\newblock \emph{Proceedings of the AAAI Conference on Artificial Intelligence},
  32\penalty0 (1), Apr. 2018.
\newblock \doi{10.1609/aaai.v32i1.11553}.
\newblock URL \url{https://ojs.aaai.org/index.php/AAAI/article/view/11553}.

\bibitem[Owen(2001)]{owen2001empirical}
Art~B Owen.
\newblock \emph{Empirical likelihood}.
\newblock Chapman and Hall/CRC, 2001.

\bibitem[Panaretos and Zemel(2019)]{2019}
Victor~M. Panaretos and Yoav Zemel.
\newblock Statistical aspects of {W}asserstein distances.
\newblock \emph{Annual Review of Statistics and Its Application}, 6\penalty0
  (1):\penalty0 405–431, Mar 2019.
\newblock ISSN 2326-831X.
\newblock \doi{10.1146/annurev-statistics-030718-104938}.
\newblock URL \url{http://dx.doi.org/10.1146/annurev-statistics-030718-104938}.

\bibitem[Pinsker(1964)]{alma991023405949705251}
M.~S. Pinsker.
\newblock \emph{Information and information stability of random variables and
  processes / by M.S. Pinsker. Translated and edited by Amiel Feinstein.}
\newblock Holden-Day series in time series analysis. Holden-Day, San Francisco,
  1964.

\bibitem[{R Core Team}(2022)]{RCore}
{R Core Team}.
\newblock \emph{R: A Language and Environment for Statistical Computing}.
\newblock R Foundation for Statistical Computing, Vienna, Austria, 2022.
\newblock URL \url{https://www.R-project.org/}.

\bibitem[Resnick(2013)]{Resnick}
Sydney Resnick.
\newblock A probability path.
\newblock \emph{Birkhäuser Boston}, 2013.

\bibitem[Santambrogio(2015)]{noauthororeditor}
Filippo Santambrogio.
\newblock Optimal transport for applied mathematicians. calculus of variations,
  pdes and modeling, 2015.
\newblock URL \url{https://www.math.u-psud.fr/~filippo/OTAM-cvgmt.pdf}.

\bibitem[Schennach(2005)]{10.1093/biomet/92.1.31}
Susanne~M. Schennach.
\newblock {Bayesian exponentially tilted empirical likelihood}.
\newblock \emph{Biometrika}, 92\penalty0 (1):\penalty0 31--46, 03 2005.
\newblock ISSN 0006-3444.
\newblock \doi{10.1093/biomet/92.1.31}.
\newblock URL \url{https://doi.org/10.1093/biomet/92.1.31}.

\bibitem[Shafahi et~al.(2020)Shafahi, Saadatpanah, Zhu, Ghiasi, Studer, Jacobs,
  and Goldstein]{Shafahi2020Adversarially}
Ali Shafahi, Parsa Saadatpanah, Chen Zhu, Amin Ghiasi, Christoph Studer, David
  Jacobs, and Tom Goldstein.
\newblock Adversarially robust transfer learning.
\newblock In \emph{International Conference on Learning Representations}, 2020.
\newblock URL \url{https://openreview.net/forum?id=ryebG04YvB}.

\bibitem[Teh(2010)]{teh2010dirichlet}
Yee~Whye Teh.
\newblock Dirichlet process.
\newblock \emph{Encyclopedia of machine learning}, 1063:\penalty0 280--287,
  2010.

\bibitem[Vehtari et~al.(2016)Vehtari, Mononen, Tolvanen, Sivula, and
  Winther]{JMLR:v17:14-540}
Aki Vehtari, Tommi Mononen, Ville Tolvanen, Tuomas Sivula, and Ole Winther.
\newblock {B}ayesian leave-one-out cross-validation approximations for gaussian
  latent variable models.
\newblock \emph{Journal of Machine Learning Research}, 17\penalty0
  (103):\penalty0 1--38, 2016.
\newblock URL \url{http://jmlr.org/papers/v17/14-540.html}.

\bibitem[Verdinelli and Wasserman(1998)]{10.1214/aos/1024691240}
Isabella Verdinelli and Larry Wasserman.
\newblock {{B}ayesian goodness-of-fit testing using infinite-dimensional
  exponential families}.
\newblock \emph{The Annals of Statistics}, 26\penalty0 (4):\penalty0 1215 --
  1241, 1998.
\newblock \doi{10.1214/aos/1024691240}.
\newblock URL \url{https://doi.org/10.1214/aos/1024691240}.

\bibitem[Villani(2003)]{Villani2003TopicsIO}
C{\'e}dric Villani.
\newblock Topics in optimal transportation.
\newblock \emph{American Mathematical Society}, 2003.
\newblock URL \url{https://www.math.ucla.edu/~wgangbo/Cedric-Villani.pdf}.

\bibitem[Vilnis and McCallum(2014)]{https://doi.org/10.48550/arxiv.1412.6623}
Luke Vilnis and Andrew McCallum.
\newblock Word representations via gaussian embedding, 2014.
\newblock URL \url{https://arxiv.org/abs/1412.6623}.

\bibitem[Wang et~al.(2020{\natexlab{a}})Wang, Guo, Narasimhan, Cotter, Gupta,
  and Jordan]{NEURIPS2020_37d097ca}
Serena Wang, Wenshuo Guo, Harikrishna Narasimhan, Andrew Cotter, Maya Gupta,
  and Michael Jordan.
\newblock Robust optimization for fairness with noisy protected groups.
\newblock In H.~Larochelle, M.~Ranzato, R.~Hadsell, M.F. Balcan, and H.~Lin,
  editors, \emph{Advances in Neural Information Processing Systems}, volume~33,
  pages 5190--5203. Curran Associates, Inc., 2020{\natexlab{a}}.
\newblock URL
  \url{https://proceedings.neurips.cc/paper/2020/file/37d097caf1299d9aa79c2c2b843d2d78-Paper.pdf}.

\bibitem[Wang et~al.(2020{\natexlab{b}})Wang, Hu, and Hu]{9156647}
Zhen Wang, Guosheng Hu, and Qinghua Hu.
\newblock Training noise-robust deep neural networks via meta-learning.
\newblock In \emph{2020 IEEE/CVF Conference on Computer Vision and Pattern
  Recognition (CVPR)}, pages 4523--4532, 2020{\natexlab{b}}.
\newblock \doi{10.1109/CVPR42600.2020.00458}.

\bibitem[Xu and Mannor(2010)]{NIPS2010_19f3cd30}
Huan Xu and Shie Mannor.
\newblock Distributionally robust markov decision processes.
\newblock In J.~Lafferty, C.~Williams, J.~Shawe-Taylor, R.~Zemel, and
  A.~Culotta, editors, \emph{Advances in Neural Information Processing
  Systems}, volume~23. Curran Associates, Inc., 2010.
\newblock URL
  \url{https://proceedings.neurips.cc/paper/2010/file/19f3cd308f1455b3fa09a282e0d496f4-Paper.pdf}.

\bibitem[Yang et~al.(2019)Yang, Lafferty, and Pollard]{yang2019fair}
Dana Yang, John Lafferty, and David Pollard.
\newblock Fair quantile regression, 2019.
\newblock URL \url{https://arxiv.org/abs/1907.08646}.

\end{thebibliography}
\bibliographystyle{plainnat}
\setcitestyle{notesep={; },round,aysep={},yysep={;}}

\clearpage

\beginsupplement
\begin{center}
\section*{Supplementary material to \\
``Robust probabilistic inference via a constrained transport metric" }\label{SM}
 Abhisek Chakraborty, Anirban Bhattacharya, Debdeep Pati\\
 Department of Statistics, Texas A\&M University, College Station, TX, USA
\end{center}
Section \ref{ssec:th1th2_proof} contains proofs of Theorems \ref{th2:lemma2} and \ref{th:npbayes_equivalence} in Section \ref{ssec:npBayes} in the main document, relating to the non-parametric Bayes interpretation of our semi-parametric methodology D-BETEL. Section \ref{ssec:th3} contains the proof of Theorem \ref{th3} in Section \ref{ssec:andrew} in the main document, concerning the tailor-made transport metric ANDREW. Sections \ref{aux:th1th2} and \ref{aux:th3} record auxiliary results for Theorems \ref{th2:lemma2}-\ref{th:npbayes_equivalence} and Theorem \ref{th3}, respectively. Section \ref{sup:glm} presents additional numerical results on the generalised linear regression example discussed in Section \ref{ssec:gen_reg} in the main document. All bibliographical references can be found in the main document. 
\vspace{0.3in}



\section{Proofs of Theorems \ref{th2:lemma2} and \ref{th:npbayes_equivalence} in the main document}\label{ssec:th1th2_proof}
\subsection{Preliminaries}
Let us first set up some notation to facilitate the proofs of Theorems \ref{th2:lemma2} and \ref{th:npbayes_equivalence}. 
First, observe that the prior on $(b_1,\ldots, b_k)$ in equation block \eqref{eqn:npbayes1} is $\mbox{Multinomial}(N;1/k, \ldots, 1/k)$ with probability mass function $N!/(k^N \prod_{h=1}^k b_h!)$.  Next, we re-parameterize $(b_1,\ldots,b_k)\to (w_1,\ldots,w_k)$ and define 
\begin{align}\label{eqn_W_j}
    &\mb{W}_j = \bigg\{w: w_{h} = \frac{b_{h}}{N},\ h=1,\ldots,j;\  b_h \in \mb{Z}^{+},\ \sum_{h=1}^j b_h = N\ \bigg\},\quad j\in\mb{Z}^{+},
\end{align}
where $\mb{Z}^{+}$ is set of all positive integers. 
Also, define
\begin{align}\label{eqn_W_j_Tilde}
   &\wt{\mb{W}}_j=\bigg\{w\in\mb{W}_j:\big|w_h - w^{\star}_h\big| < \frac{1}{N^{1-\delta}}, \forall\ h=1,\ldots,j \bigg\}, \quad j\in\mb{Z}^{+},
\end{align}
where $w^{\star}_h, \ h=1,\ldots, j$ is the solution to the $\mbox{D}$-BETEL optimization without the parametric constraint, and we assume that $1/2<\delta<1$. The proof of Theorem \ref{th2:lemma2} involves studying a log ratio of the form
\begin{align}\label{eqn_R_j}
R_j = &\ \log\ \Bigg\{\frac{\prod_{h=1}^j {w^{\star}_{h}}^{-w_{h}^{\star} N + \frac{1}{2}}}{\prod_{h=1}^j w_{h}^{-w_{h} N + \frac{1}{2}}}\Bigg\}
= N\big[ H_{N}(w^{\star}) - H_{N}(w)\big] + \frac{1}{2} \sum_{h=1}^j \log\bigg(\frac{w^{\star}_{h}}{w_h}\bigg), \quad j\in\mb{Z}^{+},
\end{align}
over $\mb{W}_j\setminus\wt{\mb{W}}_j$, where $H_N(w) = -\sum_{h=1}^j w_h\log w_h$.

Further, 
we follow the definition in \eqref{eqn:andrewdef} to introduce 
\begin{align}\label{eqn_C}
 &C_{\theta,\varepsilon} =
\bigg \{w\in\mb{W}_n:\ \mbox{D}[F_{\theta}, \nu(w,x)]  \leq \varepsilon \bigg\},
\end{align}
In the constrained $\mbox{D}$-BETEL formulation, for every $\theta\in\Theta$: 
\begin{align}\label{eqn_DBETEL}
  w^{\star}_{1:n}(\theta,\varepsilon) = \argmax_{w\in C_{\theta,\varepsilon}} H_N(w),
\end{align}
and $\varepsilon>0$. With slight abuse of notations, we shall use $w^{\star}_h$ in place of  $w^{\star}_h(\theta,\varepsilon)$.  In places, it is useful to resort to the dual form of the maximization problem above, introduced in equation \eqref{eqn:dbetel:alt} in the main document:
\begin{align}\label{eqn_DBETEL_dual}
    w^{\star}_{1:n}(\theta,\varepsilon) = \argmax_{w\in \mb{W}_n} g(w)\quad \text{where}\quad g(w) = H_N(w) - \lambda_{\star}\ \mbox{D}^2(F_{\theta} ,  \nu(w, x)),
\end{align}
where there exists a $\lambda_{\star}>0$ for every choice of $\varepsilon>0$. The proof of Theorem \ref{th:npbayes_equivalence} hinges on studying the log ratio of the form $R_n = N\big[ H_{N}(w^{\star}) - H_N(w)\big] + \frac{1}{2} \sum_{h=1}^n \log(w^{\star}_{h}/w_h)$, defined in equation \eqref{eqn_R_j}.
In order to study the behaviour of $R_n$ in a carefully constructed neighbourhood of $w^{\star}$ and outside it, we introduce the set 
\begin{align}\label{eqn_C_Tilde}
\wt{C}_{\theta, \varepsilon}=
\bigg\{w\in C_{\theta, \varepsilon}:\  
&\mbox{D}^2[F_{\theta} , \nu(w^{\star}, x) ] - \frac{1}{N^{1-\delta}}\leq \mbox{D}^2[F_{\theta} , \nu(w, x) ],\notag\\
&0\leq H_{N}(w^{\star}) - H_{N}(w)\leq \frac{1}{N^{1-\delta}} \bigg\}
\end{align}
for some $0<\delta<1$. In this article, we focus on the case  $\mbox{D}\equiv W_{\rm AR}$ (refer to equations \eqref{eqn_C}, \eqref{eqn_DBETEL_dual}, \eqref{eqn_C_Tilde}) for the sake the proof of Theorem \ref{th:npbayes_equivalence}. More precisely, we utilize $\mbox{D}\equiv W_{\rm AR}$ in the proofs of Lemmas \ref{th3:lemma1}-\ref{th2_lemma2_asymtotic} that leads to the proof of Theorem \ref{th:npbayes_equivalence}. However, we envision that the proof steps hold true for a more general class of distance metric $\mbox{D}$, of which ANDREW in Section \ref{ssec:andrew} is a special case.

Before we proceed further, we briefly discuss  the intuitions behind the choice of  $\wt{C}_{\theta, \varepsilon}$ in the proof of Theorem \ref{th:npbayes_equivalence}, especially in connection to a similar approach taken in the proof of the main theorem  regarding non-parametric Bayes interpretation of Bayesian exponentially tilted empirical likelihood for moment conditional model in \citet{10.1093/biomet/92.1.31}. Given a random sample $x = (x_1,\ldots, x_n)^\T$ from an unknown data generating distribution $P$ on $\mathbb{R}^d$, the exponentially tilted empirical likelihood for moment conditional model takes the form:
\begin{equation}
  L_{\rm MCM}(\theta) = \bigg\{\prod_{i=1}^n w_{i} :  \argmax_{w}\prod_{i=1}^n w_{i}^{-w_{i}},\ w_i>0,\ \sum_{i=1}^n w_i = 1, \ \sum_{i=1}^n w_i\ u(x_i,\theta)=0  \bigg\},\quad \theta\in\Theta, 
\end{equation} 
where $u:\mathbb{R}^d\times\Theta\to\mathbb{R}$. For the above formulation, a choice of neighbourhood of the form
\begin{align}\label{C_MCM}
&{C}_{\theta,  \rm MCM}=
\bigg\{w\in \mb{W}_n: \sum_{i=1}^n w_i\ u(x_i, \theta) = 0\bigg\},\notag\\
&\wt{C}_{\theta,  \rm MCM}=
\bigg\{w\in {C}_{\theta,  \rm MCM}: |w_h - w_{h}^{\star}|<\frac{1}{N^{1-\delta}},\ h = 1,\ldots,n \bigg\} ,
\end{align}
where $w^{\star}_{1:n} = \argmax_{w\in C_{\theta, \rm MCM}} H_N(w)$ and $0<\delta<1$, will be apt since 
\begin{align}
     \bigg|\sum_{i=1}^n w_{i}^\star\ u(x_i,\theta) - \sum_{i=1}^n w_i\ u(x_i,\theta)\bigg|\leq \sum_{i=1}^n |w_{i}^\star -  w_{i}||u(x_i,\theta)| \leq \frac{1}{N^{1-\delta}}\sum_{i=1}^n |u(x_i,\theta)| \downarrow 0,
\end{align}
and $H_{N}(w^{\star}) - H_{N}(w)\to 0$ as $ N\to\infty$. Further, $\wt{C}_{\theta,  \rm MCM}$ satisfies two conditions: (i) for every $w\in \wt{C}_{\theta,  \rm MCM}$, $|\prod_{i=1}^n w_i - \prod_{i=1}^n w_{i}^{\star}|\to 0$ as $N\to\infty$, and  (ii) for every $w\in \mbox{W}_j\setminus\wt{C}_{\theta, \rm MCM}$, $N[H_N(w^*) - H_N(w)]> N^{\delta}$.

However, in the proof of our Theorem \ref{th:npbayes_equivalence} for exponentially tilted empirical likelihood with distance based constraints, a simple choice of neighbourhood of $w^{\star}$ like in equation \eqref{C_MCM} poses significant algebraic challenges to analyse the behaviour of $R_n$ in  and outside the neighbourhood. This motivates the choice of the neighborhood of the form
\begin{align}
\wt{C}_{\theta, \varepsilon}=
\bigg\{w\in C_{\theta, \varepsilon}:
&\mbox{D}^2[F_{\theta} , \nu(w, x) ]- \frac{1}{N^{1-\delta}}\leq \mbox{D}^2[F_{\theta} , \nu(w, x) ],\notag\\
&0\leq H_{N}(w^{\star}) - H_{N}(w)\leq \frac{1}{N^{1-\delta}} \bigg\}.
\end{align}
We focus on $\mbox{D}\equiv W_{\rm AR}$ and demonstrate that $\wt{C}_{\theta, \varepsilon}$ satisfies the two conditions, i.e (i) for every $w\in \wt{C}_{\theta, \varepsilon}$, $|\prod_{i=1}^n w_i - \prod_{i=1}^n w_{i}^{\star}|\to 0$ as $N\to\infty$, and  (ii) for every $w\in {C}_{\theta, \varepsilon}\setminus \wt{C}_{\theta, \varepsilon}$, $N[H_N(w^*) - H_N(w)]> N^{\delta}$. 
The arguments presented in the proofs of Theorem \ref{th:npbayes_equivalence} and corresponding auxiliary results can potentially be extended for other transport metrics.

Now we are in the position to present the proofs of Theorems \ref{th2:lemma2} and \ref{th:npbayes_equivalence}.

\subsection{Proof of Theorem \ref{th2:lemma2}}
\begin{proof}
For the sake of clarity, throughout the proof we shall denote $k$ as $k_N$, in order to recognise it as sequence of random variables indexed by $N$. We shall proceed to prove:
\begin{align}
    P(k_N \neq n \quad \text{infinitely often} \mid x_{1:n}) = 0,
\end{align}
which will follow from the fact that,
\begin{align}\label{1BC_condition}
  \sum_{N=N^{\star}}^{\infty} P(k_N \neq n  \mid x_{1:n}) < \infty,
\end{align}
for a finite $N^{\star} = 1 + \bigg[\big\{\min_{s\neq t} |x_s -x_t|_{\infty}\big\}^{-1/(\alpha-\beta)}\bigg]$, followed by the application of the first Borel--Cantelli Lemma. 

Our general strategy will be to begin with the joint distribution of 
\begin{align*}
  (\xi^{\star}, x_{1:n})=  (k, \ b_1,\ldots,b_k,\ \mu_{1},\ldots, \mu_{k},\ x_{1:n})  
\end{align*}
in equation blocks \eqref{eqn:npbayes3}--\eqref{eqn:npbayes2} and compute the marginal posterior of $P(k_N\mid x_{1:n})$. We shall prove equation \eqref{1BC_condition} in two parts:
$\sum_{N=N^{\star}}^{\infty} P(k_N > n  \mid x_{1:n}) < \infty,\ \text{and}\ \sum_{N=N^{\star}}^{\infty} P(k_N < n  \mid x_{1:n}) < \infty.$  

\subsection*{Part 1}
For the proof of the fact that $\sum_{N=N^{\star}}^{\infty} P(k_N >n  \mid x_{1:n}) < \infty$, we consider
\begin{align}\label{eqn:part1}
    P(k_N > n  \mid x_{1:n}) 
    &= \sum_{j=n+1}^{\infty} P(k_N =j  \mid x_{1:n})\notag\\ 
    &= \frac{ \sum_{j=n+1}^{\infty}\ p(j)\ m(x_{1:n}
    \mid k_N = j)}{\sum_{j=1}^{\infty}\ p(j)\ m(x_{1:n}
    \mid k_N = j)}
    \leq \frac{ \sum_{j=n+1}^{\infty}\ p(j)\ m(x_{1:n}
    \mid k_N = j)}{p(n)\ m(x_{1:n}
    \mid k_N = n)}\notag\\ 
   & \leq \bigg\{\frac{\sup_{j>n} m(x_{1:n}
    \mid k_N = j) }{m(x_{1:n}
    \mid k_N = n)}\bigg\} \frac{ \sum_{j=n+1}^{\infty}\ p(j)\ }{p(n)}\notag\\
    &\leq M_{j,n, N}\ \frac{(1-p_N)^{(n-1+1)}}{p_N(1-p_N)^{(n-1)} Q^{-1}_{N,n}}
    \leq  M_{j,n, N}\ \frac{(1-p_N)}{p_N M_{N}^{-N}} \notag\\
    &=   \frac{M_{j,n, N}}{ (AM_{N}^{N+1}-1)M_{N}^{-N}}
    \leq    \frac{M_{j,n, N}}{ AM_{N}-A}
    \leq   \frac{M_{j,n, N}}{AN^{\alpha d}}.
\end{align}
where 
\begin{align}\label{eqn:bounded_ratio_marginallikelihhod}
     M_{j,n, N} = \frac{\sup_{j>n} m(x_{1:n}
    \mid k_N = j) }{m(x_{1:n}
    \mid k_N = n)}.
\end{align}
The second step follows from  the fact that the marginal likelihood of data is a weighted average of marginal likelihood of data given the number of mixture components $k_N$, i.e $m(x_{1:n}) =   \int_{\xi^{\star}} \pi_{\infty,N}(\xi^{\star})P^{(N)}(x_{1:n}\mid \xi^{\star})d\xi^{\star} = \sum_{j=1}^{\infty}\ p(j)\ m(x_{1:n} \mid k_N = j) $. The third and fourth steps are trivial.  The fifth steps follows from the fact that the prior probability attached to any particular SRSWOR of size $n$ from $\mbox{H}^{(N)}$ is $p_N(1-p_N)^{(n-1)} Q^{-1}_{N,n}$, and the prior probability of the event  $k>n$ is $(1-p_N)^{(n-1+1)}$. The sixth step follows from $Q_{N,n}<M_{N}^N$ and $p_N = 1- 1/AM_{N}^{N+1}$. The seventh and the eighth step are trivial, and the last step follows from $M_N = (N^{\alpha} + 1)^d > N^{\alpha d} +1$. Finally, if we can show $M_{j,n, N} \leq K_1$ in equation \eqref{eqn:bounded_ratio_marginallikelihhod} for large enough $N$, first part of the proof would follow from equation \eqref{eqn:part1} since $\sum_{N=1}^{\infty} 1/N^{\alpha d}<\infty$ for $\alpha d > 1$. 

Next, we show that  $M_{j,n, N} \leq K_1$ for some $K_1>0$. To that end, under the hierarchical model in equation \eqref{eqn:npbayes3}--\eqref{eqn:npbayes2} in the main document, the unconstrained posterior of $\xi^{\star}\mid x_{1:n}$ is  
\begin{align}\label{unconstrained_post}
&\pi^{(free)}_{N}(\xi^{\star}\mid x_{1:n})  
 \ \propto \ \pi_{\infty, N}(\xi^{\star})\
P^{(N)}(x_{1:n}\mid \xi^{\star}) \notag\\ 
& = \bigg\{p(k)\prod_{h=1}^k \mbox{H}^{(N)}(\mu_h) \bigg\} \
\bigg\{\frac{1}{k^N}\ \frac{N!}{\prod_{h=1}^k b_h !} \bigg\}\ 
\bigg\{\prod_{i=1}^n \bigg[\sum_{h=1}^k \frac{b_h}{N} \mbox{U}_d(x_i\mid\eta_{h},\tau^{-1}I) \bigg] \bigg\},
\end{align}
where we use $ \mbox{U}_d(x_i\mid\eta_h,\tau^{-1}I)$ to denote $\prod_{j=1}^d \mbox{Uniform}(\eta_{h, j} -  \tau^{-1},\ \eta_{h, j} +  \tau^{-1})$. From the equation \eqref{unconstrained_post}, we have
\begin{align}\label{unconstrained_marglik}
    &m(x_{1:N}\mid k_N = j)\notag\\ 
    = &\sum_{w_{1:j}}\sum_{\mu_{1:j}}\prod_{h=1}^j \mbox{H}^{(N)}(\mu_h)\ \bigg\{\frac{N!/\prod_{h=1}^j b_h!}{j^N}\bigg\}\prod_{i=1}^n \bigg\{\sum_{h=1}^j w_h \tau^{d}_{N}\ \delta(|x_i - \eta_h|_{\infty} <2\rho_N)\bigg\}\notag\\
    = & \frac{\tau^{d}_{N}}{M_{N}^{j}}\sum_{w_{1:j}}\ \bigg\{\frac{N!/\prod_{h=1}^j b_h!}{j^N}\bigg\}\prod_{i=1}^n  w_i \notag\\
   = & \frac{\tau^{dn}_{N}}{M_{N}^{j}}\bigg[\sum_{w\in \wt{W}_j}\ \bigg\{\frac{N!/\prod_{h=1}^j b_h!}{j^N}\bigg\}\prod_{i=1}^n  w_i + \sum_{w\in W_j\setminus\wt{W}_j}\ \bigg\{\frac{N!/\prod_{h=1}^j b_h!}{j^N}\bigg\}\prod_{i=1}^n  w_i \bigg],
\end{align}
where $\mb{W}_j$ and $\wt{\mb{W}}_j$ are as in equations \eqref{eqn_W_j} and \eqref{eqn_W_j_Tilde}.  The first step holds true since exactly $1$ indicator in the sum  $\sum_{h=1}^j w_h \tau^{d}_{N}\ \delta(|x_i - \eta_h|_{\infty} <2\rho_N)$ corresponding to $x_i, i =1,\ldots,n$ survives as long as  $N\geq N^{\star}$ where it is sufficient to ensure
that
\begin{align}
    \min_{s\neq t} |x_s -x_t|_{\infty} > 2\rho_{N^{\star}} = \frac{1}{(N^{\star})^{\alpha-\beta}}\iff N^{\star} > \big\{\min_{s\neq t} |x_s -x_t|_{\infty}\big\}^{-1/(\alpha-\beta)}.
\end{align} 
Without loss of generality, we assume that only the $i$-th indicator in the sum  $\sum_{h=1}^j w_h \tau^{d}_{N}\ \delta(|x_i - \eta_h|_{\infty} <2\rho_N)$ corresponding to $x_i, i =1,\ldots,n$ survives. Next, we  simplify the equation \eqref{unconstrained_marglik} further, and to that end we note that:

$(i)$ The solution to the $\mbox{D}$-BETEL optimization $w^{\star}_h, \ h=1,\ldots, j$ without the parametric constraint  is $(1/j,\ldots, 1/j)^\T$, and yields
\begin{align}\label{DBETEL_max}
   \bigg\{\frac{N!/\prod_{h=1}^j b^{\star}_h!}{j^N}\bigg\}\prod_{i=1}^n  w_{i}^{\star} = \frac{1}{\sqrt{2\pi}^{j-1}}\frac{j^{-(n + j/2)}}{N^{j-1/2}}.
\end{align}
Further,  for any $w\in\mb{W}_j\setminus\wt{\mb{W}_j}$, we consider the log ratio
\begin{align}\label{eqn:ratio}
R_j = &\ \log\ \Bigg\{\frac{\prod_{h=1}^j {w^{\star}_{h}}^{-w_{h}^{\star} N + \frac{1}{2}}}{\prod_{h=1}^j w_{h}^{-w_{h} N + \frac{1}{2}}}\Bigg\}\notag\\
= &\ N \sum_{h=1}^j (w_h - w^{\star}_h)\log w^{\star}_h + N\sum_{h=1}^j w_h \log\bigg(\frac{w_h}{w^{\star}_{h}}\bigg) + \frac{1}{2} \sum_{h=1}^j \log\bigg(\frac{w^{\star}_{h}}{w_h}\bigg)\notag\\
\geq&\ N\sum_{h=1}^j w_h \log\bigg(\frac{w_h}{w^{\star}_{h}}\bigg)
=\ N\ \mbox{KL}(w\mid\mid w^{\star})
\geq\ N\times 2 \bigg\{|w-w^{\star}|_{\rm TV}\bigg\}^2\notag\\
\geq&\ N\ \bigg\{|w-w^{\star}|_{\rm TV}\bigg\}^2 
= N\ \bigg\{\sup_{h=1,\ldots j} |w_h-w^{\star}_h|\bigg\}^2  \geq N\ \bigg\{\inf_{h=1,\ldots j}
|w_h-w^{\star}_h|\bigg\}^2 \notag\\
\geq& N\ \times \bigg\{\frac{1}{N^{1-\delta}}\bigg\}^2 = N^{\delta^{\star}}.
\end{align}
where $\delta^{\star} = 2\delta - 1 >0$ since $\delta >1/2$   by assumption. The first inequality holds since  $N \sum_{h=1}^j (w_h - w^{\star}_h)\log w^{\star}_h = -N\log j \sum_{h=1}^j(w_h - w^{\star}_h) = 0$, and $\sum_{h=1}^j  \log\big(w^{\star}_{h}/w_h\big)$ is non-negative. The second inequality is due to the Lemma \ref{Pinsker}. Rest is simple algebra.
Finally, the set $\mb{W}_j\setminus\wt{\mb{W}}_j$ can contain at max $N^n$ elements. From the equations \eqref{DBETEL_max}-\eqref{eqn:ratio},  the contributions corresponding to the elements of the set $W_j\setminus\wt{W}_j$ can be bounded as follows:
\begin{align}\label{term1}
     &0\leq\sum_{w\in W_j\setminus\wt{W}_j}\ \bigg\{\frac{N!/\prod_{h=1}^j b_h!}{j^N}\bigg\}\prod_{i=1}^n  w_i \leq \bigg(\frac{N^{n}}{e^{N^{\delta^{\star}}}}\bigg)\times \frac{j^{-(n + j/2)}}{N^{j-1/2}}  = \mbox{UB}_{j, \rm neglect}.
\end{align}

$(ii)$ For the contributions corresponding to the elements of the set $\wt{W}_j$:
\begin{align}\label{term2}
    &h_{j,N}\times \frac{1}{\sqrt{2\pi}^{j-1}} \frac{1}{N^{j-1/2}}\mbox{LB}_{j,N}\leq\sum_{w\in \wt{W}_j}\ \bigg\{\frac{N!/\prod_{h=1}^j b_h!}{j^N}\bigg\}\prod_{i=1}^n  w_i \leq h_{j,N}\times \frac{1}{\sqrt{2\pi}^{j-1}} \frac{1}{N^{j-1/2}} \mbox{UB}_{j},
\end{align}
where
\begin{align*}
   \mbox{LB}_{j,N} = 
   & \bigg[j^{-N}\bigg\{\bigg(\frac{1}{j}-\frac{1}{N^{1-\delta}}\bigg)\bigg(\frac{1}{j}+\frac{1}{N^{1-\delta}}\bigg)\bigg\}^{(-N/2) + (j/2N^{1-\delta})+(j/4)}\bigg]\notag\\
   &\bigg[\bigg\{\bigg(\frac{1}{j}-\frac{1}{N^{1-\delta}}\bigg)^{\min(j/2,n)}\bigg(\frac{1}{j}+\frac{1}{N^{1-\delta}}\bigg)^{n-\min(j/2,n)}\bigg\}\bigg],\notag\\
   \mbox{UB}_{j} =& [j^{- j/2}]\ [j^{-n}],
\end{align*}
and  $h_{j,N}$ is the number of elements in $\wt{W}_j$ that trivially satisfy $(2N^{\delta} +1)^{j-1}\leq h_{j,N} \leq (2N^{\delta} +1)^j$. The step above assumes that $j$ is even, but we can easily do similar calculation for the case where $j$ is odd to obtain expressions that are similar in spirit. The above inequality holds since we simply replace each term in the sum by the minimum and maximum among all the terms in the sum respectively, to obtain the lower and upper bound.

So, we combine $(i)-(ii)$ to obtain the ratio of marginal likelihoods as they appear in the equation \eqref{eqn:bounded_ratio_marginallikelihhod}:
\begin{align}
   \mbox{LB}^{\star}_{j, n, N}
   &=\frac{h_{j,N}\ \sqrt{2\pi}^{-(j-n)}\ \mbox{LB}_{j,N}}{h_{n,N}\ (NM_{N})^{j-n}\mbox{UB}_{n} + M_{N}^{j-n} \sqrt{2\pi}^{n-1} N^{j-1/2} \mbox{UB}_{n, \rm neglect}}\notag\\
   &\leq\frac{m(x_{1:N}\mid k_N = j)}{m(x_{1:N}\mid k_N = n)}\notag\\
   &\leq \frac{h_{j,N}\ \sqrt{2\pi}^{-(j-n)}\ \mbox{UB}_{j}\ + (\sqrt{2\pi})^{n-1} N^{j-1/2} \mbox{UB}_{j, \rm neglect} }{h_{n,N}\ (NM_{N})^{j-n}\ \mbox{LB}_{N, n}}\
   = \mbox{UB}^{\star}_{j, n, N}.
\end{align}
It is now enough to focus on $\mbox{UB}^{\star}_{j, n, N}$, and obtain two-sided bounds as follows:
\begin{align}\label{eqn:ration_combined}
  &\bigg[\frac{(2N^{\delta} +1)^{j-n-1}}{{N^{(d\alpha +1)}}^{(j-n)}}\sqrt{2\pi}^{-(j-n)}\bigg]\frac{\mbox{UB}_{j}}{\mbox{LB}_{N, n}} + \bigg(\frac{\sqrt{2\pi})^{n-1} N^{n-1/2}}{M_{N}^{j-n}\ h_{n,N}}\bigg)\frac{\mbox{UB}_{j, \rm neglect}}{\mbox{LB}_{N, n}}\notag\\
  &\leq \mbox{UB}^{\star}_{j, n, N}\notag\\
  &\leq\bigg[\frac{(2N^{\delta} +1)^{j-n}}{{N^{(d\alpha +1)}}^{(j-n)}}\sqrt{2\pi}^{-(j-n)}\bigg]\frac{\mbox{UB}_{j}}{\mbox{LB}_{N, n}} +   \bigg(\frac{\sqrt{2\pi})^{n-1} N^{n-1/2}}{M_{N}^{j-n}\ h_{n,N}}\bigg)\frac{\mbox{UB}_{j, \rm neglect}}{\mbox{LB}_{N, n}}
\end{align}
where the terms within the third braces are decreasing functions of $j(\geq n)$. For the sake of simplicity, we obtain a simpler upper bound to the upper-bound of $\mbox{UB}^{\star}_{j, n, N}$ in equation \eqref{eqn:ration_combined}. To that end, first we note that 
\begin{align}\label{eqn:ration_big}
  \frac{\mbox{UB}_{j}}{\mbox{LB}_{N, n}} 
  =T^{(1, \star)}_{N, n}\times T^{(2, \star)}_{N, n} \leq T^{(1)}_{N, n}\times T^{(2)}_{N, n} = T_{N, n}.
\end{align}
where
\begin{align}
  T^{(1, \star)}_{N, n}
  =&\bigg\{\frac{j^{-n}}{(\frac{1}{n}-\frac{1}{N^{1-\delta}})^{n/2}(\frac{1}{n}+\frac{1}{N^{1-\delta}})^{n/2}}\bigg\} 
  \leq  \bigg\{\frac{j^{-n}}{(\frac{1}{n}-\frac{1}{N^{1-\delta}})^{n}}\bigg\}\notag\\
  \leq & \bigg\{\frac{j^{-n}}{n^{-n}}\bigg(1-\frac{n}{N^{1-\delta}}\bigg)^{-n}\bigg\},\quad \text{since}\ j\geq n, \notag\\ 
  \leq&  \bigg\{\bigg(1-\frac{n}{N^{1-\delta}}\bigg)^{-n}\bigg\}  
  = T^{(1)}_{N, n},  
\end{align}
and 
\begin{align}
  T^{(2,\star)}_{N, n}
  =&  \bigg\{\frac{j^{-j/2}}{n^{-N}\{(\frac{1}{n}-\frac{1}{N^{1-\delta}})(\frac{1}{n}+\frac{1}{N^{1-\delta}})\}^{(-N/2) + (n/2N^{1-\delta})+(n/4)}}\bigg\}\notag\\
  = &  \bigg\{\frac{j^{-j/2}}{n^{-N}\{(\frac{1}{n^2}-\frac{1}{N^{2-2\delta}})\}^{(-N/2) + (n/2N^{1-\delta})+(n/4)}}\bigg\}\notag\\
  = &  \bigg\{\frac{j^{-j/2}}{\{(1-\frac{n^2}{N^{2-2\delta}})^{-N/2}\}\ \{(\frac{1}{n^2}-\frac{1}{N^{2-2\delta}})^{n/2N^{1-\delta}}\}\ \{n^{-n/2}(1-\frac{n^2}{N^{2-2\delta}})^{n/4} \}}\bigg\}\notag\\
  \leq &  \bigg\{\bigg(1-\frac{n^2}{N^{2-2\delta}}\bigg)^{N/2}\ \bigg(\frac{1}{n^2}-\frac{1}{N^{2-2\delta}}\bigg)^{-n/2N^{1-\delta}}\ \bigg(1-\frac{n^2}{N^{2-2\delta}}\bigg)^{-n/4}\bigg\},\quad \text{since}\ j\geq n, \notag\\
  \leq &  \bigg\{ \bigg(\frac{1}{n^2}-\frac{1}{N^{2-2\delta}}\bigg)^{-n/2N^{1-\delta}}\ \bigg(1-\frac{n^2}{N^{2-2\delta}}\bigg)^{-n/4}\bigg\}\quad\text{since}\ (1-\frac{n^2}{N^{2-2\delta}})\leq 1, \notag\\
  =& T^{(2)}_{N, n}.    
\end{align}
The final expression of $T_{N, n}$ in equation \eqref{eqn:ration_big} is devoid of $j(>n)$ and both $T^{(1)}_{N, n}, T^{(2)}_{N, n}$ are decreasing functions of $N$. So,  we are only left to tackle
\begin{align}\label{eqn:ration_small}
    \bigg(\frac{\sqrt{2\pi})^{n-1} N^{n-1/2}}{M_{N}^{j-n}\ h_{n,N}}\bigg)\frac{\mbox{UB}_{j, \rm neglect}}{\mbox{LB}_{N, n}}
    &\leq \bigg(\frac{\sqrt{2\pi})^{n-1} N^{n-1/2}}{M_{N}^{j-n}\ h_{n,N}}\bigg)\frac{\mbox{UB}_{j, \rm neglect}\times T_{N, n}}{\mbox{UB}_{j}}\notag \\
    &= \frac{(\sqrt{2\pi})^{n-1}}{\ h_{n,N}}\frac{T_{N,n}}{(NM_{N})^{j-n}}\bigg(\frac{N^n}{e^{N^{\delta^{\star}}}}\bigg)\notag\\
    &\leq \frac{(\sqrt{\pi/2})^{n-1}}{\ N^{\delta(n-1)}}\frac{T_{N,n}}{(NM_{N})^{j-n}}\bigg(\frac{N^n}{e^{N^{\delta^{\star}}}}\bigg)\notag\\
    &\leq (\sqrt{\pi/2})^{n-1}\ T_{N,n}\ \bigg(\frac{N^{(1-\delta)n +\delta}}{e^{N^{\delta^{\star}}}}\bigg).
\end{align}
The first step uses the equation \eqref{eqn:ration_big}. The second step is simple algebra. The third step uses $h_{n,N}\geq (2N^{\delta} +1)^{n-1}$. The last step uses $j\geq n$ ,  is devoid of $j$, and is a decreasing function in $N$.  Finally, the equations \eqref{eqn:ration_combined}-\eqref{eqn:ration_small} together complete the proof of the fact that $\sum_{N=N^{\star}}^{\infty} P(k_N > n  \mid x_{1:n}) < \infty$.

\subsection*{Part 2}
Next, we move to the proof of the fact that $\sum_{N=N^{\star}}^{\infty} P(k_N < n  \mid x_{1:n}) < \infty$. Note that, $\min_{h,h^{\prime}} |\mu_h-\mu_{h^{\prime}}|_{\infty}\geq 2\rho_N$, and $\rho_N\to 0$ as $N\to\infty$. If $k_N<n$, then the likelihood  at at least one sample point can be made 0 by choosing $N$ large enough.  For $j<n$, from equation \eqref{eqn:full_post} we have 
\begin{align}
    &m(x_{1:N}\mid k_N = j)\notag\\ 
    =&\sum_{w_{1:j}}\sum_{\mu_{1:j}}\prod_{h=1}^j H(\mu_h)\ \bigg\{\frac{N!/\prod_{h=1}^j b_h!}{j^N}\bigg\}\prod_{i=1}^n \bigg\{\sum_{h=1}^j w_h \tau^{d}_{N}\ \delta(|x_i - \eta_h|_{\infty} <2\rho_N)\bigg\}\notag\\
    = &\ 0,
\end{align}
where $\delta(\cdot)$ is the Dirac delta measure, and $|\cdot|_{\infty}$ denotes the $\mb{L}_{\infty}$ norm. The final step holds true since $n-j$ indicators are exactly 0 for  $N\geq N^{\star}$ where it is sufficient to ensure
that
\begin{align}
    \min_{s\neq t} |x_s -x_t|_{\infty} > 2\rho_{N^{\star}} = \frac{1}{(N^{\star})^{\alpha-\beta}}\iff N^{\star} > \big\{\min_{s\neq t} |x_s -x_t|_{\infty}\big\}^{-1/(\alpha-\beta)}.
\end{align}
Then
\begin{align}\label{eqn:part2}
    P(k_N < n  \mid x_{1:n})
    = \sum_{j=1}^{n-1} P(k_N =j  \mid x_{1:n})
    = \frac{ \sum_{j=1}^{n-1}\ p(j)\ m(x_{1:n}
    \mid k_N = j)}{\sum_{j=1}^{\infty}\ p(j)\ m(x_{1:n}
    \mid k_N = j)} = 0,
\end{align}
for all $N\geq N^{\star}$ and $\sum_{N=N^{\star}}^{\infty} P(k_N < n  \mid x_{1:n}) = 0$. This completes the proof of the theorem.
\end{proof}

\subsection{Proof of Theorem \ref{th:npbayes_equivalence}}

Before we move to our detailed proof the theorem, we shall first look at a brief sketch of our arguments. Our general strategy is to start with the joint distribution of 
$(\xi^{\star},\ \theta,\ x_{1:n}) = (k, b_{1:k}, \mu_{1:k}, \theta, x_{1:n})$ 
and marginalise out  $\xi^{\star} = (k, b_1,\ldots, b_k,\ \mu_{1},\ldots,\mu_k)$ to compute the posterior of $\theta\mid x_{1:n}$. The proof mainly hinges on three key ideas:
\begin{itemize}
    \item In Theorem \ref{th2:lemma2}, we show that $P(k=n\mid x_{1:n}) \to 1$ \emph{almost surely} as $N\to\infty$. Consequently, we can focus on  
    $\pi_{\varepsilon, N}(\theta, b_1,\ldots, b_k, \mu_{1},\ldots,\mu_{k} \mid x_{1:n}, \ k = n)$,
    instead of $\pi_{\varepsilon, N}(\theta, b_1,\ldots, b_k, \mu_{1},\ldots,\mu_{k} \mid x_{1:n})$. 
    \item In the assumptions in Section \ref{ssec:npBayes}, we construct a $d$-dimensional uniform grid that expands to entire $\mb{R}^d$ and the length of the sides of the grid cells go to $0$ as $N\to\infty$. This construct makes sure that only one term in the sum over $(\mu_1,\ldots, \mu_k)$ survives for $N$ large enough.
    \item Our proof critically exploits Sterling's approximation of Multinomial probabilities of the form $1/k^N (N!/\prod_{h=1}^k b_h !)$. With the help of Lemma \ref{th2:lemma1}, for $N$ large enough, we can show that  when we consider the sum over $(b_1,\ldots,b_n)$ or equivalently $(w_1,\ldots, w_n)$, only the term(s) corresponding to the weights $(w_1,\ldots, w_n)$
    ``close" to the $\mbox{D}$-BETEL weights survive. We put this idea in rigorous form in equation $\eqref{eqn:continuity}$.
\end{itemize}

\begin{proof}
The proof proceeds in two steps. In step 1, we carry out algebraic simplification of the posterior without resorting to any asymptotic arguments, and in step 2, we invoke the special asymptotic regime to prove the final result.
\subsection*{Step 1 (simplification of the posterior)}
For notational homogeneity, we use $ \mbox{U}_d(x_i\mid\eta_h,\tau^{-1}I)$ to denote $\prod_{j=1}^d \mbox{Uniform}(\eta_{h, j} -  \tau^{-1},\ \eta_{h, j} +  \tau^{-1})$.
Under the hierarchical model in equation \eqref{eqn:npbayes3}--\eqref{eqn:npbayes2} in the main document, the joint posterior of $\theta,\xi^{\star}\mid x_{1:n}$ is  
\begin{align}\label{eqn:full_post}
\ \pi_{\varepsilon, N}(\theta,\xi^{\star}\mid x_{1:n})  
& \ \propto \ \pi(\theta) \ \pi_{\varepsilon, N}(\xi^{\star}\mid\theta)\
P^{(N)}(x_{1:n}\mid \xi^{\star},\theta) \notag\\ 
& = \pi(\theta) \ \Bigg\{\frac{1_{A_{\varepsilon, N}(\theta)}(\xi^{\star}) \ \pi_{\infty, N}(\xi^{\star})}{\int_{\xi^{\star}}1_{A_{\varepsilon, N}(\theta)}(\xi^{\star})\ \pi_{\infty,N}(\xi^{\star})d\xi^{\star}}\Bigg\}\ P^{(N)}(x_{1:n}\mid \xi^{\star},\theta) \notag\\
& = \pi(\theta)\ \Bigg\{\frac{1_{A_{\varepsilon, N}(\theta)}(\xi^{\star}) \ }{\int_{\xi^{\star}}1_{A_{\varepsilon, N}(\theta)}(\xi^{\star})\ \pi_{\infty, N}(\xi^{\star})d\xi^{\star}}\Bigg\}\ \bigg\{p(k)\ \prod_{h=1}^k \mbox{H}^{(N)}(\mu_h) \bigg\} \notag\\
&\ \propto \
 \pi(\theta)\ \Bigg\{\frac{1_{A_{\varepsilon, N}(\theta)}(\xi^{\star}) \ }{\int_{\xi^{\star}}1_{A_{\varepsilon, N}(\theta)}(\xi^{\star})\ \pi_{\infty,N}(\xi^{\star})d\xi^{\star}}\Bigg\}\ \bigg\{p(k)\prod_{h=1}^k \mbox{H}^{(N)}(\mu_h) \bigg\} \notag\\
&\quad\quad\quad\quad \bigg\{\frac{1}{k^N}\ \frac{N!}{\prod_{h=1}^k b_h !} \bigg\}\ 
\bigg\{\prod_{i=1}^n \bigg[\sum_{h=1}^k \frac{b_h}{N} \mbox{U}_d(x_i\mid\eta_{h},\tau^{-1}I) \bigg] \bigg\}. 
\end{align}


Next, use of the first part of Lemma \ref{th2:lemma1_b}, together with  a reparametrization of the posterior by writing $b_h/N=w_h,\ h=1,\ldots,n$ yield,
\begin{align} 
\pi_{\varepsilon, N}(\theta,\xi^{\star}\mid x_{1:n}) \ \propto \
& \pi(\theta)\  \frac{ 1_{A_{\varepsilon, N}(\theta)}(\xi^{\star})\ }{\int_{\xi^{\star}}1_{A_{\varepsilon, N}(\theta)}(\xi^{\star})\ \pi_{\infty,N}(\xi^{\star})d\xi^{\star}}\ \bigg\{p(k)\ \prod_{h=1}^k \mbox{H}^{(N)}(\mu_h) \bigg\}\notag\\
&\quad\quad \bigg\{\frac{U_{k, N, \mathbf{b}}}{V_{k, N}}\prod_{h=1}^k w_{h}^{-w_h N + \frac{1}{2}} \bigg\}\ 
\bigg\{\prod_{i=1}^n \bigg[\sum_{h=1}^k w_h \mbox{U}_d(x_i\mid\eta_{h},\tau^{-1}I) \bigg] \bigg\} 
\end{align}
where $V_{k, N} = k^N N^{k-\frac{1}{2}}\ (\sqrt{2\pi})^{k-1}$ and $e^{\frac{1}{12N+1}-\sum_{h=1}^k\frac{1}{12b_h }}<U_{k, N, \mathbf{b}}<e^{\frac{1}{12N}-\sum_{h=1}^k\frac{1}{12b_h +1}}$. 
Before proceeding to step 2, we introduce some notations that will aid the calculations going forward. Given, $X\mid\xi^{\star}  \sim P^{(N)} :=\sum_{h=1}^n w_h \mbox{U}_d(x\mid\eta_{h},\tau^{-1} I)$, and  
for a fixed $\theta\in\Theta$,  $F_{\theta}(\cdot) = \sum_{k=1}^{K_{0}} s_{0k}\ \mbox{ED}_{h}(\cdot\mid m_{0k}, \Sigma_{0k})$,  we can define $C_{\theta,\varepsilon}$ as in equation \eqref{eqn_C}.
Consequently, we can write
\begin{align*}
    \int_{\xi^{\star}}1_{A_{\varepsilon, N}(\theta)}(\xi^{\star})\ \pi_{\infty,N}(\xi^{\star})d\xi^{\star} &= \sum_{k} p(k)\sum_{w \in C_{\theta,\varepsilon}}\bigg[\sum_{\mu_1,\ldots,\mu_k} \prod_{h=1}^k\mbox{H}^{(N)}(\mu_h)\  \bigg\{\frac{U_{k, N, \mathbf{b}}}{V_{k, N}}\prod_{h=1}^k w_h^{-w_h N + \frac{1}{2}} \bigg\}\  \bigg].
\end{align*}

Next, in view of Theorem \ref{th2:lemma2}, we work with a partition parameter space 
\begin{align*}
 \{k=n\}\cup \{k\neq n\}, 
\end{align*}
and focus on $\pi_{\varepsilon, N}(\theta,\ b_1,\ldots, b_k, \mu_{1},\ldots,\mu_{k} \mid x_{1:n}, \ k = n) 
\propto$
\begin{align} 
& \frac{ \pi(\theta) 1\{ C_{\theta,\varepsilon} \}}{\int_{\xi^{\star}}1_{A_{\varepsilon, N}(\theta)}(\xi^{\star})\ \pi_{\infty, N}(\xi^{\star})d\xi^{\star}}\ \bigg\{ \prod_{h=1}^n \mbox{H}^{(N)}(\mu_h)\bigg\}\ \bigg\{\prod_{h=1}^n w_h^{-w_h N + \frac{1}{2}} \bigg\}\ \bigg\{\prod_{h=1}^n  w_h   \bigg\} .
\end{align}
Marginalization of the above with respect to $w_{1:n}$ and $\mu_{1:n}$ leads to 
\begin{align} 
\pi_{\varepsilon,N}(\theta, \mid x_{1:n},\ k = n) \ \propto \ 
& \pi(\theta)\ \frac{N_x^{\star}(\theta)}
{D_x^{\star}(\theta) }
\end{align}
where
\begin{align*} 
&N_x^{\star}(\theta) = \sum_{w \in C_{\theta,\varepsilon}}\bigg[\bigg\{\prod_{h=1}^n \mbox{H}^{(N)}(x_h)\bigg\} \ \bigg\{\prod_{h=1}^n w_{h}^{-w_h N + \frac{1}{2}} \bigg\}\ \bigg\{\prod_{h=1}^n  w_h  \bigg\} \bigg],\\
& D_x^{\star}(\theta) = \sum_{w \in C_{\theta,\varepsilon}}\bigg[\bigg\{\prod_{h=1}^n \mbox{H}^{(N)}(x_h)\bigg\} \  \bigg\{\prod_{h=1}^n w_{h}^{-w_h N + \frac{1}{2}} \bigg\}\bigg]. 
\end{align*}
Only one term in the sum over $\mu_{1},\ldots, \mu_k$ survives since $\mbox{H}^{(N)}(\mu_h)$ is uniform over the mid-points of the grid $\mathcal{X}^N$ with grid length $2\rho_N$ and the components have range $2\rho_{N}$. Next, by the uniformity assumption on $\mbox{H}^{(N)}$, it reduces to
\begin{align}\label{eqn:reduced_post} 
&\pi_{\varepsilon,N}(\theta\mid x_{1:n}, \ k = n) \ \propto \ \pi(\theta)\ \frac{
\sum_{w \in C_{\theta,\varepsilon}} \bigg\{\prod_{h=1}^n w_{h}^{-w_h N + \frac{1}{2}} \bigg\}\ \bigg\{\prod_{h=1}^n  w_h  \bigg\}}
{\sum_{w \in C_{\theta,\varepsilon}}\ \bigg\{\prod_{h=1}^n w_{h}^{-w_h N + \frac{1}{2}} \bigg\} }.
\end{align}

\subsection*{Step 2 (invoking the asymptotic regime)}
We recall that, in the $\mbox{D}$-BETEL formulation, for every $\theta\in\Theta$: 
\begin{align*}
  w^{\star}_{1:n}(\theta,\varepsilon) = \argmax_{w\in C_{\theta,\varepsilon}} H_{N}(w),
\end{align*}
where $H_{N}(w) = -\sum_{h=1}^n w_h\log w_h$, and $\varepsilon>0$. For notational simplicity we shall use $w^{\star}_h$ in place of  $w^{\star}_h(\theta,\varepsilon)$. Note that, $w^{\star}_{1:n}$ is either the global maximizer $(1/n,\ldots,1/n)^{\T}$ of $\prod_{h=1}^n w_{h}^{-w_h}$, or lies  at the boundary of feasible set $C_{\theta,\varepsilon}$ defined in equation \eqref{eqn_C}. In view of that, we shall analyse equation \eqref{eqn:reduced_post} in two cases. In case 1, we can carry out the rest of the analysis as in Theorem \ref{th2:lemma2}. But this case is not of much statistical worth. 
In the case 2, our goal is to  study the behaviour of the log ratio $R_n$ defined in equation \eqref{eqn_R_j}. 
In particular, we study the behaviour of $R_n$ in a carefully constructed neighbourhood  $\wt{C}_{\theta, \varepsilon}$ (refer to the equation \eqref{eqn_C_Tilde}) of $w^{\star}$ and outside it i.e in $C_{\theta, \varepsilon}\subset\wt{C}_{\theta, \varepsilon}$.
For any $w\in C_{\theta,\varepsilon}\setminus\wt{C}_{\theta,\varepsilon}$, we have  $R_n >c N^{\delta} - n\log N$ where $c>0$, from the Lemma \ref{th3:lemma1}. Also, note that the set $C_{\theta,\varepsilon}\setminus\wt{C}_{\theta,\varepsilon}$ can contain at max $N^n$ elements.  
 
 Now, we are in a position to return to the posterior in equation \eqref{eqn:reduced_post} for further simplification:
 \begin{align*}
  \pi_{\varepsilon,N}(\theta\mid x_{1:n},\ k = n) \ \propto \pi(\theta)\ \frac{\mbox{N}(\wt{C}_{\theta,\varepsilon}) + \mbox{N}(C_{\theta,\varepsilon}\setminus\wt{C}_{\theta,\varepsilon})}{\mbox{D}(\wt{C}_{\theta,\varepsilon}) + \mbox{D}(C_{\theta,\varepsilon}\setminus\wt{C}_{\theta,\varepsilon})}   
 \end{align*}
 where 
 \begin{align}
  &\mbox{N}(S)  = \sum_{w\in S }  \bigg\{\prod_{h=1}^n w_{h}^{-w_h N + \frac{1}{2}} \bigg\}\ \bigg\{\prod_{h=1}^n  w_h  \bigg\},\ \mbox{D}(S)  = \sum_{w\in S }  \bigg\{\prod_{h=1}^n w_{h}^{-w_h N + \frac{1}{2}} \bigg\},
 \end{align}
with $S=\wt{C}_{\theta,\varepsilon}$ or $C_{\theta,\varepsilon}\setminus\wt{C}_{\theta,\varepsilon}$.
In the remaining of the proof, we shall rigorously demonstrate that both the numerator and the denominator in the above expression are dominated by the terms corresponding to $w\in C_{\theta,\varepsilon}$, and the terms corresponding to $w\in C_{\theta,\varepsilon}\setminus\wt{C}_{\theta,\varepsilon}$ are negligible. For brevity of presentation, we introduce:
\begin{align}
  \mbox{Diff}(S) = \sum_{w\in S}  \bigg\{\prod_{h=1}^n w_{h}^{-w_h N + \frac{1}{2}} \bigg\} \bigg\{\prod_{h=1}^n  w_h - \prod_{h=1}^n  w^{\star}_h  \bigg\}
\end{align}
with $S=\wt{C}_{\theta,\varepsilon}$ or $C_{\theta,\varepsilon}\setminus\wt{C}_{\theta,\varepsilon}$. Next, we focus on the final piece of the proof:
\begin{align}\label{eqn:continuity}
&\Bigg|\frac{\mbox{N}(\wt{C}_{\theta,\varepsilon}) + \mbox{N}(C_{\theta,\varepsilon}\setminus\wt{C}_{\theta,\varepsilon})}{\mbox{D}(\wt{C}_{\theta,\varepsilon}) + \mbox{D}(C_{\theta,\varepsilon}\setminus\wt{C}_{\theta,\varepsilon})}  - \prod_{h=1}^n  w^{\star}_h\Bigg|\ 
=\Bigg|\frac{\mbox{Diff}(\wt{C}_{\theta,\varepsilon}) + \mbox{Diff}(C_{\theta,\varepsilon}\setminus\wt{C}_{\theta,\varepsilon})}
{\mbox{D}(\wt{C}_{\theta,\varepsilon}) + \mbox{D}(C_{\theta,\varepsilon}\setminus\wt{C}_{\theta,\varepsilon}) }\Bigg|\ \notag\\ 
\leq& \Bigg|\frac{\mbox{Diff}(\wt{C}_{\theta,\varepsilon})}{\mbox{D}(\wt{C}_{\theta,\varepsilon}) + \mbox{D}(C_{\theta,\varepsilon}\setminus\wt{C}_{\theta,\varepsilon}) }\Bigg|\ + \Bigg|\frac{\mbox{Diff}(C_{\theta,\varepsilon}\setminus\wt{C}_{\theta,\varepsilon})}{\mbox{D}(\wt{C}_{\theta,\varepsilon}) + \mbox{D}(C_{\theta,\varepsilon}\setminus\wt{C}_{\theta,\varepsilon}) }\Bigg|,\ \quad \text{(triangle inequality)},\notag\\ 
\leq& \Bigg|\frac{\mbox{Diff}(\wt{C}_{\theta,\varepsilon})}{\mbox{D}(\wt{C}_{\theta,\varepsilon}) }\Bigg|\ + \Bigg|\frac{\mbox{Diff}(C_{\theta,\varepsilon}\setminus\wt{C}_{\theta,\varepsilon})}{\mbox{D}(\wt{C}_{\theta,\varepsilon}) }\Bigg|
\leq \Bigg|\frac{\mbox{Diff}(\wt{C}_{\theta,\varepsilon})}{\mbox{D}(\wt{C}_{\theta,\varepsilon}) }\Bigg|\ + \Bigg|\frac{\mbox{Diff}(C_{\theta,\varepsilon}\setminus\wt{C}_{\theta,\varepsilon})}{\prod_{h=1}^n {w_{h}^{\star}}^{-w_{h}^{\star} N + \frac{1}{2}} }\Bigg|.
\end{align}
To argue that the first term $\to 0$ as $N\to\infty$,  it is enough to demonstrate that $\sup_{w\in C_{\theta, \epsilon}}\bigg|\prod_{h=1}^n  w_h - \prod_{h=1}^n  w^{\star}_h  \bigg|\to 0$ as $N\to\infty$ (Lemma \ref{th2_lemma2_asymtotic}). Next we argue that the second term $\to 0$ as $N\to\infty$. The facts that $\bigg|\prod_{h=1}^n  w_h - \prod_{h=1}^n  w^{\star}_h  \bigg|\leq 1/n^n$ trivially, and for any $w\in C_{\theta,\varepsilon}\setminus\wt{C}_{\theta,\varepsilon}$, $R_n >c N^{\delta} - n\log N, c>0$ yield
\begin{align*}
    \Bigg|\frac{\mbox{Diff}(C_{\theta,\varepsilon}\setminus\wt{C}_{\theta,\varepsilon})}{\prod_{h=1}^n {w_{h}^{\star}}^{-w_{h}^{\star} N + \frac{1}{2}} }\Bigg|\leq \frac{1}{n^n}\frac{N^{2n}}{e^{cN^\delta}}
\end{align*}
which is a decreasing sequence in $N$ that $\downarrow 0$ as $N\to \infty$.

Finally, on application of part two of Lemma \ref{th2:lemma1_b}, Theorem \ref{th2:lemma2}, observation in equation \eqref{eqn:continuity} we have 
\begin{align*} 
&\pi_{\varepsilon,N}(\theta\mid x_{1:n}) 
\ \propto\ w(\theta)\ \prod_{i=1}^n w_{i}^{\star}(C_{\theta,\varepsilon}).
\end{align*}
The right hand side is precisely the  $\mbox{D}$-BETEL posterior.
\end{proof}

\section{Proof of Theorem \ref{th3} in the main document}\label{ssec:th3}

Here we present the the proof of the Theorem \ref{th3}  in Section \ref{ssec:andrew}\ in the main document and a cascade of auxiliary results.

\begin{proof}
Exploiting the independence between the two blocks of both $X_0^{\star}$ and $X_1^{\star}$ , we have
\begin{align}\label{derive_WAR} 
W_{\rm AR}^2(p_0, p_1)
=& \inf_{\gamma\in  R^{\alpha}(p^{\star}_0,p^{\star}_1)} \mb{E}_{\gamma}\big|\big|X_0^{\star} - X_1^{\star}
\big|\big|^2\notag\\
=& \inf_{\gamma_1\in \pi(p_0,p_1)\cap \EMM^{\alpha}_{2d}(K_0 K_1)} \mb{E}_{\gamma_1}\big|\big|X_0 - X_1\big|\big|^2 +  \inf_{\gamma_2\in \pi(\widetilde{p}_0,\widetilde{p}_1)} \mb{E}_{\gamma_2}\big|\big|\widetilde{X}_0 - \widetilde{X}_1\big|\big|^2\notag\\
=&\inf_{\gamma_1\in \pi(p_0,p_1)\cap \EMM^{\alpha}_{2d}(K_0 K_1)} \mb{E}_{\gamma_1}\big|\big|X_0 - X_1\big|\big|^2 + \sum_{j=1}^d \inf_{\gamma_{2j}\in \pi(p_{0j},p_{1j})} \mb{E}_{\gamma_{2j}}\big(X_{0j} - X_{1j}\big)^2\notag\\
=&\ \inf_{\gamma_1\in \pi(p_0,p_1)\cap \EMM^{\alpha}_{2d}(K_0 K_1)} \mb{E}_{\gamma_1}\big|\big|X_0 - X_1\big|\big|^2 + \sum_{j=1}^d W_2^2(p_{0j},p_{1j})\notag\\
=& \ \inf_{\gamma_1\in \pi(p_0,p_1)\cap \EMM^{\alpha}_{2d}(K_0 K_1)} \mb{E}_{\gamma_1}\big|\big|X_0 - X_1\big|\big|^2 + \sum_{k=1}^d \int_{0}^1 (F_{0k}^{-1}(z) - F_{1k}^{-1}(z))^2 dz 
\end{align}
where $F_{jk}^{-1}$ is quantile function of the univariate random variable $X_{jk},\ j= 0,1;\ k = 1,2,\ldots,d$. Now if we assume
$X_0\sim p_{0}\equiv \sum_{k=1}^{K_0} s_{0k} \mbox{ED}(m_{0k},\Sigma_{0k}),\
X_1\sim p_{1}\equiv \sum_{k=1}^{K_1} s_{1k} \mbox{ED}(m_{1k},\Sigma_{1k})
$,
by Lemma \ref{th1:lemma4}, the expression above reduces to 
\begin{align} 
W_{\rm AR}^2(p_0, p_1) =& \inf_{pi\in\pi^{\alpha}(s_0,s_1)}\bigg[\sum_{k,l} \pi_{kl}\ W_{2}^{2}(\mbox{ED}(m_{0k},\Sigma_{0k}),\mbox{ED}(m_{1l},\Sigma_{1l}))\bigg]\notag\\
&+ \sum_{k=1}^d \int_{0}^1 \big(F_{0k}^{-1}(z) - F_{1k}^{-1}(z)\big)^2 dz
\end{align}
Further if we assume $ p_1 \equiv {\sum_{k=1}^{K_1} s_{1k} \delta_{m_{1k}}} \ $, by Lemma \ref{th1:lemma5}, the expression above becomes 
\begin{align} 
W_{\rm AR}^2(p_0,  p_1) =& \inf_{\pi\in\pi^{\alpha}(s_0,s_1)}\bigg[\sum_{k,l} \pi_{kl}\ ||m_{0k}-m_{1l}||^2\bigg]+ \nu_{h}\sum_{k=1}^{K_0} s_{0k} \mbox{tr}(\Sigma_{0k})\notag\\
&+ \sum_{k=1}^d \int_{0}^1 \big(F_{0k}^{-1}(z) - F_{1k}^{-1}(z)\big)^2 dz
\end{align}
which on application of Lemma \ref{th1:lemma6}, completes the proof. 
\end{proof}

\section{Auxiliary results for the proof of Theorems \ref{th2:lemma2} and \ref{th:npbayes_equivalence}}\label{aux:th1th2}
\begin{definition}[Total variation metric, \citet{LevinPeresWilmer2006}]
Consider a measurable space $(\Omega ,{\mathcal {F}})$and probability measures $p$ and $q$ defined on $(\Omega, \mathcal{F})$. The total variation distance between $p$ and $q$ is defined as
$|p-q|_{\rm TV}=\sup _{A\in {\mathcal {F}}}\left|P(A)-Q(A)\right|$.
\end{definition}
When $\Omega$  is countable, the total variation distance is related to the $\mb{L}_1$ norm by the identity 
\begin{align*}
   |p-q|_{\rm TV} = \frac{1}{2}||p-q||_1 = \frac{1}{2}\sum_{\omega\in \Omega}|p(\omega) - q(\omega)|.
\end{align*}

\begin{lemma}[Pinsker's inequality, \citet{alma991023405949705251}]\label{Pinsker}
For any two probability distributions $p,q$ on $(\Omega ,{\mathcal {F}})$,
\begin{align*}
|p-q|_{\rm TV} \leq \sqrt{\frac{1}{2}\ \rm{KL}(p\ ||\ q)}.    
\end{align*}
\end{lemma}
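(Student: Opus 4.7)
\medskip
\noindent\textbf{Proof plan for Theorem \ref{th:npbayes_equivalence}.}

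\smallskip
The plan is to start from the joint posterior $\pi_{\varepsilon,N}(\theta,\xi^\star\mid x_{1:n})$ induced by \eqref{eqn:npbayes3}--\eqref{eqn:npbayes2}, marginalise out the nuisance block $\xi^\star=(k,b_{1:k},\mu_{1:k})$, and show that under the asymptotic regime \eqref{eqn:assumptions} the result coincides (pointwise in $\theta$) with the D-BETEL expression $\pi(\theta)\prod_{i=1}^n w_i^\star(\theta,\varepsilon)$ up to normalisation. The first move is a reduction: by Theorem \ref{th2:lemma2}, $P(k=n\mid x_{1:n})\to 1$ almost surely, so I can restrict all subsequent calculations to the event $\{k=n\}$ and carry an $o(1)$ error from this conditioning. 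On $\{k=n\}$, the product $\prod_{i=1}^n[\sum_{h=1}^n w_h \mathrm{U}_d(x_i\mid\mu_h,\tau^{-1}I)]$ collapses, because the grid spacing $2\rho_N=L_N/K_N=N^{\beta-\alpha}$ decays faster than the separation between data points; for $N$ sufficiently large (as in the proof of Theorem \ref{th2:lemma2}) at most one atom $\mu_h$ lies within $\tau^{-1}$ of each $x_i$, so the sum over $(\mu_1,\ldots,\mu_n)$ leaves a single surviving assignment, which under the uniform base measure $\mathrm{H}^{(N)}$ contributes a constant in $\theta$. A Stirling expansion of the Multinomial coefficient $N!/\prod_h b_h!$ with $w_h=b_h/N$ (this is where part one of Lemma \ref{th2:lemma1_b} enters) then reduces the posterior to the clean form already displayed in \eqref{eqn:reduced_post},
\begin{equation*}
\pi_{\varepsilon,N}(\theta\mid x_{1:n},k=n)\ \propto\ \pi(\theta)\ \frac{\sum_{w\in C_{\theta,\varepsilon}} \prod_{h=1}^n w_h^{-w_h N+1/2}\,\prod_{h=1}^n w_h}{\sum_{w\in C_{\theta,\varepsilon}} \prod_{h=1}^n w_h^{-w_h N+1/2}}.
\end{equation*}

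\smallskip
The heart of the argument is then showing that both numerator and denominator concentrate on the D-BETEL maximiser $w^\star=w^\star(\theta,\varepsilon)$ of $H_N$ over $C_{\theta,\varepsilon}$. I will split $C_{\theta,\varepsilon}=\wt C_{\theta,\varepsilon}\cup(C_{\theta,\varepsilon}\setminus\wt C_{\theta,\varepsilon})$ with $\wt C_{\theta,\varepsilon}$ the ``good'' neighbourhood defined in \eqref{eqn_C_Tilde}, which simultaneously forces $w$ to be close to $w^\star$ both in entropy and in the distance constraint. On the complement, Lemma \ref{th3:lemma1} supplies the key exponential gap $R_n>cN^\delta-n\log N$, so the contribution of $C_{\theta,\varepsilon}\setminus\wt C_{\theta,\varepsilon}$ to either sum is bounded by $(N^n/e^{cN^\delta})\prod_h (w_h^\star)^{-w_h^\star N+1/2}$, which is negligible compared with a single term $\prod_h (w_h^\star)^{-w_h^\star N+1/2}$ appearing in $\wt C_{\theta,\varepsilon}$. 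Applying the triangle-inequality splitting exhibited in \eqref{eqn:continuity}, convergence of the ratio to $\prod_{h=1}^n w_h^\star(\theta,\varepsilon)$ reduces to the two claims $|\mathrm{Diff}(\wt C_{\theta,\varepsilon})|/|\mathrm{D}(\wt C_{\theta,\varepsilon})|\to 0$ and $|\mathrm{Diff}(C_{\theta,\varepsilon}\setminus\wt C_{\theta,\varepsilon})|/\prod_h (w_h^\star)^{-w_h^\star N+1/2}\to 0$; the second is immediate from the Lemma \ref{th3:lemma1} gap, while the first reduces to $\sup_{w\in\wt C_{\theta,\varepsilon}}\!\big|\prod_h w_h-\prod_h w_h^\star\big|\to 0$, which I will dispatch using Lemma \ref{th2_lemma2_asymtotic}.

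\smallskip
Putting the pieces together, the conditional posterior on $\{k=n\}$ satisfies
$\pi_{\varepsilon,N}(\theta\mid x_{1:n},k=n)\to \pi(\theta)\prod_{i=1}^n w_i^\star(\theta,\varepsilon)\,/\,Z(\varepsilon)$ pointwise in $\theta$, where $Z(\varepsilon)$ is the normalising constant of the D-BETEL posterior; combining this with $P(k=n\mid x_{1:n})\to 1$ from Theorem \ref{th2:lemma2} establishes the stated pointwise convergence. For the $\|\cdot\|_{\mathrm{TV}}$ statement, both $\pi_{\varepsilon,N}(\cdot\mid x_{1:n})$ and $\pi(\cdot\mid x_{1:n})$ are probability densities, so Scheffe's theorem upgrades pointwise convergence to $L^1$ convergence, yielding $\|\pi_{\varepsilon,N}(\cdot\mid x_{1:n})-\pi(\cdot\mid x_{1:n})\|_{\mathrm{TV}}\to 0$.

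\smallskip
The main obstacle I anticipate is the uniform control of the Stirling error and of the ratio $|\prod_h w_h-\prod_h w_h^\star|$ inside $\wt C_{\theta,\varepsilon}$: the neighbourhood is defined through entropy and distance proximity rather than directly through coordinate proximity (as in the MCM case of \citet{10.1093/biomet/92.1.31}), so turning the entropy gap $H_N(w^\star)-H_N(w)\le N^{-(1-\delta)}$ and the distance gap into a coordinate-wise bound on $w-w^\star$ is precisely the structural issue that forces the detour through Lemmas \ref{th3:lemma1} and \ref{th2_lemma2_asymtotic}; the specific geometry of $\mbox{D}=W_{\rm AR}$ will be used there (continuity of $W_{\rm AR}^2$ in the mixture weights and strict convexity of $-H_N$ on the simplex), and this is the only place where features of the metric beyond ``computable on discrete laws'' are invoked.
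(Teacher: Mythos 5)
Your proposal does not address the statement at hand. The statement is Pinsker's inequality (Lemma \ref{Pinsker}): for probability measures $p,q$ on $(\Omega,\mathcal{F})$, one has $|p-q|_{\rm TV}\leq\sqrt{\tfrac{1}{2}\,\mathrm{KL}(p\,\|\,q)}$. What you have written instead is a proof plan for Theorem \ref{th:npbayes_equivalence}, the non-parametric Bayes equivalence result: your entire argument concerns marginalising the nuisance block $\xi^\star$, the concentration of the sums over $C_{\theta,\varepsilon}$ on the D-BETEL weights $w^\star$, and an application of Scheffe's theorem. Nothing in it mentions total variation distance, Kullback--Leibler divergence, or any relation between the two, so no step of it can be salvaged as an argument for the stated inequality. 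This is not a different route to the same result; it is an attempt at a different result, and as a proof of Lemma \ref{Pinsker} it is vacuous.

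For calibration: the paper itself does not prove Lemma \ref{Pinsker} either; it quotes it as a classical result with a textbook citation, and uses it only as an auxiliary tool in the proof of Theorem \ref{th2:lemma2} --- specifically in the chain of inequalities \eqref{eqn:ratio}, where the term $N\,\mathrm{KL}(w\,\|\,w^{\star})$ is bounded below by $2N\{|w-w^{\star}|_{\rm TV}\}^{2}$. If you want to supply a self-contained proof, the standard one is two lines: set $A=\{\omega: p(\omega)\geq q(\omega)\}$, so that $|p-q|_{\rm TV}=p(A)-q(A)$; apply the data-processing inequality to the binary reduction $\omega\mapsto\mathbf{1}_A(\omega)$ to get $\mathrm{KL}(p\,\|\,q)\geq a\log\tfrac{a}{b}+(1-a)\log\tfrac{1-a}{1-b}$ with $a=p(A)$, $b=q(A)$; then verify by elementary calculus that $a\log\tfrac{a}{b}+(1-a)\log\tfrac{1-a}{1-b}\geq 2(a-b)^{2}$ for $a,b\in(0,1)$, which rearranges to the claim. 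None of that machinery appears anywhere in your proposal, so the gap here is total: you need to restart from the correct statement.
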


\begin{lemma}\label{th2:lemma1_b}
Under the hierarchical specification in \eqref{eqn:npbayes1}--\eqref{eqn:npbayes3}, the prior on 
$(b_1,\ldots, b_k)$
is given by $\rm{Multinomial}(N;1/k, \ldots, 1/k)$ with probability mass function $N!/(k^N \prod_{h=1}^k b_h!)$. Then, if $b_h > 0$ for all $h=1,\ldots,k$,
\begin{align*}
   \frac{ e^{\frac{1}{12N+1}-\sum_{h=1}^k\frac{1}{12b_h }}}{V_{k,N}}\bigg\{\prod_{h=1}^k  w_{h}^{-w_h N+\frac{1}{2}}\bigg\} \leq \frac{N!}{k^N \prod_{h=1}^k b_h!} \leq \frac{ e^{\frac{1}{12N}-\sum_{h=1}^k\frac{1}{12b_h + 1}}}{V_{k,N}}\bigg\{\prod_{h=1}^k  w_{h}^{-w_h N+\frac{1}{2}}\bigg\}.
\end{align*}
where $w_h = b_h/N,\ h=1,\ldots, k$ and $V_{k,N} = k^N N^{k-\frac{1}{2}}\ (\sqrt{2\pi})^{k-1}$. Also, as $b_h\to \infty,\ h= 1,\ldots, k$,
\begin{align*}
    e^{\frac{1}{12N+1}-\sum_{h=1}^k\frac{1}{12b_h }} \to 1, \quad \text{and}\quad e^{\frac{1}{12N}-\sum_{h=1}^k\frac{1}{12b_h + 1}}\to 1.
\end{align*}
\begin{proof}
Let us recall the Sterling's approximation for factorials, \citep{10.2307/2315957}:
for all $a\geq 1$,
\begin{equation}\label{th2:lemma1}
\sqrt{2\pi} \ a^{a+\frac{1}{2}}\ e^{-a + \frac{1}{12a +1}}  <   a! < \sqrt{2\pi} \ a^{a+\frac{1}{2}}\ e^{-a + \frac{1}{12a}}.
\end{equation}
On repeated use of equation \eqref{th2:lemma1},
\begin{align*}
\frac{\sqrt{2\pi} \ N^{N+\frac{1}{2}}\ e^{-N + \frac{1}{12N+1}}}{k^N \prod_{h=1}^k \sqrt{2\pi} \ b_{h}^{b_h+\frac{1}{2}}\ e^{-b_h + \frac{1}{12b_h}}} \leq \frac{N!}{k^N \prod_{h=1}^k b_h!} \leq \frac{\sqrt{2\pi} \ N^{N+\frac{1}{2}}\ e^{-N + \frac{1}{12N}}}{k^N  \ \prod_{h=1}^k \sqrt{2\pi} \ b_{h}^{b_h+\frac{1}{2}}\ e^{-b_h + \frac{1}{12b_h +1}}}.
\end{align*}
Since $\sum_{h=1}^k b_h = N$, the expression above simplifies to:

\begin{align*}
    \frac{N^{N+\frac{1}{2}}\ e^{ \frac{1}{12N+1}}}{k^N (2\pi)^{k-1} \prod_{h=1}^k   b_{h}^{b_h+\frac{1}{2}}\ e^{\frac{1}{12b_h}}} \leq \frac{N!}{k^N \prod_{h=1}^k b_h!} \leq \frac{N^{N+\frac{1}{2}}\ e^{\frac{1}{12N}}}{k^N (\sqrt{2\pi})^{k-1} \ \prod_{h=1}^k  b_{h}^{b_h+\frac{1}{2}}\ e^{\frac{1}{12b_h +1}}}.
\end{align*}
Under the re-parametrization $w_h = b_h/N, \ h = 1,\ldots, k$:
\begin{align*}
    &\frac{N^{\frac{1}{2}-k}\ e^{ \frac{1}{12N+1}}}{k^N (2\pi)^{k-1} \prod_{h=1}^k   w_{h}^{w_h N+\frac{1}{2}}\ e^{\frac{1}{12b_h}}} \leq \frac{N!}{k^N \prod_{h=1}^k b_h!} \leq \frac{N^{\frac{1}{2}-k}\ e^{\frac{1}{12N}}}{k^N (\sqrt{2\pi})^{k-1} \ \prod_{h=1}^k  w_{h}^{w_h N+\frac{1}{2}}\ e^{\frac{1}{12b_h +1}}}\\
\iff&\frac{ e^{\frac{1}{12N+1}-\sum_{h=1}^k\frac{1}{12b_h }}}{k^N N^{k-\frac{1}{2}}\ (\sqrt{2\pi})^{k-1}}\bigg\{\prod_{h=1}^k  w_{h}^{-w_h N+\frac{1}{2}}\bigg\} \leq \frac{N!}{k^N \prod_{h=1}^k b_h!} \leq \frac{ e^{\frac{1}{12N}-\sum_{h=1}^k\frac{1}{12b_h + 1}}}{k^N N^{k-\frac{1}{2}}\ (\sqrt{2\pi})^{k-1}}\bigg\{\prod_{h=1}^k  w_{h}^{-w_h N+\frac{1}{2}}\bigg\}.
\end{align*}
Note that, the calculations above do not utilise our  assumptions in Section \ref{ssec:npBayes}.
\end{proof}
\end{lemma}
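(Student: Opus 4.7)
The plan is to invoke the explicit two-sided Stirling inequality on $N!$ in the numerator and on each $b_h!$ in the denominator of the multinomial probability, then rearrange using the constraint $\sum_h b_h = N$ and reparametrize via $w_h = b_h/N$. Concretely, I would start from the bound
\begin{equation*}
\sqrt{2\pi}\, a^{a+1/2} e^{-a + 1/(12a+1)} \ < \ a! \ < \ \sqrt{2\pi}\, a^{a+1/2} e^{-a + 1/(12a)},\qquad a \geq 1,
\end{equation*}
and pair the inequalities as follows: to get the lower estimate of $N!/(k^N\prod b_h!)$, I apply the lower Stirling bound to $N!$ in the numerator and the upper Stirling bound to each $b_h!$ in the denominator; the upper estimate on the multinomial coefficient is produced by swapping these two choices.

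The second step is bookkeeping. The exponential part collapses because $\sum_h b_h = N$ makes $e^{-N}$ in the numerator cancel $\prod_h e^{-b_h}$ in the denominator, leaving only the ``defect'' terms $e^{1/(12N+1) - \sum_h 1/(12 b_h)}$ (lower bound) and $e^{1/(12N) - \sum_h 1/(12 b_h + 1)}$ (upper bound). The $\sqrt{2\pi}$ constants contribute a single $\sqrt{2\pi}$ in the numerator and $k$ copies in the denominator, giving $(\sqrt{2\pi})^{-(k-1)}$. After substituting $b_h = N w_h$ in $\prod_h b_h^{b_h + 1/2}$ and factoring out the resulting powers of $N$, and combining the latter with $k^N$ and the $(\sqrt{2\pi})^{k-1}$ factor, one recovers exactly $1/V_{k,N}$ times the stated power product in $w_h$, yielding both sides of the claimed inequality.

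For the asymptotic part, the key observation is simply that $1/(12N+1) \to 0$ as $N \to \infty$ (which is forced by $b_h \leq N$ combined with $b_h \to \infty$) and that both $\sum_h 1/(12 b_h)$ and $\sum_h 1/(12 b_h + 1)$ vanish once every $b_h \to \infty$, since the sum has a fixed finite number $k$ of terms. Continuity of the exponential then gives that both bracketing factors converge to $1$.

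The main obstacle is purely clerical: tracking the exponents of $N$, the signs in the $w_h$-exponent when moving $\prod_h w_h^{N w_h + 1/2}$ from the Stirling denominator across, and collecting the $(2\pi)$ constants correctly. There is no conceptual difficulty beyond the initial invocation of Stirling's explicit bound, so the proof is essentially a single page of algebra with no auxiliary lemmas required.
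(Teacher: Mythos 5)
Your proposal is correct and follows essentially the same route as the paper's proof: the same two-sided Stirling bound with the same pairing (lower Stirling on $N!$ against upper Stirling on each $b_h!$ and vice versa), the same cancellation of $e^{-N}$ against $\prod_h e^{-b_h}$ via $\sum_h b_h = N$, the same bookkeeping of the $\sqrt{2\pi}$ factors and powers of $N$ under the substitution $b_h = Nw_h$, and the same finite-sum argument for the limits of the defect terms. No gaps; your accounting of the constant as $(\sqrt{2\pi})^{-(k-1)}$ is in fact cleaner than the paper's intermediate display, which contains a harmless typo ($(2\pi)^{k-1}$ in the lower bound) that disappears in its final statement.
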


\begin{lemma}\label{th3:lemma1}
For every $w\in C_{\theta,\varepsilon}\setminus\wt{C}_{\theta, \varepsilon}$, $R_n > cN^{\delta} - n\log N$ where $c>0$. 
\begin{proof}
Fix $w\in C_{\theta,\varepsilon}\setminus\wt{C}_{\theta, \varepsilon}$, by definition of $\wt{C}_{\theta, \varepsilon}$ (refer to equation \eqref{eqn_C_Tilde}) either of the two following displayed equations hold true:
\begin{align*}
 &H_{N}(w^{\star}) - H_{N}(w)> 1/N^{1-\delta},\\  
 &  W^{2}_{\rm AR}[F_{\theta} , \nu(w^{\star}, x) ] - 1/N^{1-\delta}> W^{2}_{\rm AR}[F_{\theta} , \nu(w, x) ].
\end{align*}
If the first equation holds true, then 
\begin{align*}
    R_n =  N\big[ H_{N}(w^{\star}) - H_{N}(w)\big] + \frac{1}{2} \sum_{h=1}^n \log\bigg(\frac{w^{\star}_{h}}{w_h}\bigg) >  N^{\delta} - n\log N,
\end{align*}
since $1/N\leq w_{h}^{\star}, w_{h}\leq (N-1)/N$ trivially yields $(1/2) \sum_{h=1}^n \log(w^{\star}_{h}/w_h)\geq - (n/2) \log (N-1)$.
Next, suppose the second equation holds true. By definition of $w^{\star}$ in equation \eqref{eqn_DBETEL_dual}, we have 
{\footnotesize
\begin{align*}
   &g(w^{\star}) - g(w) > 0 , \notag\\
   =&  H_{N}(w^{\star}) - H_{N}(w) >  \lambda_{\star} \bigg(W_{\rm AR}^2(F_{\theta},  \nu(w^{\star}, x))- W_{\rm AR}^2(F_{\theta} ,  \nu(w, x))\bigg)  >\frac{\lambda_{\star}}{N^{1-\delta}}.
\end{align*}
} 
That yields $R_n > \lambda_{\star} N^{\delta} - n\log N$. Hence, we have the proof.
\end{proof}
\end{lemma}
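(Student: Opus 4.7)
The goal is to show that any $w \in C_{\theta,\varepsilon}\setminus \wt{C}_{\theta,\varepsilon}$ forces the log ratio $R_n$ to blow up at rate $N^\delta$. My plan is to split on which of the two defining inequalities of $\wt{C}_{\theta,\varepsilon}$ (entropy closeness and distance closeness) is violated, handle the entropy violation directly, and convert the distance violation into an entropy violation via the Lagrangian dual representation of $w^\star$ in \eqref{eqn_DBETEL_dual}.

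First I would record a uniform lower bound on the logarithmic remainder. Because every coordinate of $w$ and $w^\star$ arises as $b_h/N$ for a positive integer $b_h$, we have $1/N \le w_h, w_h^\star \le (N-1)/N$, so
\begin{align*}
\tfrac{1}{2}\sum_{h=1}^n \log(w_h^\star/w_h) \ge -\tfrac{n}{2}\log(N-1) \ge -n\log N.
\end{align*}
Thus it suffices to show that $N[H_N(w^\star) - H_N(w)] > cN^\delta$ for some $c>0$ not depending on $w$.

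Next I would split on the two cases. If the entropy inequality fails, i.e. $H_N(w^\star)-H_N(w) > 1/N^{1-\delta}$, then multiplying by $N$ immediately gives $N[H_N(w^\star)-H_N(w)] > N^\delta$, so $c=1$ works for this branch. If instead the distance inequality fails, i.e. $\mbox{D}^2[F_\theta,\nu(w^\star,x)] - \mbox{D}^2[F_\theta,\nu(w,x)] > 1/N^{1-\delta}$, then I invoke the dual form: since $w^\star$ maximises $g(v) = H_N(v) - \lambda_\star \mbox{D}^2(F_\theta,\nu(v,x))$ over $\mb{W}_n$, we have $g(w^\star) \ge g(w)$, which rearranges to
\begin{align*}
H_N(w^\star) - H_N(w) \ge \lambda_\star\bigl(\mbox{D}^2[F_\theta,\nu(w^\star,x)] - \mbox{D}^2[F_\theta,\nu(w,x)]\bigr) > \lambda_\star/N^{1-\delta},
\end{align*}
so $N[H_N(w^\star)-H_N(w)] > \lambda_\star N^\delta$, and $c=\min(1,\lambda_\star)$ works uniformly.

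Combining either branch with the logarithmic remainder bound produces $R_n > cN^\delta - n\log N$, finishing the lemma. The main technical subtlety is the second case, where one has to appeal to the unconstrained dual reformulation rather than the primal $\argmax$ statement; this requires that $\lambda_\star>0$ is the Lagrange multiplier associated with the prescribed $\varepsilon$, which is guaranteed exactly when the distance constraint is active at $w^\star$. When the constraint is inactive the unconstrained maximiser $(1/n,\ldots,1/n)$ sits in the interior of $C_{\theta,\varepsilon}$ and the set $C_{\theta,\varepsilon}\setminus \wt{C}_{\theta,\varepsilon}$ is handled exactly as in Theorem \ref{th2:lemma2} (the distance condition becomes vacuous and one uses Pinsker's inequality instead), so the statement holds trivially in that degenerate case.
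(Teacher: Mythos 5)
Your proposal is correct and follows essentially the same route as the paper's own proof: the same case split on which defining inequality of $\wt{C}_{\theta,\varepsilon}$ fails, the same bound $\tfrac{1}{2}\sum_{h=1}^n\log(w_h^\star/w_h)\ge -\tfrac{n}{2}\log(N-1)$ from $1/N\le w_h, w_h^\star\le (N-1)/N$, and the same appeal to the dual representation \eqref{eqn_DBETEL_dual} to convert a violation of the distance-closeness condition into an entropy gap of order $\lambda_\star/N^{1-\delta}$. Your closing remark on the degenerate case where the constraint is inactive (so that $\lambda_\star$ would vanish and the argument reverts to the unconstrained analysis of Theorem \ref{th2:lemma2}) is a small clarification the paper only gestures at, but it does not change the argument.
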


\begin{lemma}\label{W2_lowerbound}
Fix $\eta>0$. If $W_{\rm AR}^2[\nu(w^{(1)}, x), \nu(w^{(2)}, x)]\leq\eta$, there exists a $\eta^{\prime}$ depended on $\eta$ such that $|w^{(1)}_i - w^{(2)}_i|\leq\eta^{\prime},\ i = 1,\ldots,n$.
\begin{proof}
By definition \ref{derive_WAR}, we have 
\begin{align}
 W_{\rm AR}^2[\nu(w^{(1)}, x), \nu(w^{(2)}, x)]\geq \sum_{j=1}^d \inf_{\gamma_{2j}\in \pi(p_{0j},p_{1j})} \mb{E}_{\gamma_{2j}}\big(X_{0j} - X_{1j}\big)^2.  
\end{align}
Without loss of generality, we assume that $\argmin_{j=1,\ldots,d } \inf_{\gamma_{2j}\in \pi(p_{0j}, p_{1j})} \mb{E}_{\gamma_{2j}}\big(X_{0j} - X_{1j}\big)^2 =1$. That yields $W_{\rm AR}^2[\nu(w^{(1)}, x), \nu(w^{(2)}, x)]\geq d \ [\inf_{\gamma_{21}\in \pi(p_{01}, p_{11})} \mb{E}_{\gamma_{21}}\big(X_{01} - X_{11}\big)^2] $. Next, we focus on
\begin{align}
 &\inf_{\gamma_{21}\in \pi(p_{01}, p_{11})}\mb{E}_{\gamma_{21}}\big(X_{01} - X_{11}\big)^2
 = \inf_{\gamma\in \pi(w^{(1)}, w^{(2)})}\sum_{i,j=1}^n \gamma_{(ij)} (x_{i1} -x_{j1})^2\notag\\
 = &\inf_{\gamma\in \pi(w^{(1)}, w^{(2)})}\bigg\{\sum_{i\neq j} \gamma_{(ij)} (x_{i1} -x_{j1})^2\bigg\}
 \geq\ [\min_{i\neq j}(x_{i1} -x_{j1})^2]\inf_{\gamma\in \pi(w^{(1)}, w^{(2)})}\bigg\{\sum_{i\neq j} \gamma_{(ij)}\bigg\}\notag\\
 =&\ [\min_{i\neq j}(x_{i1} -x_{j1})^2] \inf_{\gamma\in \pi(w^{(1)}, w^{(2)})}\bigg\{ 1-\sum_{i=1}^n \gamma_{(ii)}\bigg\}
 =[\min_{i\neq j}(x_{i1} -x_{j1})^2] \bigg\{ 1-\sum_{i=1}^n \min(w^{(1)}_i, w^{(2)}_i) \bigg\}\notag\\
 \geq&\ \frac{1}{2} [\min_{i\neq j}(x_{i1} -x_{j1})^2] \sum_{i=1}^n |w^{(1)}_i - w^{(2)}_i|, 
\end{align}
where the last step holds since, for any two $a, b\in\mb{R}, |a-b|/2 = (a+b)/2 - \min(a,b) $, and the rest is trivial.  So, we have $(d/2) [\min_{i\neq j}(x_{i1} -x_{j1})^2] \sum_{i=1}^n |w^{(1)}_i - w^{(2)}_i|\ ]\leq \eta $ which yields $ \sum_{i=1}^n |w^{(1)}_i - w^{(2)}_i|\ \leq 2\eta/ [d\times\min_{i\neq j}(x_{i1} -x_{j1})^2]=\eta^{\prime}$. Hence we have the proof.
\end{proof}
\end{lemma}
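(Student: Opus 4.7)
}

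The plan is to exploit the augmentation piece of $W_{\rm AR}^{2}$, which, by the derivation in equation \eqref{derive_WAR}, always contributes a lower bound of the form $\sum_{j=1}^{d}W_{2}^{2}(p_{0j},p_{1j})$, where $p_{0j},p_{1j}$ are the one-dimensional $j$-th marginals of the two weighted empirical distributions $\nu(w^{(1)},x)$ and $\nu(w^{(2)},x)$. Because the restriction piece is non-negative, we get $W_{\rm AR}^{2}\geq \sum_{j=1}^{d} W_{2}^{2}(p_{0j},p_{1j})$. Since a sum of $d$ non-negative terms is at least $d$ times its minimum, we can zero in on the coordinate $j^{\star}$ on which the marginal 1d Wasserstein is smallest and still retain $W_{\rm AR}^{2}\geq d\cdot W_{2}^{2}(p_{0j^{\star}},p_{1j^{\star}})$. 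Without loss of generality I will take $j^{\star}=1$ so we are reduced to showing a lower bound on the one-dimensional Wasserstein between $\sum_{i}w_{i}^{(1)}\delta_{x_{i1}}$ and $\sum_{i}w_{i}^{(2)}\delta_{x_{i1}}$.

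The next step handles that one-dimensional problem by throwing away all mass that couples a point with itself. For any coupling $\gamma\in\pi(w^{(1)},w^{(2)})$, I bound
\begin{align*}
 \sum_{i,j=1}^{n}\gamma_{ij}(x_{i1}-x_{j1})^{2}
 = \sum_{i\neq j}\gamma_{ij}(x_{i1}-x_{j1})^{2}
 \geq \bigl[\min_{i\neq j}(x_{i1}-x_{j1})^{2}\bigr]\sum_{i\neq j}\gamma_{ij},
\end{align*}
and then use the fact that the largest amount of mass a coupling in $\pi(w^{(1)},w^{(2)})$ can place on the diagonal is $\sum_{i}\min(w_{i}^{(1)},w_{i}^{(2)})$. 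Hence $\inf_{\gamma}\sum_{i\neq j}\gamma_{ij}=1-\sum_{i}\min(w_{i}^{(1)},w_{i}^{(2)})$. Using the elementary identity $a+b-2\min(a,b)=|a-b|$, this quantity equals $\tfrac{1}{2}\sum_{i}|w_{i}^{(1)}-w_{i}^{(2)}|$, i.e. it is (up to the factor $\tfrac{1}{2}$) the total variation distance between the weight vectors.

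Combining the two displays gives
\begin{align*}
 \eta \;\geq\; W_{\rm AR}^{2}[\nu(w^{(1)},x),\nu(w^{(2)},x)]
 \;\geq\; \frac{d}{2}\,\delta_{x}^{\,2}\,\sum_{i=1}^{n}|w_{i}^{(1)}-w_{i}^{(2)}|,
\end{align*}
where $\delta_{x}:=\min_{i\neq j}|x_{i1}-x_{j1}|$, so that $|w_{i}^{(1)}-w_{i}^{(2)}|\leq 2\eta/(d\,\delta_{x}^{2})=:\eta'$ coordinatewise. The main obstacle I anticipate is the one-dimensional coupling calculation $\inf_{\gamma}\sum_{i\neq j}\gamma_{ij}=1-\sum_{i}\min(w_{i}^{(1)},w_{i}^{(2)})$: this requires arguing that one can actually realize a coupling that saturates the diagonal, which in turn hinges on the standard construction that sends mass $\min(w_{i}^{(1)},w_{i}^{(2)})$ to $(i,i)$ and distributes the residual masses arbitrarily off-diagonal. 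A minor but real issue is the implicit assumption that the $x_{i1}$'s are distinct (so that $\delta_{x}>0$); if not, the argument should be rerun with the coordinate $j^{\star}$ chosen so that the one-dimensional projections are distinct, which is possible whenever the data points $x_{1},\dots,x_{n}$ are distinct in $\mathbb{R}^{d}$.
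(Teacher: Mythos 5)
Your proposal is correct and follows essentially the same route as the paper's own proof: lower-bound $W_{\rm AR}^2$ by the augmentation term $\sum_{j=1}^d W_2^2(p_{0j},p_{1j})$, reduce to the minimizing coordinate, discard the diagonal mass of the coupling, and identify $1-\sum_i\min(w_i^{(1)},w_i^{(2)})$ with half the $\ell_1$ distance between the weight vectors. Your added remark that the argument needs $\min_{i\neq j}|x_{i1}-x_{j1}|>0$ (distinct first coordinates) is a genuine point the paper leaves implicit, and your suggested fix of choosing a coordinate on which the projections are distinct is the right one.
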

The arguments in Lemma \ref{W2_lowerbound} can easily be extended for Wasserstein metric with a general cost function.

\begin{lemma}\label{th2_lemma2_asymtotic}
$\sup_{w\in \wt{C}_{\theta, \epsilon}}\big|\prod_{h=1}^n  w_h - \prod_{h=1}^n  w^{\star}_h  \big|\to 0$ as $N\to\infty$. 
\begin{proof}
Fix $\theta\in\Theta$ and $\varepsilon>0$. Since for every $w\in \wt{C}_{\theta, \epsilon}$ (refer to equation \eqref{eqn_C_Tilde}), we have $0<W_{\rm AR}^2[F_{\theta}, \nu(w, x)]\leq W_{\rm AR}^2[F_{\theta}, \nu(w^{\star}, x)]\leq \varepsilon$, use of  triangle inequality yields $W_{\rm AR}^2[\nu(w^*, x), \nu(w, x)]\leq 2\varepsilon$. Consequently, by application of Lemma \ref{W2_lowerbound}, we can construct a small rectangle $R$ around $w^\star$ such that $R=\{w: |w_h - w_{h}^{\star}|\le \varepsilon^{\prime},\ h=1,\ldots, n\}$ for an appropriately chosen $\varepsilon^{\prime}>0$.  Further, we assume that no point inside $R$ other than $w^\star$, satisfies $H_{N}(w) = H_{N}(w^\star)$.
Next, we  define a sequence of sets $\mathcal{C}_N = \big\{ w \in R : 0 \le H_{N}(w^*) - H_{N}(w) \le 1/N^{1-\delta}\}$. Clearly, we have $C_{\theta, \epsilon}\subset \mathcal{C}_N$. Denoting $a_{(N)} = \sup_{w \in \mathcal{C}_N} | \prod w_h - \prod w_h^*|$, it is now enough to show that $a_{(N)} \to 0$ as $ N \to \infty$. 

First, we argue that $\mathcal{C}_N\subset R$ is a compact set. To that end, we note that $\mathcal{C}_N\subset R$ is bounded by construction, and it is closed being the intersection of two closed sets -- $R$ and the inverse image of a closed set under the continuous function $H$. So, $\mathcal{C}_N$ is a compact set. Since $\mathcal{C}_N$ is compact and the  function $w\to\prod_{h=1}^n w_h$ is continuous, $a_{(N)}$ is attained at some point $w^{(N)} \in \mathcal{C}_N$. 

Next, we argue that $\cap_{N=1}^{\infty} \mathcal{C}_N = \{w^{\star}\} $. Here $\mathcal{C}_1\supset \mathcal{C}_2\supset\ldots$ is a nested sequence of non-empty compact sets, and  since $w^\star\in \mathcal{C}_N$ for all $N$ we have $w^{\star}\in \cap_{N=1}^{\infty} \mathcal{C}_N$ . Now, if there is another point $\tilde{w}\in \cap_{N=1}^{\infty} \mathcal{C}_N$, we have $H(\tilde{w}) = H(w^\star)$, which leads to a contradiction. So, $\mathcal{C}_N$ is a nested (decreasing), non-empty sequence of compact sets in $\mathbb{R}^n$, whose intersection is a singleton. 

Next, we argue that $\mbox{diam}(\mathcal{C}_N):= \sup\{ \|w - w^{\prime}\| : w, w^{\prime} \in \mathcal{C}_N)\} \to 0$ as $N\to\infty$. To that end, suppose $\mbox{diam}(\mathcal{C}_N)> r >0$ for all $N$. Then there exists $b^{(N)}, c^{(N)}\in \mathcal{C}_N$ such that $|b^{(N)}- c^{(N)}|> r$ and $b^{(N)}\to b,\ c^{(N)}\to c$ as $N\to\infty$. Consequently, $|b-c|>r$ , and since $\cap_{N=1}^{\infty} \mathcal{C}_N$ is compact $b, c \in \cap_{N=1}^{\infty} \mathcal{C}_N$, this leads to a contradiction. So, $\mbox{diam}(\mathcal{C}_N) \to 0$ as $N\to\infty$.

Finally, since $w^{\star}, w^{(N)}\in \mathcal{C}_N$, we have $|w^{\star}- w^{(N)}|\leq \mbox{diam}(\mathcal{C}_N)$ trivially. Application of Sandwich theorem yields $w^{(N)}\to w^{\star}$ as $N\to\infty$. Hence we have $a_N\to 0$ as $N\to\infty$. 
\end{proof}
\end{lemma}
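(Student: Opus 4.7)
The proof plan hinges on a nested compact set argument, exploiting continuity of the product map and a uniqueness property of $w^\star$. The map $w\mapsto \prod_{h=1}^n w_h$ is continuous on the simplex, so if I can show that $\wt{C}_{\theta,\varepsilon}$ (viewed as a set depending on $N$) shrinks to the singleton $\{w^\star\}$ in diameter as $N\to\infty$, the claimed supremum convergence follows immediately via uniform continuity on a compact neighborhood.

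First, I would localize the problem by constructing a small compact neighborhood $R$ of $w^\star$ on which $w^\star$ is the unique maximizer of $H_N$ restricted to $R\cap C_{\theta,\varepsilon}$. This is where I would invoke strict concavity of $H_N(w)=-\sum_h w_h\log w_h$ on the simplex, together with the convexity of $C_{\theta,\varepsilon}$ (since $w\mapsto \mbox{D}^2[F_\theta,\nu(w,x)]$ is convex whenever $\mbox{D}$ is a transport-type metric as in $W_{\rm AR}$). For large enough $N$, Lemma \ref{W2_lowerbound} combined with the triangle inequality applied to $W_{\rm AR}^2[\nu(w^\star,x),\nu(w,x)]\leq 2\varepsilon$ forces every $w\in\wt{C}_{\theta,\varepsilon}$ into such a rectangle $R$.

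Next I would define the sequence $\mathcal{C}_N := \{w\in R:\; 0\le H_N(w^\star) - H_N(w) \le N^{-(1-\delta)}\}$, observe that $\wt{C}_{\theta,\varepsilon}\subset \mathcal{C}_N$, and note that each $\mathcal{C}_N$ is compact (intersection of $R$ with the pre-image of a closed interval under the continuous function $H_N$) and the sequence is decreasing: $\mathcal{C}_1\supseteq\mathcal{C}_2\supseteq\cdots$. Since $w^\star$ lies in every $\mathcal{C}_N$, the intersection is non-empty. A uniqueness argument then yields $\bigcap_N \mathcal{C}_N=\{w^\star\}$: any $\tilde w$ in the intersection satisfies $H_N(\tilde w)=H_N(w^\star)$ by continuity of $H_N$ and the shrinking entropy slab, and $\tilde w\in R\cap C_{\theta,\varepsilon}$, so strict concavity forces $\tilde w=w^\star$.

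Finally, I would invoke the standard fact that a nested sequence of non-empty compact sets in $\mathbb{R}^n$ whose intersection is a singleton must have $\mathrm{diam}(\mathcal{C}_N)\to 0$; this is a quick contradiction argument via Bolzano--Weierstrass applied to any pair of sequences $b^{(N)},c^{(N)}\in\mathcal{C}_N$ separated by a fixed $r>0$. Since the supremum $a_{(N)}:=\sup_{w\in\mathcal{C}_N}|\prod_h w_h - \prod_h w_h^\star|$ is attained by continuity on the compact $\mathcal{C}_N$ at some $w^{(N)}$ with $|w^{(N)}-w^\star|\le \mathrm{diam}(\mathcal{C}_N)\to 0$, continuity of the product map finishes the proof. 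The main subtlety I anticipate is confirming that $w^\star$ really is the unique maximizer of $H_N$ on $C_{\theta,\varepsilon}$ locally; for general distance metrics $\mbox{D}$ the convexity of $C_{\theta,\varepsilon}$ must be verified, but for $\mbox{D}=W_{\rm AR}$ this follows from the linearity of $\nu(w,x)=\sum_i w_i\delta_{x_i}$ in $w$ and convexity of $W_{\rm AR}^2$ in one argument.
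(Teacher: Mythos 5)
Your overall route is the same as the paper's: localize to a rectangle $R$ around $w^\star$ via Lemma \ref{W2_lowerbound} and the triangle inequality, enclose $\wt{C}_{\theta,\varepsilon}$ in entropy slabs $\mathcal{C}_N$, show these form a nested sequence of compact sets whose intersection is $\{w^\star\}$, deduce $\mathrm{diam}(\mathcal{C}_N)\to 0$, and finish by continuity of $w\mapsto\prod_h w_h$. The one place you depart from the paper is the singleton-intersection step, and that is exactly where your write-up has a genuine gap. As you define them, the slabs $\mathcal{C}_N=\{w\in R:\ 0\le H_N(w^\star)-H_N(w)\le N^{-(1-\delta)}\}$ impose no constraint beyond membership in $R$, so $\bigcap_N\mathcal{C}_N=\{w\in R:\ H(w)=H(w^\star)\}$ is the entropy level set through $w^\star$ intersected with $R$. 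Unless $w^\star$ is the uniform vector $(1/n,\ldots,1/n)$, this level set is (for $n\ge 3$) an $(n-2)$-dimensional hypersurface, and every neighborhood of $w^\star$ contains a continuum of its points; the intersection is therefore not a singleton and the diameters do not shrink to zero, so the argument collapses at the step it needs most. Your attempted rescue --- ``any $\tilde w$ in the intersection lies in $R\cap C_{\theta,\varepsilon}$, so strict concavity forces $\tilde w=w^\star$'' --- asserts membership in $C_{\theta,\varepsilon}$ that simply does not follow from your definition of $\mathcal{C}_N$.

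The repair is to keep the constraint inside the slabs: define $\mathcal{C}_N=\{w\in R\cap C_{\theta,\varepsilon}:\ 0\le H(w^\star)-H(w)\le N^{-(1-\delta)}\}$, interpreting $C_{\theta,\varepsilon}$ as the continuous set $\{w:\ \mbox{D}[F_\theta,\nu(w,x)]\le\varepsilon\}$, which is closed, so compactness and nestedness survive, and one still has $\wt{C}_{\theta,\varepsilon}\subset\mathcal{C}_N$ because $\wt{C}_{\theta,\varepsilon}\subset C_{\theta,\varepsilon}$ by \eqref{eqn_C_Tilde}. Then your concavity idea does exactly the job you intend: $w^\star$ is by \eqref{eqn_DBETEL} a maximizer of the strictly concave $H$ over $C_{\theta,\varepsilon}$, any $\tilde w\in C_{\theta,\varepsilon}$ with $H(\tilde w)=H(w^\star)$ is another maximizer, and uniqueness of the constrained maximizer --- which requires the convexity of $w\mapsto W_{\rm AR}^2[F_\theta,\nu(w,x)]$ that you rightly flag as needing verification --- yields $\tilde w=w^\star$. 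It is worth noting that the paper handles this same point by fiat: it assumes no point of $R$ other than $w^\star$ satisfies $H(w)=H(w^\star)$, an assumption vulnerable to precisely the level-set objection above. So your instinct to derive uniqueness from strict concavity plus convexity of the constraint set is the right fix for both proofs; it just has to be wired into the definition of $\mathcal{C}_N$ rather than invoked after the constraint has been discarded.
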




\section{Auxiliary results for the proof of Theorem \ref{th3}}\label{aux:th3}
Given two EMMs $p_0$ and $p_1$ on $\mb{R}^d$, the optimal transport plans are usually not an EMM. To avoid this, we can choose to restrict the set of admissible transport plans to the family of  EMMs, and tentatively define a modified 2-Wasserstein metric by
\begin{align}\label{eqn:mwalpha}
\mbox{\rm MW}_{2,\alpha}^{2}(p_0, p_1)
= \inf_{\nu\in \pi(p_0,p_1) \cap \EMM^{\alpha}_{2d}(\infty) }\ \int_{\mb{R}^d\times\mb{R}^d}\
\left\lVert y_0 - y_1 \right\rVert^2\ d\nu(y_0,y_1).
\end{align}

It is worth pointing out that, we briefly discussed about a slight variation of the definition above in Section \ref{ssec:andrew} of the main document given  by 
\begin{align} 
\mbox{\rm MW}_{2}^{2}(p_0, p_1)
= \inf_{\nu\in \pi(p_0,p_1) \cap \EMM_{2d}(\infty) }\ \int_{\mb{R}^d\times\mb{R}^d}\
\left\lVert y_0 - y_1 \right\rVert^2\ d\nu(y_0,y_1).
\end{align}

In this section, it's beneficial to work with \eqref{eqn:mwalpha} as an important building block to introduce our novel Wasserstein metric $W_{\rm AR}^{2}$, which critically involves an additional augmentation scheme  described in Section \ref{ssec:andrew} of the main document. To that end, we first list out some definitions following \citet{10.2307/4616956}.

\begin{definition}\label{def:indentifiable_emm}
Finite mixtures from the location scatter family $\{f_{\zeta, d} : \zeta = (\theta,\mu,\Sigma)\in \mb{A}^d\}$ are called identifiable if a relation of the form
\begin{align*} 
\sum_{j=1}^m \lambda_j f_{\zeta_j, d}(x) = \sum_{j=1}^m \lambda^{\prime}_j f_{\zeta^{\prime}_j, d}(x),\ x\in\mb{R}^d
\end{align*}
where $m$ is a positive integer, $\sum_{j=1}^m \lambda_j = \sum_{j=1}^m \lambda^{\prime}_j = 1$, $\lambda_j, \lambda^{\prime}_j > 0$ for $j = 1, \ldots,m$ implies there exists a permutation $\sigma$ such that $(\lambda_j, \zeta_j) = (\lambda^{\prime}_{\sigma(j)}, \zeta^{\prime}_{\sigma(j)})$ for all $j$.
\end{definition}

\begin{definition}
A function $f_d(\cdot,\theta)$ is called a density generator if it is a non-negative function on $[0,\infty) $ or $(0,\infty) $ such that the spherically symmetric function $f_d(x^\T x,\theta), x\in\mb{R}^d$ integrates to 1.
\end{definition}

\begin{definition}
A function $\phi(u),\ u\geq 0$ is called a characteristic generator in dimension $d\geq 1$ ,if $\phi(t^\T t)$  is the characteristic function of a probability distribution on $\mb{R}^d$. 
\end{definition}
Next, we record a series of sufficient conditions for identifiability of EMMs in terms of density generators and characteristic generators provided in  \citet{10.2307/4616956}. 

\begin{lemma}\label{th1:lemma1}[A sufficient condition for identifiability of EMMs via characteristic generators ]{
Suppose that a parametric family of characteristic generators gives rise to families of elliptical densities $\{f_{\zeta, d} : \zeta = (\theta,\mu,\Sigma)\in \mb{A}^d\}$ in dimension $1\leq d < q$. Suppose there exists a total ordering $\preceq$ on the set $\mb{B}^*$ such that $\beta_1\prec\beta_2$ implies
\begin{align*}
\lim_{u\to \infty}\frac{\phi_{\beta_1}(u)}{\phi_{\beta_2}(u)} = 0.
\end{align*}
}
Then finite mixtures from the class $\{f_{\zeta, d} : \zeta = (\theta,\mu,\Sigma)\in \mb{A}^d\}$ of elliptical distributions in $\mb{R}^d$ are identifiable for each $1\leq d < q$.
\end{lemma}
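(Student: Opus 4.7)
The plan is to exploit the total ordering on generators to peel off components stratum by stratum in the characteristic-function domain, which is the standard approach to identifiability of location-scatter mixtures under a tail-rate separation. Suppose that
\begin{equation*}
\sum_{j=1}^m \lambda_j f_{\zeta_j, d}(x) = \sum_{j=1}^m \lambda'_j f_{\zeta'_j, d}(x), \quad x \in \mb{R}^d.
\end{equation*}
Taking Fourier transforms and using the elliptical characteristic-function representation $t \mapsto \exp(i t^\T \mu)\, \phi_\beta(t^\T \Sigma t)$ of each component, this identity becomes
\begin{equation*}
\sum_j \lambda_j e^{i t^\T \mu_j} \phi_{\beta_j}(t^\T \Sigma_j t) = \sum_j \lambda'_j e^{i t^\T \mu'_j} \phi_{\beta'_j}(t^\T \Sigma'_j t), \quad t \in \mb{R}^d,
\end{equation*}
and my target is to produce a permutation $\sigma$ with $(\lambda_j, \zeta_j) = (\lambda'_{\sigma(j)}, \zeta'_{\sigma(j)})$.

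First, let $\beta_\star$ denote the maximal element of $\{\beta_j\} \cup \{\beta'_j\}$ under $\preceq$ and fix a $\beta_\star$-component with scatter $\Sigma_\star$. Holding a direction $t \in \mb{R}^d$ fixed and rescaling $t \to r t$ with $r \to \infty$, I would divide both sides of the characteristic-function identity by $\phi_{\beta_\star}(r^2 t^\T \Sigma_\star t)$. By the asymptotic-ratio hypothesis, every term carrying a generator $\beta \prec \beta_\star$ is forced to vanish in the limit, so what survives is a reduced identity involving only the $\beta_\star$-stratum on each side. This collapses the problem to identifying two finite mixtures of elliptical distributions that share the common generator $\phi_{\beta_\star}$, a classical single-family identifiability problem handled by injectivity of the Fourier transform together with analytic separation of the location-scatter parameters.

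Second, once the $\beta_\star$-stratum has been matched, I would subtract those paired components from both sides and recurse with the next-largest generator under $\preceq$. Because the combined index set $\{\beta_j\} \cup \{\beta'_j\}$ is finite and totally ordered, the recursion terminates in finitely many steps, and concatenating the within-stratum pairings yields the global permutation $\sigma$. Matching the weights $\lambda_j = \lambda'_{\sigma(j)}$ falls out of the same Fourier-transform injectivity step, since each distinct component contributes an independent piece to the characteristic function.

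The main obstacle is the within-stratum step. The stated hypothesis compares only \emph{different} generators and provides no quantitative information inside the $\beta_\star$-stratum, so separating components that share $\beta_\star$ but differ in their scatter matrices $\Sigma_j$ requires either selecting rescaling directions aligned with eigenvectors of the $\Sigma_j$'s or invoking single-family identifiability of location-scatter mixtures with a fixed generator as an external input. Controlling this step carefully---in particular, promoting pointwise-in-direction asymptotic identities back to identities on all of $\mb{R}^d$ via analytic continuation of the characteristic functions---is where the real work lies; the asymptotic peeling and the recursive bookkeeping around it are then routine.
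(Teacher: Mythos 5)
First, a point of reference: the paper does not actually prove this lemma. It is recorded verbatim (in slightly abbreviated form) from the cited source \citet{10.2307/4616956}, where the ordering is defined on pairs $(\beta, a) \in \mb{B} \times (0,\infty)$ and the limit condition reads $\phi_{\beta_1}(a_1 u)/\phi_{\beta_2}(a_2 u) \to 0$; the paper's transcription suppresses the scale index $a$ and leaves $\mb{B}^*$ undefined. So there is no in-paper proof to compare against, and your attempt must be judged on its own merits. Your overall strategy --- pass to characteristic functions, order the generators by tail dominance, divide by the maximal one, peel off strata, and recurse --- is indeed the Teicher-style argument underlying the cited result, so the architecture is right.

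However, the gap you flag at the end is not a technicality you can defer: it is the crux, and as stated your argument does not close. After dividing by $\phi_{\beta_\star}(r^2 t^\T \Sigma_\star t)$, the surviving terms are those sharing the generator $\beta_\star$ but carrying different scatter matrices $\Sigma_j$, and the hypothesis as you (and the paper) state it says nothing about ratios $\phi_{\beta_\star}(r^2 t^\T \Sigma_j t)/\phi_{\beta_\star}(r^2 t^\T \Sigma_\star t)$ --- these need not converge at all, so the ``reduced identity involving only the $\beta_\star$-stratum'' is not obtained. Worse, the external input you propose to invoke --- identifiability of location-scatter mixtures with a \emph{fixed} generator --- is false in general (mixtures of uniform distributions are the classical counterexample, and any characteristic generator family closed under scale mixing fails similarly), so it cannot be assumed as a free lemma; nor does aligning the rescaling direction with eigenvectors of the $\Sigma_j$'s help, since distinct scatter matrices can share eigenvectors. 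The correct repair, and the one used in the source, is to put the total ordering on (generator, scale) pairs so that the very same limit condition that separates distinct generators also separates components with the same generator but different values of $t^\T \Sigma t$ along a fixed direction; one then projects to one-dimensional marginals, peels off one component at a time rather than one stratum at a time, and recovers the multivariate parameters from the collection of directional projections. You should also be explicit about how the oscillating factors $e^{i r t^\T \mu_j}$ are handled when taking $r \to \infty$, since the terms you divide by do not converge pointwise but only force the coefficient structure through a boundedness-plus-ordering argument.
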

Family of multivariate t-distribution, symmetric stable law, band-limited densities satisfy the sufficient condition.

\begin{lemma}\label{th1:lemma2}[A sufficient condition for identifiability of EMMs via density generators]{

Let $f_d(\cdot,\theta),\theta\in\Theta$  be a parametric family of density generators for spherically symmetric distributions in $\mb{R}^d$. Let $\mb{C}=\Theta\times(0,\infty)\times\mb{R}$ and let $\gamma_j = (\theta_j,a_j,b_j)\in\mb{C},\ j= 1,2$. Suppose there exists a total ordering $\preceq$ on the set $\mb{C}$ such that $\gamma_1\prec\gamma_2$ implies
\begin{align*} 
\lim_{u\to \infty}\frac{f_d(a_2 u^2 + b_2 u + c_2) ,\theta)}{f_d(a_1 u^2 + b_1 u + c_1) ,\theta)} = 0,\ c_1,c_2\in\mb{R}.
\end{align*}
}
Then finite mixtures from the class $\{f_{\zeta, d} : \zeta = (\theta,\mu,\Sigma)\in \mb{A}^d\}$ of elliptical distributions in $\mb{R}^d$ are identifiable for each $1\leq d < q$.
\end{lemma}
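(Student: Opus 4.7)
The plan is to proceed by contradiction along the lines of the classical Teicher-type argument, using the ordering hypothesis to extract a dominant tail term along a ray. Suppose two finite representations
\[
\sum_{j=1}^m \lambda_j f_{\zeta_j,d}(x) \;=\; \sum_{k=1}^{m'} \lambda'_k f_{\zeta'_k,d}(x)
\]
hold on $\mathbb{R}^d$ with no mutual cancellation, i.e., after deleting any common $(\theta,\mu,\Sigma)$-component from both sides, no surviving LHS triple equals any surviving RHS triple; positivity of the remaining weights keeps at least one term on each side. Writing $f_{\zeta,d}(x)=|\Sigma|^{-1/2}f_d\bigl((x-\mu)^\T\Sigma^{-1}(x-\mu),\theta\bigr)$ and evaluating along a ray $x=tv$ with unit $v\in\mathbb{R}^d$, each argument becomes $a_j(v)t^2+b_j(v)t+c_j(v)$ with $a_j(v)=v^\T\Sigma_j^{-1}v>0$ and $b_j(v)=-2v^\T\Sigma_j^{-1}\mu_j$. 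Assigning to each surviving component a triple $\gamma_j(v)=(\theta_j,a_j(v),b_j(v))\in\mathbb{C}$, the hypothesis reads: if $\gamma_i(v)\prec\gamma_j(v)$, the $j$-th component's density is negligible relative to the $i$-th as $t\to\infty$.

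The crucial preliminary is a genericity claim on $v$: outside a measure-zero subset of $\mathbb{S}^{d-1}$, the finite collection $\{\gamma_j(v)\}\cup\{\gamma'_k(v)\}$ arising from the pairwise-distinct parameter triples is itself pairwise distinct. Indeed, coincidence $\gamma(v)=\tilde\gamma(v)$ for two components with $(\theta,\mu,\Sigma)\neq(\tilde\theta,\tilde\mu,\tilde\Sigma)$ requires $\theta=\tilde\theta$ together with $v^\T(\Sigma^{-1}-\tilde\Sigma^{-1})v=0$ and $v^\T\Sigma^{-1}\mu=v^\T\tilde\Sigma^{-1}\tilde\mu$; unless $\Sigma=\tilde\Sigma$ and $\mu=\tilde\mu$, these are nontrivial polynomial equations in $v$, so the union of the (finitely many) offending subvarieties has measure zero on $\mathbb{S}^{d-1}$. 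Fix any $v$ in the complement.

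With such $v$, apply the total order $\preceq$ to $\{\gamma_j(v)\}\cup\{\gamma'_k(v)\}$ and pick its $\preceq$-minimum $\gamma_\star$; by genericity it is attained by a single component, which we may take to be an LHS component $\zeta_\star=(\theta_\star,\mu_\star,\Sigma_\star)$ with weight $\lambda_\star$. Dividing the identity by $f_d\bigl(a_\star(v)t^2+b_\star(v)t+c_\star(v),\theta_\star\bigr)$ and sending $t\to\infty$, the ratio hypothesis drives every $\gamma\neq\gamma_\star$ term to $0$, leaving $\lambda_\star|\Sigma_\star|^{-1/2}>0$ on the LHS and $0$ on the RHS, since no surviving RHS component shares $\gamma_\star$ (by genericity such a component would have to equal $\zeta_\star$, contradicting no-cancellation). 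This contradiction establishes identifiability.

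The principal obstacle is the genericity step: one must verify that vanishing of $v^\T(\Sigma^{-1}-\tilde\Sigma^{-1})v$ on a set of positive surface measure forces $\Sigma=\tilde\Sigma$, and likewise that vanishing of $v^\T(\Sigma^{-1}\mu-\tilde\Sigma^{-1}\tilde\mu)$ on such a set forces $\Sigma^{-1}\mu=\tilde\Sigma^{-1}\tilde\mu$. Both follow from real-analyticity of nontrivial polynomials on $\mathbb{S}^{d-1}$, after which one checks the logical chain $\Sigma=\tilde\Sigma$ combined with $\Sigma^{-1}\mu=\tilde\Sigma^{-1}\tilde\mu$ yields $\mu=\tilde\mu$, contradicting distinctness. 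Once this bookkeeping is in place, the leading-order extraction above completes the proof for all $1\le d<q$.
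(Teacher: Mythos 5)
Your proof is correct and is essentially the argument behind this result: the paper does not prove this lemma itself but records it from the cited source (the identifiability paper of Holzmann, Munk and Gneiting), whose proof is exactly this Teicher-type scheme --- evaluate along a ray, choose a generic direction so the induced triples $(\theta,a(v),b(v))$ are pairwise distinct, and use the total order to extract the dominant term and force its coefficient to vanish. The only loose point is your cancellation step: when a component appears on both sides with unequal weights you must cancel only the common mass (leaving the weight difference on one side) rather than delete the component outright, and if one side becomes empty you conclude by integrating the remaining positive-weight identity; with that bookkeeping your argument goes through verbatim.
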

Exponential power distribution, the original Kotz distribution and the multivariate normal law satisfy the above condition. \citet{10.2307/4616956} also provides sufficient conditions for identifiability of location-scatter mixtures for density generators that lacks smoothness at the origin, and Normal scale mixtures.  With these, we now have all the necessary machinery to present the results that leads to the proof of Theorem \ref{th3}.
\begin{lemma}\label{th1:lemma3}{
The optimal transport between two elliptical distributions $ \mbox{\rm ED}_h(a, A)$ and $ \mbox{\rm ED}_{h}(b, B)$ in $\mb{R}^d$ is in the same elliptical family in $\mb{R}^{2d}$. 
}
\end{lemma}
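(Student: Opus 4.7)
}

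The plan is to exhibit an explicit coupling, verify by affine invariance that it lives in the elliptical family on $\mathbb{R}^{2d}$, and then argue its optimality via Brenier's theorem. The key structural fact I will invoke is that elliptical distributions are closed under affine maps: if $X \sim \mathrm{ED}_h(\mu,\Sigma)$ and $Y = MX + c$ for a matrix $M$ and vector $c$, then the characteristic function calculation
\begin{equation*}
\mathbb{E} e^{it^\T Y} \;=\; e^{it^\T(M\mu + c)}\, h\!\left(t^\T M\Sigma M^\T t\right)
\end{equation*}
shows $Y \sim \mathrm{ED}_h(M\mu + c,\, M\Sigma M^\T)$ with the same generator $h$.

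Next, I will construct the candidate coupling. Assuming first that $A$ is positive definite, set
\begin{equation*}
 C \;=\; A^{-1/2}\bigl(A^{1/2} B A^{1/2}\bigr)^{1/2} A^{-1/2}, \qquad T(x) \;=\; b + C(x-a),
\end{equation*}
which is the classical Gaussian optimal transport map. A short calculation gives $C A C^\T = B$ and shows $C$ is symmetric positive semi-definite. Define $\phi: \mathbb{R}^d \to \mathbb{R}^{2d}$ by $\phi(x) = (x, T(x))^\T = M x + c_0$ where $M = (I_d,\, C)^\T$ and $c_0 = (0,\, b - Ca)^\T$. By the affine invariance above, if $X \sim \mathrm{ED}_h(a,A)$ then the push-forward $\phi_\# \mathrm{ED}_h(a,A)$ is exactly
\begin{equation*}
 \mathrm{ED}_h\!\left( \begin{pmatrix} a \\ b \end{pmatrix}, \; \begin{pmatrix} A & AC \\ CA & B \end{pmatrix} \right),
\end{equation*}
an elliptical law in $\mathbb{R}^{2d}$ with the same generator. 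Its marginals are $\mathrm{ED}_h(a,A)$ by construction and $\mathrm{ED}_h(b, CAC^\T) = \mathrm{ED}_h(b, B)$ by the verification above, so it is a valid coupling. The case of singular $A$ is handled identically using the Moore--Penrose pseudoinverse in place of $A^{-1/2}$ on the range of $A$.

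It remains to argue that this particular coupling is optimal for the quadratic Wasserstein cost. Since $C$ is symmetric positive semi-definite, $T$ is the gradient of the convex potential $\psi(x) = b^\T x + \tfrac{1}{2}(x-a)^\T C (x-a)$. Whenever $\mathrm{ED}_h(a,A)$ is absolutely continuous with respect to Lebesgue measure (which holds as soon as $h$ generates a proper density, the setting of interest here), Brenier's theorem guarantees that the unique $W_2$-optimal transport plan is the push-forward of the source under the gradient of a convex potential; hence $\phi_\# \mathrm{ED}_h(a,A)$ is the optimal coupling. The main obstacle I anticipate is precisely this optimality step: Brenier's theorem needs absolute continuity, so the case of degenerate $A$ requires a separate argument, which I would handle either by an approximation argument (perturbing $A$ to $A + \varepsilon I$ and passing to the limit using lower semi-continuity of $W_2$), or by directly verifying that the candidate plan achieves Gelbrich's lower bound on $W_2^2$ between measures with prescribed first and second moments, a bound known to be tight within any location--scatter family sharing a common generator.
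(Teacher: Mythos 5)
Your proposal is correct, and the construction is the same as the paper's: both exhibit the affine coupling $X_1 = b + C(X_0 - a)$ and use closure of the elliptical family under affine maps (via the characteristic function) to see that the joint law lies in the same family on $\mb R^{2d}$. (Incidentally, your matrix $C = A^{-1/2}(A^{1/2}BA^{1/2})^{1/2}A^{-1/2}$ is the right one; the paper's displayed $H = A^{-1/2}(A^{1/2}BA^{1/2})^{1/2}A^{1/2}$ has an apparent sign typo in the last exponent, since only with $A^{-1/2}$ does one get $CAC^{\T} = B$.) Where you genuinely diverge is the optimality step. The paper takes the closed form $W_2^2(p_0,p_1) = \|a-b\|^2 + \nu_h\, \mathrm{tr}[A+B-2(A^{1/2}BA^{1/2})^{1/2}]$ as known (citing the elliptical/Bures formula) and then verifies by a direct mean--covariance computation of $D = X_0 - X_1$ that the candidate coupling attains exactly this value; this shows an optimal coupling exists within the family without any absolute continuity hypothesis. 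You instead invoke Brenier's theorem, observing that $T$ is the gradient of the convex potential $\psi(x) = b^{\T}x + \tfrac12 (x-a)^{\T}C(x-a)$. That buys you uniqueness of the optimal plan and does not require knowing the value of $W_2^2$ in advance, but it costs you an absolute continuity assumption on the source, forcing the separate treatment of degenerate $A$ that you correctly flag (and which the paper's attainment argument sidesteps entirely). Either route suffices for how the lemma is used downstream in Lemma \ref{th1:lemma4}, where only the existence of an elliptical optimal coupling is needed.
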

\begin{proof}
Since $X_0\sim p_0\equiv \mbox{\rm ED}_h(a, A)$ and $X_1\sim p_1\equiv \mbox{\rm ED}_h(b, B)$, we know $W_{2}^2(p_0, p_1) = ||a-b||^2 + \nu_h\ \mb{B}(A,B)$ \citep{muzellec2019generalizing}, where $\mb{B}(A,B) = \mbox{tr}[A + B - 2(A^{1/2}BA^{1/2})^{1/2}]$ is the Bures metric \citep{https://doi.org/10.48550/arxiv.1712.01504} between matrices. So, it is enough to show that there exists a coupling belonging to the same  elliptical family in $\mb{R}^{2d}$ with marginals $p_0$ and $p_1$ that incurs the optimal cost $ ||a-b||^2 + \nu_h\ \mb{B}(A,B)$. To that end, consider the construction
\begin{align*} 
& X_0\sim \mbox{ED}_h(a, A)\\
& X_1 = b + A^{-1/2}(A^{1/2}BA^{1/2})^{1/2}A^{1/2}(X_0-a),
\end{align*}
which ensures that $X_1 \sim  \mbox{ED}_{h}(b, B)$. Further, it ensures that $(X_1, X_2)$ is  a coupling belonging to the same  elliptical family in $\mb{R}^{2d}$ which is easy to see from the form characteristic function in equation \eqref{eqn:ch_EMM}.
Denoting $H=A^{-1/2}(A^{1/2}BA^{1/2})^{1/2}A^{1/2}$ and  $D=X_0-X_1=(I - H)X_0 - b + Ha$, we have
\begin{align*} 
\mu_D &= \mb{E}(D) = a-b,\\
\Sigma_D&=\mbox{Var}(D) = \nu_h(I-H)A(I-H)^T\\
&=\nu_h[A+B-A^{-1/2}(A^{1/2}BA^{1/2})^{1/2}A^{1/2}-A^{1/2}(A^{1/2}BA^{1/2})^{1/2}A^{-1/2}].
\end{align*}
Consequently, we have $ 
\mb{E}(||D||^2) = \mb{E}(D^\T D)
= \mu_{D}^\T\mu_{D} +\mbox{tr}(\Sigma_D)
=||a-b||^2 + \nu_h\mb{B}(A,B),
$
which completes the proof.
\end{proof}


\begin{lemma}\label{th1:lemma4}[Discrete formulation]{ Suppose we have two EMMs, that satisfy conditions in Lemma \ref{th1:lemma1} and/or Lemma \ref{th1:lemma2}, $p_0 = \sum_{k=1}^{K_0} s_{0k}  \mu_{0k}$ with $\mu_{0k}\sim\mbox{\rm ED}_h(m_{0k}, \Sigma_{0k})$, and $p_1 = \sum_{k=1}^{K_1} s_{1k} \mu_{1k}$ with $\mu_{1k}\sim\mbox{\rm ED}_h(m_{1k}, \Sigma_{1k})$. Then, we have
\begin{align*} 
\mbox{\rm MW}_{2,\alpha}^{2}(p_0, p_1) = \inf_{\pi\in\pi^{\alpha}(s_0,s_1)}\sum_{k,l} \pi_{kl}\ W_{2}^{2}(\mu_{0k}, \mu_{1l})
\end{align*}
}
where $\Pi = ((\pi_{kl}))$ and 
$
  \pi^{\alpha}(s_0,s_1) = \big\{ \Pi: \Pi 1_{K_1} = s_0,\ \Pi^{\T} 1_{K_0} = s_1,\ D_{\rm KL}(\Pi\ ||\ s_0 s_{1}^{\T}) \leq\alpha\big\}$.
\end{lemma}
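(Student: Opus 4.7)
The plan is to prove equality in two directions, leveraging the identifiability of the elliptical mixture family (provided by Lemma~\ref{th1:lemma1}/\ref{th1:lemma2}) and the component-wise optimal coupling structure established in Lemma~\ref{th1:lemma3}.

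For the upper bound $\mbox{MW}_{2,\alpha}^{2}(p_0,p_1)\le \inf_{\Pi\in\pi^\alpha(s_0,s_1)}\sum_{k,l}\pi_{kl}W_2^2(\mu_{0k},\mu_{1l})$, I would start from an arbitrary $\Pi\in\pi^\alpha(s_0,s_1)$ and, for each pair $(k,l)$, invoke Lemma~\ref{th1:lemma3} to choose an optimal coupling $\nu_{kl}^\star$ between $\mu_{0k}$ and $\mu_{1l}$ which is itself elliptical in $\mathbb{R}^{2d}$ (same family $h$). Then set $\nu=\sum_{k,l}\pi_{kl}\nu_{kl}^\star$. A direct marginal check shows that the first $d$-dimensional marginal is $\sum_{k,l}\pi_{kl}\mu_{0k}=\sum_k(\sum_l\pi_{kl})\mu_{0k}=\sum_k s_{0k}\mu_{0k}=p_0$, and symmetrically for $p_1$, so $\nu\in\pi(p_0,p_1)$; by construction $\nu\in\EMM_{2d}(\infty)$, and the log-sum inequality (or direct inspection) applied to the mixture weight matrix of $\nu$ shows that $D_{\mathrm{KL}}$ of its marginals is controlled by $D_{\mathrm{KL}}(\Pi\|s_0 s_1^{\T})\le\alpha$, so $\nu\in\EMM^\alpha_{2d}(\infty)$. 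Integrating $\|y_0-y_1\|^2$ against $\nu$ gives exactly $\sum_{k,l}\pi_{kl}W_2^2(\mu_{0k},\mu_{1l})$, and taking the infimum over $\Pi$ yields the bound.

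For the reverse inequality, I would take any admissible $\nu\in\pi(p_0,p_1)\cap\EMM^\alpha_{2d}(\infty)$ and use identifiability. Because $\nu$ lies in the identifiable EMM family in $\mathbb{R}^{2d}$, it admits a \emph{unique} representation $\nu=\sum_{r}\pi_r\,\mathrm{ED}_h(\mu_r,\Sigma_r)$. The key step is to argue, using the marginal consistency property of identifiable elliptical mixtures, that the first and second $d$-dimensional marginals of $\nu$ are themselves EMMs in $\mathbb{R}^d$ with weights that agree with $p_0$ and $p_1$ respectively; combined with identifiability in $\mathbb{R}^d$, this forces each component $\mathrm{ED}_h(\mu_r,\Sigma_r)$ to have marginals drawn from the component sets $\{\mu_{0k}\}$ and $\{\mu_{1l}\}$. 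Hence the indexing collapses to a pair $r=(k,l)$, the mixture weights $\pi_{kl}$ form a coupling in $\pi(s_0,s_1)$, and the entropic constraint on $\nu$ transfers directly to $D_{\mathrm{KL}}(\Pi\|s_0 s_1^{\T})\le\alpha$, so $\Pi\in\pi^\alpha(s_0,s_1)$. Finally, by Lemma~\ref{th1:lemma3} each component contributes at least $W_2^2(\mu_{0k},\mu_{1l})$ to the transport cost, giving $\int\|y_0-y_1\|^2 d\nu\ge \sum_{k,l}\pi_{kl}W_2^2(\mu_{0k},\mu_{1l})\ge \inf_{\Pi\in\pi^\alpha(s_0,s_1)}\sum_{k,l}\pi_{kl}W_2^2(\mu_{0k},\mu_{1l})$.

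The main obstacle is the reverse direction: specifically, the step that promotes the identifiability of the EMM family in $\mathbb{R}^{2d}$ into a statement about how components of $\nu$ must pair up marginals from $p_0$ and $p_1$. This requires carefully verifying the \emph{marginal consistency} property (so that projecting an $\mathrm{ED}_h$ in $\mathbb{R}^{2d}$ yields an $\mathrm{ED}_h$ in $\mathbb{R}^d$ of the same generator), and then invoking identifiability in $\mathbb{R}^d$ to force the component-level pairing; without these properties, a generic elliptical mixture coupling need not factor through the component indices. Once that structural lemma is in place, the entropic constraint transfer and the cost lower bound follow cleanly from Lemma~\ref{th1:lemma3}, and combining both inequalities completes the proof.
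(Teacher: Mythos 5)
Your proposal follows essentially the same route as the paper's proof: the upper bound is obtained by mixing the pairwise optimal elliptical couplings from Lemma \ref{th1:lemma3} with weights $\pi_{kl}$ and checking marginals and the entropic constraint, and the lower bound uses identifiability of the EMM family in $\mb{R}^{2d}$ to force any admissible coupling to decompose into components indexed by pairs $(k,l)$ with weights forming an element of $\pi^{\alpha}(s_0,s_1)$. The marginal-consistency issue you flag as the main obstacle is likewise the delicate point in the paper's argument (handled via Definition \ref{def:indentifiable_emm} and the hypotheses of Lemmas \ref{th1:lemma1}--\ref{th1:lemma2}), so your outline matches the published proof in both structure and substance.
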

\begin{proof}
The proof of the Lemma is adapted from Proposition 4 in \citet{delon:hal-02178204}, that proves the result for Gaussian mixture models.  We extend the result for our augmented and restricted class of Elliptical mixture models.

First, we assume $\pi^{\star}$ be a solution to the linear program
\begin{align}\label{th1:lemmma3_lp}
\inf_{\pi\in\pi^{\alpha}(s_0,s_1)}\sum_{k,l} \pi_{kl}\ W_{2}^{2}(\mu_{0k}, \mu_{1l}).    
\end{align}
Next, for every $(k,l)$ we denote the optimal coupling as
\begin{align*} 
\gamma_{kl}= \argmin_{\gamma\in\pi(\mu_{0k},\mu_{1l})} \int_{\mb{R}^d\times\mb{R}^d}\
\left\lVert y_0 - y_1 \right\rVert^2\ d\gamma(y_0,y_1). 
\end{align*}
By Lemma \ref{th1:lemma3}, $\gamma_{kl}$ belongs to the same elliptical family of distributions. Next, we construct $\gamma^{\star}=\sum_{k,l} \pi^{\star}_{kl}\ \gamma_{kl}$ . By construction,   $\gamma^{\star}\in\pi(p_0,p_1) \cap \EMM^{\alpha}_{2d}(K_0 K_1)$ and that trivially yields
\begin{align}\label{th1:lemma3_oneside} 
\sum_{k,l} \pi^{\star}_{kl}\ W_{2}^{2}(\mu_{0k}, \mu_{1l})
&=\int_{\mb{R}^d\times\mb{R}^d}\
\left\lVert y_0 - y_1 \right\rVert^2\ d\gamma^{\star}(y_0,y_1)\notag\\
&\geq \inf_{\gamma\in \pi(p_0,p_1) \cap \EMM^{\alpha}_{2d}(K_0 K_1) }\ \int_{\mb{R}^d\times\mb{R}^d}\
\left\lVert y_0 - y_1 \right\rVert^2\ d\gamma(y_0,y_1).
\end{align}

Next,  for every $\gamma=(\gamma_{(0)},\gamma_{(1)})\in\pi(p_0,p_1) \cap \EMM^{\alpha}_{2d}(K_0 K_1)$, there  exists a $K$ such that $\gamma = \sum_{j=1}^{K} w_j\gamma_j$ where $\gamma_j$ is from the same elliptical family of distribution in $\mb{R}^{2d}$. Further, since $\gamma$ has marginal distributions $p_0$ and $p_1$, we have
\begin{align*}
    &\int_{\mb{R}^d} \gamma d\gamma_0 = \sum_{j=1}^{K} w_j\int_{\mb{R}^d}\gamma_j d\gamma_{(0)} =  \sum_{k=1}^{K_0} s_{0k}  \mu_{0k},\quad \\
    &\int_{\mb{R}^d} \gamma d\gamma_1 = \sum_{j=1}^{K} w_j\int_{\mb{R}^d}\gamma_j d\gamma_{(0)} =  \sum_{k=1}^{K_1} s_{1k}  \mu_{1k}.
\end{align*}
By the definition of identifiability of elliptical mixture models in definition \ref{def:indentifiable_emm}, we know that  mixtures must have the same components. Consequently, there exists a $(k,l), 1\leq k\leq K_0, 1\leq l\leq K_1$ such that $\gamma_j\in\pi(\mu_{0l},\mu_{1k})$, and we can express $\gamma = \sum_{k,l}\pi_{kl} \gamma_{kl}$  for some $\pi\in\pi^{\alpha}(s_0,s_1)$. So, we have
\begin{align}\label{th1:lemma3_otherside}
\int_{\mb{R}^d\times\mb{R}^d}\
\left\lVert y_0 - y_1 \right\rVert^2\ d\gamma(y_0,y_1)
\geq \sum_{k,l} \pi_{kl} W_2^{2}(\mu_{0k},\mu_{1l}) 
\geq \sum_{k,l} \pi^{\star}_{kl} W_2^{2}(\mu_{0k},\mu_{1l}),
\end{align}
where the final inequality holds by equation \eqref{th1:lemmma3_lp}.
Equations \eqref{th1:lemma3_oneside}-\eqref{th1:lemma3_otherside} together complete the proof.
\end{proof}

\begin{lemma}\label{th1:lemma5}{
Suppose the conditions in Lemma \ref{th1:lemma1} and/or \ref{th1:lemma2} hold. Suppose we  $p_0 = \sum\limits_{k=1}^{K_0} s_{0k}  \mu_{0k}$ with $\mu_{0k}\sim\mbox{\rm ED}_h(m_{0k}, \Sigma_{0k})$, and $p_1 = \sum\limits_{k=1}^{K_1} s_{1k} \delta_{m_{1k}}$. Let $ p_0^{\prime} = \sum\limits_{k=1}^{K_0} s_{0k}\ \delta_{m_{0k}}$. Then
\begin{align*} 
\mbox{\rm MW}_{2,\alpha}^{2}(p_0, p_1) = W_{2}^{2}( p_0^{\prime}, p_1) + \nu_{h}\sum\limits_{k=1}^{K_0} s_{0k}\ \mbox{\rm tr}(\Sigma_{0k}).
\end{align*}
}
\end{lemma}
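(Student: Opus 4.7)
The plan is to reduce the stated identity to two results already available in the paper: the discrete reformulation in Lemma \ref{th1:lemma4} and the closed form for the pairwise optimal transport between elliptical components in Lemma \ref{th1:lemma3}.

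First I would apply Lemma \ref{th1:lemma4} to $p_0$ and $p_1$, treating each Dirac atom $\delta_{m_{1l}}$ as a degenerate member of the elliptical family of the form $\mbox{ED}_h(m_{1l},0)$. This yields
$$
\mbox{MW}_{2,\alpha}^{2}(p_0,p_1)=\inf_{\Pi\in\pi^{\alpha}(s_0,s_1)}\sum_{k,l}\pi_{kl}\,W_2^2\bigl(\mu_{0k},\delta_{m_{1l}}\bigr).
$$
Next, I would invoke Lemma \ref{th1:lemma3} with $A=\Sigma_{0k}$ and $B=0$. Because $\mathbb{B}(\Sigma_{0k},0)=\mbox{tr}(\Sigma_{0k})$, the pairwise cost collapses to
$$
W_2^2\bigl(\mu_{0k},\delta_{m_{1l}}\bigr)=\|m_{0k}-m_{1l}\|^2+\nu_h\,\mbox{tr}(\Sigma_{0k}),
$$
which can also be verified directly via the fact that the only coupling between an arbitrary law and a Dirac is the product measure, combined with the bias--variance identity and $\mbox{Var}(\mu_{0k})=\nu_h\Sigma_{0k}$.

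Substituting and splitting the sum gives
$$
\sum_{k,l}\pi_{kl}\,W_2^2(\mu_{0k},\delta_{m_{1l}})=\sum_{k,l}\pi_{kl}\|m_{0k}-m_{1l}\|^2+\nu_h\sum_{k,l}\pi_{kl}\,\mbox{tr}(\Sigma_{0k}).
$$
The marginal constraint $\sum_{l}\pi_{kl}=s_{0k}$ makes the second piece equal to $\nu_h\sum_{k}s_{0k}\,\mbox{tr}(\Sigma_{0k})$, a $\Pi$-free constant that can be pulled outside the infimum. The remaining infimum is precisely $W_2^2(p_0',p_1)$ (understood under the same $\pi^{\alpha}$-restriction that defines $\mbox{MW}_{2,\alpha}^2$), which delivers the claim.

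The only real subtlety lies in the first step: Lemmas \ref{th1:lemma1}--\ref{th1:lemma2} are phrased for nondegenerate elliptical components, so strictly speaking Lemma \ref{th1:lemma4} does not immediately apply when one marginal is purely atomic. I would handle this either by an $\varepsilon$-regularization, replacing each $\delta_{m_{1l}}$ with $\mbox{ED}_h(m_{1l},\varepsilon I)$, applying Lemma \ref{th1:lemma4}, and passing to the limit $\varepsilon\downarrow 0$ using continuity of $W_2^2$ in the Bures argument; or by a direct verification noting that any $2d$-dimensional coupling whose second marginal is $p_1$ must be supported on $\bigcup_{l}\mathbb{R}^d\times\{m_{1l}\}$ and hence decomposes as $\sum_{k,l}\pi_{kl}\gamma_{kl}$ with $\gamma_{kl}$ a coupling of $\mu_{0k}$ and $\delta_{m_{1l}}$, together with $\Pi=((\pi_{kl}))\in\pi^{\alpha}(s_0,s_1)$. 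This regularization/decomposition step is the main technical point; everything else is linear algebra on the coupling matrix.
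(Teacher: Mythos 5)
Your proposal follows essentially the same route as the paper: apply Lemma \ref{th1:lemma4} with the Dirac atoms viewed as degenerate elliptical components, substitute $W_2^2(\mu_{0k},\delta_{m_{1l}})=\|m_{0k}-m_{1l}\|^2+\nu_h\,\mathrm{tr}(\Sigma_{0k})$, and use the marginal constraint $\sum_l \pi_{kl}=s_{0k}$ to pull the trace term out of the infimum. Your closing remark about the degeneracy of $\delta_{m_{1l}}$ relative to the identifiability hypotheses of Lemmas \ref{th1:lemma1}--\ref{th1:lemma2} is a legitimate subtlety that the paper's own (very terse) proof does not address, and either of your two fixes would close it.
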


\begin{proof}
By Lemma \ref{th1:lemma4},
\begin{align*} 
&\mbox{\rm MW}_{2,\alpha}^{2}(p_0, p_1) = \inf_{\pi\in\pi^{\alpha}(s_0,s_1)}\sum_{k,l} \pi_{kl}\ W_{2}^{2}(\mu_{0k}, \delta_{m_{1l}}) \notag\\
&= \inf_{\pi\in\pi^{\alpha}(s_0,s_1)}\sum_{k,l} \pi_{kl}\big\{ ||m_{0k}-m_{1l}||^2 + \nu_h\ \mbox{tr}(\Sigma_{0k})\big\}
= \bigg[ \inf_{\pi\in\pi^{\alpha}(s_0,s_1)} \big<\Pi, M\big>\bigg] + \nu_{h}\sum\limits_{k=1}^{K_0} s_{0k}\ \mbox{tr}(\Sigma_{0k}),
\end{align*}
and we have the proof.
\end{proof}

\begin{lemma}\label{th1:lemma6}[Entropy regularization of discrete optimal transport]
For every $\alpha\in\mb{R}^{+}$, $\exists\ \lambda_{\alpha}\ > 0$ such that
\begin{align*} 
\inf_{\pi\in\pi^{\alpha}(s_0,s_1)} \big<\Pi, M\big> = \inf_{\pi\in\pi(s_0,s_1)} \bigg[\big<\Pi, M\big> - \frac{1}{\lambda_{\alpha}}H(\Pi)\bigg]
\end{align*}
where $H(\Pi) = -\sum_{k,l} \pi_{kl}\log \pi_{kl}$.
\end{lemma}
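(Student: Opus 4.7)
The plan is to recognize this as a standard convex-duality / Lagrangian statement connecting a KL-ball constrained linear program to its entropy-regularized unconstrained counterpart on the transport polytope $\pi(s_0,s_1)$. The proof proceeds in four steps.

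\textbf{Step 1 (simplification of the KL constraint).} I would first use the fact that any $\Pi\in\pi(s_0,s_1)$ has fixed marginals to rewrite
\begin{align*}
D_{\rm KL}\bigl(\Pi\,\|\,s_0 s_1^{\T}\bigr)
=\sum_{k,l}\pi_{kl}\log\pi_{kl}-\sum_{k,l}\pi_{kl}\bigl(\log s_{0k}+\log s_{1l}\bigr)
=-H(\Pi)+H(s_0)+H(s_1).
\end{align*}
Writing $C:=H(s_0)+H(s_1)$, the constraint $D_{\rm KL}(\Pi\,\|\,s_0 s_1^{\T})\le\alpha$ becomes the purely entropic constraint $H(\Pi)\ge C-\alpha$, which is concave in $\Pi$. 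The primal problem is thus a convex program with linear objective $\langle\Pi,M\rangle$ over a convex feasible set.

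\textbf{Step 2 (strong duality).} I would verify Slater's condition: the independent coupling $\Pi_0=s_0 s_1^{\T}$ lies in $\pi(s_0,s_1)$ and satisfies $D_{\rm KL}(\Pi_0\,\|\,s_0 s_1^{\T})=0<\alpha$, giving a strictly feasible interior point. Standard convex duality (e.g., Slater's theorem) then guarantees strong duality and the existence of an optimal Lagrange multiplier $\mu^{\star}\ge 0$.

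\textbf{Step 3 (forming the dual and identifying $\lambda_\alpha$).} The Lagrangian is
\begin{align*}
L(\Pi,\mu)=\langle\Pi,M\rangle+\mu\bigl(D_{\rm KL}(\Pi\,\|\,s_0 s_1^{\T})-\alpha\bigr)
=\langle\Pi,M\rangle-\mu H(\Pi)+\mu(C-\alpha),
\end{align*}
so by strong duality
\begin{align*}
\inf_{\pi\in\pi^{\alpha}(s_0,s_1)}\langle\Pi,M\rangle
=\inf_{\pi\in\pi(s_0,s_1)}\Bigl[\langle\Pi,M\rangle-\mu^{\star}H(\Pi)\Bigr]+\mu^{\star}(C-\alpha).
\end{align*}
Setting $\lambda_\alpha:=1/\mu^{\star}$ (the case $\mu^{\star}=0$ only arises when the KL constraint is slack at the unconstrained OT optimum, in which case any $\lambda_\alpha>0$ works and both sides agree trivially) yields the stated equivalence. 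The additive term $\mu^{\star}(C-\alpha)$ depends only on $\alpha$ and the marginals $s_0,s_1$, not on $\Pi$, so it does not affect the argmin; this is the standard sense in which entropic regularization and KL-ball constraints are equivalent, and is the identification used downstream in Theorem \ref{th3}.

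\textbf{Anticipated obstacle.} The main subtlety is not the duality calculation itself but the careful bookkeeping that the equality in the lemma should be read as equality of optimizers (or equivalently, equality up to a constant depending on $\alpha,s_0,s_1$), since the right-hand side is strictly smaller than the left by $\mu^{\star}(C-\alpha)$. A secondary point to address is the monotone correspondence $\alpha\mapsto\mu^{\star}(\alpha)=1/\lambda_\alpha$: as $\alpha\downarrow 0$ the KL ball shrinks toward $\{s_0 s_1^{\T}\}$ forcing $\mu^{\star}\uparrow\infty$ and $\lambda_\alpha\downarrow 0$ (strong regularization), while as $\alpha$ exceeds the KL of the unregularized OT plan the constraint becomes inactive and $\mu^{\star}=0$; this gives a one-to-one parametrization justifying the phrase ``there exists $\lambda_\alpha$'' in the statement.
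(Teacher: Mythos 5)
Your proof is correct and takes essentially the same route as the source the paper relies on: the paper itself offers no argument for this lemma beyond a citation to Cuturi (2013), and the Slater-plus-Lagrangian-duality computation you give (rewriting the KL ball as an entropy constraint via the fixed marginals, exhibiting $s_0 s_1^{\T}$ as a strictly feasible point, and reading off $\lambda_\alpha = 1/\mu^{\star}$) is precisely the argument underlying that reference. Your caveat is also well taken: as literally written the lemma asserts equality of optimal \emph{values}, whereas duality only identifies the \emph{optimizers} (the two sides differ by $\mu^{\star}(C-\alpha)$ together with the entropy term evaluated at the optimum), so the statement should be read in the optimizer sense in which it is used downstream in Theorem \ref{th3}; the only loose end in your write-up is the $\mu^{\star}=0$ case, where the two values again do not literally agree and one should instead say the constrained problem reduces to the unregularized transport problem.
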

\begin{proof}
Proof of this result is recorded in \citet{cuturi2013sinkhorn}.
\end{proof}


\section{Additional simulation results for generalised linear regression in \ref{ssec:gen_reg}}\label{sup:glm}
In Table \ref{table:glm_1} in Section \ref{ssec:gen_reg} in the main document, we expand on the performance of $\mbox{D}$-BETEL for varying  extent of perturbations in the data generating mechanism with sample size $n=100$. Here we present additional simulation results for $n=250, 500$. In particular, we compare  $\mbox{D}$-BETEL against standard posterior based approach, as well as Bayesian ETEL \citep{Chib2018} with the estimating equations set to  $\mbox{E}[\partial \log l(\beta\mid X, Y)/ \partial \beta]=0$ to infer about the parameter $\beta$. 
\begin{table}[!htb]
  \caption{\emph{\textbf{Generalised linear regression (Poisson regression).} Here the \textbf{sample size $n$ is $250$}. We compare standard posterior yielded from the fully parametric model, moment conditional model (MCM) based on the maximum likelihood equations,
  and $\mbox{D}$-BETEL based parameter estimates over 50 replicated simulations with proportion of outlier $p=0.10, 0.12, 0.15$. 
  We report the $L_1$ error of posterior means, length of the HPD sets and associated coverage probabilities (within braces). $\mbox{D}$-BETEL is more resistant towards presence of outliers all values of $p$ considered, however it provides slightly wider $95\%$ credible sets while maintaining the high coverage probability. }}\label{table:glm_2}
  \centering
  \begin{tabular}{llllllll}
    \toprule
    \cmidrule(r){1-2}
     &     &       \multicolumn{2}{c}{$\mbox{D}$-BETEL} & \multicolumn{2}{c}{Standard posterior} & \multicolumn{2}{c}{MCM} \\
    \midrule
    p & $\theta$     & $||\theta - \hat{\theta}||_1$     & HPD  & $||\theta - \hat{\theta}||_1$     & HPD & $||\theta - \hat{\theta}||_1$     & HPD   \\
    \midrule
    0.10 & $\beta_0$ & 0.03 & 0.19 (1.00)  & 0.14   & 0.12  (0.24)  &0.31 & 0.14 (0.12)\\
        & $\beta_1$ & 0.01 & 0.03 (1.00)  & 0.06 &0.02 (0.00) 
        & 0.05& 0.02 (0.22)\\
    \midrule
    0.12 & $\beta_0$ &0.05 &0.20 (0.95) &0.24& 0.12 (0.16)
    &0.36 & 0.20 (0.14)\\
        & $\beta_1$ & 0.01 &0.04 (0.95) &0.06& 0.02 (0.11)
        &0.05 & 0.04 (0.18)\\
    \midrule
    0.15 & $\beta_0$ & 0.05& 0.30 (1.00)  &0.30  & 0.11 (0.20) & 0.35 & 0.13 (0.06)\\
        & $\beta_1$ & 0.01 &0.05 (1.00)  &0.05& 0.01 (0.20) & 0.05 & 0.02 (0.08)\\
    
    \bottomrule
  \end{tabular}
\end{table}

\begin{table}[!h]
  \caption{\emph{\textbf{Generalised linear regression (Poisson regression.)} Here the \textbf{sample size $n$ is $500$}. We compare standard posterior yielded from the fully parametric model, moment conditional model (MCM) based on the maximum likelihood equations,
  and $\mbox{D}$-BETEL based parameter estimates over 50 replicated simulations with proportion of outlier $p=0.10, 0.12, 0.15$. 
  We report the $L_1$ error of posterior means, length of the HPD sets and associated coverage probabilities (within braces). $\mbox{D}$-BETEL is more resistant towards presence of outliers all values of $p$ considered, however it provides slightly wider $95\%$ credible sets while maintaining the high coverage probability}.}\label{table:glm_3}
  \centering
  \begin{tabular}{lllllllll}
    \toprule
    \cmidrule(r){1-2}
     &     &       \multicolumn{2}{c}{$\mbox{D}$-BETEL} & \multicolumn{2}{c}{Standard posterior} & \multicolumn{2}{c}{MCM} \\
    \midrule
    p & $\theta$     & $||\theta - \hat{\theta}||_1$     & HPD  & $||\theta - \hat{\theta}||_1$     & HPD & $||\theta - \hat{\theta}||_1$     & HPD  \\
    \midrule
    0.10 & $\beta_0$ & 0.03 & 0.15 (0.91)  & 0.15   & 0.12  (0.47) & 0.24 & 0.15 (0.22)\\
        & $\beta_1$ & 0.01 & 0.02 (0.93) & 0.03 &0.02 (0.19) 
        & 0.03 & 0.03 (0.32)\\
    \midrule
    0.12 & $\beta_0$ &0.06 &0.19 (0.83) &0.19& 0.12 (0.20) & 0.24 & 0.23 (0.26)   \\
        & $\beta_1$ & 0.01 &0.04 (0.85) &0.03& 0.02 (0.35) & 0.03 & 0.04 (0.38) \\
    \midrule
    0.15 & $\beta_0$ & 0.08& 0.30 (0.92)  &0.21  & 0.12 (0.21) & 0.31 &0.16 (0.18)\\
        & $\beta_1$ & 0.02 &0.06 (0.92)  &0.03& 0.02 (0.25) & 0.04 & 0.03 (0.22)\\
    
    \bottomrule
  \end{tabular}
\end{table}

\end{document}